\newtheorem{theorem}{Theorem}[section]
    \newtheorem{example}{Example}[section]
    \newtheorem{counterexample}{Counterexample}[section]
    \newtheorem{corollary}{Corollary}[section]
    \newtheorem{definition}{Definition}[section]
    \newtheorem{lemma}{Lemma}[section]
    \newtheorem{remark}{Remark}[section]
\begin{document}
\title{Ordering results between the largest claims arising from two general heterogeneous
portfolios}
\author{{\large { Sangita {\bf
                Das}\thanks {Email address: sangitadas118@gmail.com}~ and Suchandan {\bf Kayal}\thanks { Email address:
                kayals@nitrkl.ac.in
               }}} \\
    { \em \small {\it Department of Mathematics, National Institute of
            Technology Rourkela, Rourkela-769008, India.}}}
\date{}
\maketitle
\begin{center}
\noindent{\bf Abstract}
\end{center}
This work is entirely devoted to compare the largest claims from two
heterogeneous portfolios.  It is assumed that the
claim amounts in an insurance portfolio are
nonnegative absolutely continuous random
variables and belong to a general family of distributions. The largest claims have been compared based on various stochastic orderings. The established sufficient conditions are associated with the matrices and vectors of model parameters. Applications of the results are provided for the purpose of illustration.
\\
\\
\noindent{\bf Keywords:} Stochastic orderings, largest claim amounts,
multivariate chain majorization, $T$-transform matrix, general family of distributions.
\\\\
{\bf Mathematics Subject Classification:} 60E15; 62G30; 60K10; 90B25

\section{Introduction}

In survival analysis, models with nonmonotone
failure rate play a vital role to fit the real
life data sets. A large number of distributions
exists in statistical theory, which have monotone
failure rate. For example, the exponentiated
Weibull  and generalized gamma distributions have
monotone failure rate. In this communication, a
general family of distributions (exponentiated
location-scale) is taken.  It contains both
monotone and nonmonotone failure rate models.
Because of this, the general exponentiated
location-scale (ELS) model is important from both
practical and theoretical points of view.  It is
well-known that $X$ belongs to the ELS model if
$X\sim
F^{\alpha}(\frac{x-\lambda}{\theta}),~x>\lambda>0$
and $\alpha,\theta>0$. The functions $F(\cdot)$
and $f(\cdot)$ denote  the baseline cumulative
distribution and probability density functions of
$X$, respectively. Here, we consider $F(\cdot)$
to be the absolutely continuous distribution
function. The strictly positive real numbers
$\alpha,\lambda$ and $\theta$ are respectively
the shape, location and scale parameters. For
$\lambda=0$ and $\alpha=1$, the exponentiated
location-scale model reduces to the scale model.
Further, we respectively get the proportional
reversed hazard rate model and the location
model, when $\lambda=0$, $\theta=1$ and
$\alpha=1,$ $\theta=1$.

For $i=1,\cdots, n$, let $X_{i}$ be the claim
amount and $J_{i}$ be the Bernoulli random
variable. Further, $J_{i}=0$, if the $i$th policy
holder does not claim, and $J_{i}=1$, if the
$i$th policy holder makes random claim $X_{i}$.
We assume that the claim is taken place with
probability $p_{i}$, and is not taken place with
probability $1-p_{i}$. It is known that in an
insurance portfolio consisting of $n$ risks, the
$i$th individual risk is a product of $X_{i}$ and
$J_{i}$. Throughout this paper, we consider that
$\{X_{1},\cdots,X_{n}\}$ and
$\{Y_{1},\cdots,Y_{n}\}$ are two collections of
independent random claims of two portfolios with
$X_{i}\sim
F^{\alpha_{i}}(\frac{x-\lambda_{i}}{\theta_{i}})$
and $Y_{i}\sim
F^{\beta_{i}}(\frac{x-\mu_{i}}{\delta_{i}})$,
where $i=1,\cdots,n$. Further,  assume that
$\{J_{1},\cdots,J_{n}\}$ and
$\{J^{*}_{1},\cdots,J^{*}_{n}\}$ are another two
collections of independent Bernoulli random
variables, independent of  $X_{i}$'s and
$Y_{i}$'s, respectively, $i=1,\cdots,n$. Consider
two vectors $\bm{U}=(U_{1},\cdots,U_{n})$ and
$\bm{V}=(V_{1},\cdots,V_{n}),$ such that for
$i=1,\cdots,n,$  $U_{i}=J_{i}X_{i}$ and
$V_{i}=J_{i}^*Y_{i}$ with $E(J_{i})=p_{i}$ and
$E(J_{i}^*)=q_{i}$. Denote
$U_{n:n}=\max\{U_{1},\cdots,U_{n}\}$ and
$V_{n:n}=\max\{V_{1},\cdots,V_{n}\}$ for the
maximum claims arising from two insurance
portfolios of $n$ risks, where the $i$th
individual risks are $U_{i}$ and $V_{i}$,
respectively, $i=1,\cdots,n$.  Besides this,
$U_{n:n}$ has another interpretation in
reliability theory. It  represents the  lifetime
of  a parallel system for which the components
are equipped with starters. Here, the random
variables $X_{i}$'s can be treated as components'
lifetimes and $J_{i}$'s represent the status of
the corresponding starters.  Therefore, the
present study of stochastic comparison is very
important both from the mathematical research and
real life applications.

\cite{barmalzan2015convex} and
\cite{barmalzan2016likelihood} considered two
collections of independent  claims following
heterogeneous Weibull distributions.
\cite{barmalzan2015convex} addressed the
comparisons between the minimum claims
stochastically in the sense of  the convex
transform and right spread orders. They also
derived upper and lower bounds of the coefficient
of variations.  \cite{barmalzan2016likelihood}
discussed the sufficient conditions  under which
the likelihood ratio  and  dispersive orders hold
between the smallest claim amounts.
\cite{barmalzan2017ordering} took  scale model to
compare the extreme claims with respect to the
usual stochastic and the hazard rate orderings.
\cite{balakrishnan2018ordering} studied ordering
properties of the largest claim amounts from two
heterogeneous sets of portfolios. They proposed
sufficient conditions to show various stochastic
orderings between the largest claim amounts.
\cite{zhang2019ordering} established  conditions
to compare the extreme claims from two
collections of insurance portfolios. To the best
of our knowledge, stochastic comparisons of the
largest claim amounts when random claims have ELS
models have not been addressed in the literature
so far. However, some generalized models were
considered by \cite{das2019ordering},
\cite{das2019orderingmo} and \cite{dasordering}
to study ordering properties of extreme order
statistics in the context of reliability studies.
In this paper, we address this problem and derive
sufficient conditions for the stochastic
comparison of the largest claim amounts in the
sense of various stochastic orderings.

This article is organized as follows. In section
$2$, we provide some basic definitions and
results. Section $3$ is emphasized on some
ordering results based on the matrix chain
majorization order,  when heterogeneity presents
in two parameters. Section $4$ addresses
comparisons between the largest claims
with respect to the usual stochastic and reversed
hazard rate orders. Here, we consider that the
heterogeneity is presented in one parameter.
Section $5$ is devoted to illustrations of
the results.   Generalized linear failure rate
and Marshal-Olkin extended quasi
Lindley distributions are considered. Finally, we present some
concluding remarks in Section $6$.

Throughout, we assume that the random variables
are  nonnegative and absolutely continuous. The
integrations and differentiations are well
defined. Further, `increasing' and `decreasing'
terms are employed in non-strict sense. For any
function $h(.)$, $h'(x)=\frac{dh(x)}{dx}$. We
assume componentwise comparison when comparing
two vectors.

\section{Preliminaries\setcounter{equation}{0}}

This section is concerned with some basic
definitions and important lemmas, which are used
to prove the results in the subsequent sections.
Let $U$ and $V$ be two nonnegative absolutely
continuous random variables. Assume that
$f_{U}(.)$ and $f_{V}(.)$, $F_{U}(.)$ and
$F_{V}(.)$, $\bar{F}_{U}(.)$ and $\bar{F}_{V}(.)$
are the probability density functions, the
cumulative distribution functions and the
survival functions of $U$ and $V$, respectively.
The following definition is for some concepts of
stochastic orders. For comprehensive discussions
on the properties and applications of the
following stochastic orders, one may refer to
\cite{shaked2007stochastic}.
\begin{definition}\label{def2.1}
$U$ is smaller than $V$ in the
\begin{enumerate}
\item[(a)] usual stochastic order, abbreviated by $U\le_{st}V$, if $\bar{F}_{U}(x)
\le \bar{F}_{V}(x)$, for every $x\in \mathbb{R}$;
\item[(b)] reversed hazard rate order, abbreviated by $U\le_{rh}V$, if the ratio
$F_{V}(x)/F_{U}(x)$ is increasing with respect to $x$.
\end{enumerate}
\end{definition}

Next definition describes the concept of
majorization and related orders. Prior to this,
we consider vectors $\bm{x}=(x_{1},\cdots,x_{n})$
and $\bm{y}=(y_{1},\cdots,y_{n})$. Further,
 $x_{1:n}\le\cdots\le x_{n:n}$ and
$y_{1:n}\le\cdots\le y_{n:n}$ respectively denote
the components of $\bm{x}$ and $\bm{y}$.
\begin{definition}\label{def2.2}
$\bm{x}$ is known to be
\begin{enumerate}
\item[(a)] weakly supermajorized by $\bm{y}$, abbreviated by
$\bm{x}\preceq^{w}\bm{y}$,
if $\sum_{k=1}^{j}x_{k:n}\ge
\sum_{k=1}^{j}y_{k:n}$, for $j=1,\cdots,n$;

\item[(b)] weakly submajorized by $\bm{y}$, abbreviated by
$\bm{x}\preceq_{w}\bm{y}$,
if $\sum_{k=i}^{n}x_{k:n}\le
\sum_{k=i}^{n}y_{k:n}$, for $i=1,\cdots,n$;

\item[(c)] majorized by $\bm{y}$, abbreviated by
$\bm{x}\preceq^{m}\bm{y}$,
if $\sum_{k=1}^{n}x_{k}=\sum_{k=1}^{n}y_{k}$ and
$\sum_{k=1}^{j}x_{k:n}\ge \sum_{k=1}^{j}y_{k:n}$,
for $j=1,\cdots,n-1$;

\item[(d)] $p$-larger than $\bm{y}$, denoted by
            $\bm{x}\succeq^{p}\bm{y}$, if $\prod_{i=1}^{k}x_{i:n} \le
            \prod_{i=1}^{k}y_{i:n}$, for $k=1,\cdots,n$;

\item[(e)] reciprocally majorized by $\boldsymbol{y},$ denoted
        by $\boldsymbol{x}\preceq^{rm}\boldsymbol{y}$, if $\sum_{i=1}^{l}x_{i:n}^{-1}\le
        \sum_{i=1}^{l}y_{i:n}^{-1}$, for all $l=1,\ldots,n$.
\end{enumerate}
\end{definition}
It is easy to see that $\bm{x}\preceq^{m}\bm{y}$
implies both $\bm{x}\preceq^{w}\bm{y}$ and
$\bm{x}\preceq_{w}\bm{y}$. Further,
$\bm{x}\preceq^{w}\bm{y}\Rightarrow
\bm{x}\preceq^{p}\bm{y}$. We may refer to
\cite{Marshall2011} for brief and extensive
details on the majorization and their
applications.
\begin{definition}\label{def2.3}
Let
$\varphi:\mathbb{B}~(\subseteq{\mathbb{R}}^{n})\rightarrow
\mathbb{R}$ be a function. It is said to be Schur-convex
(Schur-concave) on $\mathbb{B}$ if
$\bm{x}\preceq^{m} \bm{y}\Rightarrow
\varphi(\bm{x})\leq (\geq) \varphi(\bm{y})$, for any
$\bm{x},\bm{y}\in \mathbb{B}.$
\end{definition}
The following notations are used throughout the
paper.  We denote $\bm{1}_{n}=(1,\cdots,1)$.
$$\mathcal{D}_{+}=\{(t_1,\cdots,t_n):t_{1}\geq\cdots\geq t_{n}>0\};~
\mathcal{E}_{+}=\{(t_1,\cdots,t_n):0<t_{1}\leq\cdots\leq t_{n}\}.$$

Now, we move our attention to  the notion of the
matrix majorization. We say that a square matrix
$\pi$ is said to be a permutation matrix, if each
row and column have a single entry, and except
that, all entries are zero. One can easily find
out that $n!$ number of such matrices arise after
interchanging rows (or columns) of the  $n\times
n$ order identity matrix $I_{n}$. Let $T_{w}$
denote a $T$-transform matrix, with the form
\begin{eqnarray}
T_{w}=wI_{n}+(1-w)\pi,~0\le w\le 1,
\end{eqnarray}
where $\pi$ is a permutation matrix obtained by
interchanging two rows or columns. Consider the
$T$-transform matrices $T_{w_{1}}$ and
$T_{w_{2}}$ such that
$T_{w_{1}}=w_{1}I_{n}+(1-w_{1})\pi_{1}$ and
$T_{w_{2}}=w_{2}I_{n}+(1-w_{2})\pi_{2},$ where
$\pi_{1}$ and $\pi_{2}$ are the permutation
matrices obtained by interchanging two rows or
columns and $~0\le w_{1},w_{2}\le 1$. We say that
$T_{w_{1}}$ and $T_{w_{2}}$ have the same
structure, if $\pi_{1}=\pi_{2}$, and have
different structures, if $\pi_{1}\neq\pi_{2}.$
The definition given below describes the concept
of multivariate majorization.

\begin{definition}\label{def2.4}
Let us take two matrices $C=[c_{ij}]$
and $D=[d_{ij}]$ of order $m\times n$, where $i=1,\cdots,m$ and
$j=1,\cdots,n$. Let $T_{w_{1}},\cdots,T_{w_{k}}$ be a finite set of $n\times n$ $T$-transform
matrices. Then,
$C$ is said to chain
majorize $D,$ abbreviated by $C>>D,$ if
$D=C T_{w_{1}}\cdots T_{w_{k}}.$
\end{definition}

Henceforth, $(\bm{r_{1}},\cdots,\bm{r_{m}};n)$
represents a  matrix of order $m\times n,$ where
each real-valued vector $\bm{r_{i}}$ containing
$n$ elements denotes the $i$th row, for
$i=1,\cdots,m.$ For $i, j=1,\cdots,n $ and
$x_{i}, y_{j} > 0 $, let us consider
 \begin{align*}\label{matrix2.1}
    M_{n} &= \left\{(\bm{x},\bm{y};n) :\
    (x_{i} - x_{j})(y_{i} - y_{j}) \geq 0\right\};\\
    Q_{n} &= \left\{(\bm{x},\bm{y};n) :\
    (x_{i} - x_{j})(y_{i} - y_{j}) \leq 0\right\}.
    \end{align*}

The next consecutive lemmas are
helpful to prove few of the proposed ordering
results. Interested readers are referred to the work by
\cite{balakrishnan2015stochastic} for detailed
idea on the proofs.
\begin{lemma}\label{lem2.2}
A differentiable function $\varpi :
\mathbb{R}^{+^{4}} \rightarrow \mathbb{R}^{+}$
satisfies
\begin{eqnarray}\label{eq2.2}
\varpi(C) \geq (\leq) \varpi(D) \text{ for all } C, D \in M_{2} \text{ or } Q_2~ \text{
and } C \gg D
\end{eqnarray}
if and only if for all $C\in M_{2}$ or $Q_2$ and $C$ satisfies
\begin{enumerate}
\item[(a)] $\varpi(C) = \varpi(C\pi)$, for all permutation matrices $\pi$ and
\item[(b)] $\sum_{i=1}^{2} \left(c_{ik} - c_{ij}\right)
\left( \varpi_{ik}(C) - \varpi_{ij}(C) \right)
\geq (\leq) 0$,  for all $j, k = 1, 2$,
\text{where} $\varpi_{ij}(C) = \frac{\partial
\varpi(C)}{\partial c_{ij}}$.
\end{enumerate}
\end{lemma}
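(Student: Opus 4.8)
The plan is to reduce the chain-majorization statement to a single $T$-transform and then iterate. Since $C \gg D$ means $D = CT_{w_{1}}\cdots T_{w_{k}}$, I would first establish that the families $M_{2}$ and $Q_{2}$ are closed under right multiplication by a $T$-transform. For a $2\times 2$ matrix and the only nontrivial permutation $\pi$ (interchanging the two columns), writing $D = CT_{w}$ with $T_{w} = wI_{2} + (1-w)\pi$ yields the crucial identity $d_{i1}-d_{i2} = (2w-1)(c_{i1}-c_{i2})$ for $i=1,2$. Hence $(d_{11}-d_{12})(d_{21}-d_{22}) = (2w-1)^{2}(c_{11}-c_{12})(c_{21}-c_{22})$, so membership in $M_{2}$ (respectively $Q_{2}$) is preserved. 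This lets me form the intermediate matrices $C_{0}=C,\ C_{1}=CT_{w_{1}},\ \ldots,\ C_{k}=D$, each lying in the same class, so that proving $\varpi(C_{\ell}) \ge \varpi(C_{\ell+1})$ for a single $T$-transform gives $\varpi(C)\ge\varpi(D)$ by chaining.

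For the single-transform step I would set $h(w) = \varpi(CT_{w})$ on $[0,1]$ and differentiate by the chain rule. Using $\partial d_{i1}/\partial w = c_{i1}-c_{i2}$ and $\partial d_{i2}/\partial w = -(c_{i1}-c_{i2})$, one obtains $h'(w) = \sum_{i=1}^{2}(c_{i1}-c_{i2})\bigl(\varpi_{i1}(D)-\varpi_{i2}(D)\bigr)$. Multiplying by $(2w-1)$ and invoking the identity above converts this into $(2w-1)h'(w) = \sum_{i=1}^{2}(d_{i1}-d_{i2})\bigl(\varpi_{i1}(D)-\varpi_{i2}(D)\bigr)$, which is exactly the left-hand side of condition (b) evaluated at $D\in M_{2}$ or $Q_{2}$.

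Sufficiency then follows quickly: condition (b) gives $(2w-1)h'(w)\ge 0$, so $h$ decreases on $[0,1/2]$ and increases on $[1/2,1]$, while condition (a) forces $h(0)=\varpi(C\pi)=\varpi(C)=h(1)$; together these yield $h(w)\le h(1)=\varpi(C)$ for every $w$, i.e. $\varpi(C)\ge\varpi(CT_{w})$. The degenerate value $w=1/2$ causes no trouble, since $h'$ is computed directly rather than by dividing through by $2w-1$. For necessity, I would recover (a) by noting that $\pi=T_{0}$ and $\pi^{-1}$ are themselves $T$-transforms, so both $C\gg C\pi$ and $C\pi\gg C$ hold, forcing $\varpi(C)=\varpi(C\pi)$; then, since the monotonicity hypothesis makes $h$ attain its maximum over $[0,1]$ at the endpoint $w=1$ (where $D=C$), the one-sided derivative satisfies $h'(1)\ge 0$, and $h'(1)=\sum_{i=1}^{2}(c_{i1}-c_{i2})\bigl(\varpi_{i1}(C)-\varpi_{i2}(C)\bigr)$ is precisely condition (b) for this index pair; the remaining pairs are either trivial ($j=k$) or the negation of this one. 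The $(\le)$ case is identical with all inequalities reversed.

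The main obstacle I anticipate is not any single computation but the bookkeeping that ties them together: one must verify the closure of $M_{2}$ and $Q_{2}$ under $T$-transforms so that condition (b) is legitimately applicable at each intermediate matrix $C_{\ell}$, and one must combine the sign of $(2w-1)$ on the two subintervals with the endpoint equality $h(0)=h(1)$ to extract monotonicity in the correct direction. Once structure preservation is secured, the argument reduces to a clean one-variable calculus estimate iterated along the chain.
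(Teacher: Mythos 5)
Your argument is correct and is essentially the standard proof of this lemma; the paper itself gives no proof but defers to \cite{balakrishnan2015stochastic}, where the same ingredients appear --- closure of $M_2$ and $Q_2$ under right multiplication by $T_w$ via the identity $d_{i1}-d_{i2}=(2w-1)(c_{i1}-c_{i2})$, the one-variable function $h(w)=\varpi(CT_w)$ with $(2w-1)h'(w)$ equal to the expression in condition (b) evaluated at $D=CT_w$, the endpoint equality $h(0)=h(1)$ from (a), and iteration along the chain. One small correction: in your necessity step the $(k,j)=(2,1)$ instance of (b) is not the \emph{negation} of the $(1,2)$ instance but is identical to it, since both factors $(c_{ik}-c_{ij})$ and $(\varpi_{ik}-\varpi_{ij})$ change sign simultaneously and their product is unchanged --- had it truly been the negation, requiring both instances to be nonnegative would force the sum to vanish, so the distinction matters even though your conclusion is unaffected.
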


\begin{lemma}\label{lem2.3}
Consider a differentiable function $\upsilon : \mathbb{R}^{+^{2}}
\rightarrow \mathbb{R}^{+}$
and a function $\zeta_{n} : \mathbb{R}^{+^{2n}}
\rightarrow \mathbb{R}^{+}$ such that
\begin{eqnarray}\label{eq2.3}
\zeta_{n}(C) = \prod_{k = 1}^{n} \upsilon \left(
c_{1k} , c_{2k}\right).
\end{eqnarray}
Let $\zeta_{2}$ satisfy
$(\ref{eq2.2})$. Then, for all $C\in M_{n} \text{ or } Q_n$ and $D= C T_{\omega}$, we have $\zeta_{n}(C)
\geq(\leq) \zeta_{n}(D)$.
\end{lemma}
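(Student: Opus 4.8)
The plan is to exploit the fact that a single $T$-transform acts on only two coordinates, so that the product structure of $\zeta_n$ collapses the $n$-dimensional problem to the bivariate hypothesis on $\zeta_2$. Write $T_\omega=\omega I_n+(1-\omega)\pi$, where the transposition $\pi$ interchanges the indices $r$ and $s$. Computing $D=CT_\omega$ column by column, I would first confirm that every column of $D$ other than the $r$th and $s$th coincides with the corresponding column of $C$, while the two affected columns become the convex mixtures
\begin{align*}
(d_{1r},d_{2r}) &= \bigl(\omega c_{1r}+(1-\omega)c_{1s},\ \omega c_{2r}+(1-\omega)c_{2s}\bigr),\\
(d_{1s},d_{2s}) &= \bigl(\omega c_{1s}+(1-\omega)c_{1r},\ \omega c_{2s}+(1-\omega)c_{2r}\bigr).
\end{align*}

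Because $\upsilon$ takes values in $\mathbb{R}^{+}$, the factor $\prod_{k\neq r,s}\upsilon(c_{1k},c_{2k})$ is a strictly positive common term in both $\zeta_n(C)$ and $\zeta_n(D)$. Dividing it out, the desired inequality reduces to comparing $\upsilon(c_{1r},c_{2r})\,\upsilon(c_{1s},c_{2s})$ with $\upsilon(d_{1r},d_{2r})\,\upsilon(d_{1s},d_{2s})$. Next I would recognise these two products as $\zeta_2$ evaluated at a pair of $2\times 2$ matrices: let $\tilde{C}$ be the submatrix of $C$ built from columns $r$ and $s$, and set $\tilde{T}_\omega=\omega I_2+(1-\omega)\tilde{\pi}$ with $\tilde{\pi}$ the $2\times 2$ swap. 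A direct check shows that $\tilde{C}\tilde{T}_\omega$ reproduces exactly the $r$th and $s$th columns of $D$, so writing $\tilde{D}=\tilde{C}\tilde{T}_\omega$ the two products above equal $\zeta_2(\tilde{C})$ and $\zeta_2(\tilde{D})$, and $\tilde{C}\gg\tilde{D}$ by Definition \ref{def2.4}.

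The one point requiring care is verifying that the $2\times 2$ matrices inherit the sign structure of $C$. By definition $C\in M_n$ (respectively $Q_n$) means $(c_{1i}-c_{1j})(c_{2i}-c_{2j})\geq 0$ (respectively $\leq 0$) for \emph{every} pair of indices; applying this to the single pair $(r,s)$ is precisely the statement $\tilde{C}\in M_2$ (respectively $Q_2$). A short computation, in which each coordinate difference of $\tilde{D}$ equals $(2\omega-1)$ times the corresponding difference of $\tilde{C}$, shows that $\tilde{D}$ lies in the same class, so the hypothesis that $\zeta_2$ satisfies \eqref{eq2.2} applies and gives $\zeta_2(\tilde{C})\geq(\leq)\zeta_2(\tilde{D})$. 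Multiplying back by the positive common factor yields $\zeta_n(C)\geq(\leq)\zeta_n(D)$. I expect the only genuine bookkeeping to lie in the first step—pinning down which two columns $T_\omega$ mixes and confirming the rest are untouched—after which the reduction to the bivariate case is immediate.
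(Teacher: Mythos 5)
Your argument is correct: a single $T$-transform mixes only the two columns indexed by the transposition, the untouched factors of the product cancel, the sign condition defining $M_n$ (or $Q_n$) restricts to the pair $(r,s)$ and is inherited by $\tilde{D}$ via the factor $(2\omega-1)^2$, and the hypothesis on $\zeta_2$ then closes the argument. The paper itself omits the proof and merely cites \cite{balakrishnan2015stochastic}; your reduction to the bivariate case is exactly the standard argument given there, so there is nothing to add.
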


Before proceeding to the next section, we
introduce two lemmas, which deal with the
analytical behavior of two mathematical
functions. The proofs are omitted since these are
simple. The first lemma is useful for the
derivation of the results of Theorems
\ref{th4.4}, \ref{th4.5}, \ref{th4.6}, and the
second one is used in the proof of Theorem
\ref{th3.1a}
\begin{lemma}\label{lem2.6}
    Consider a function $k_1:(0,\infty)\times (0,1)\times (0,1)\rightarrow (0,\infty)$ such that
    $k_1(\alpha,t,p)=\frac{(1-t^{\alpha})}{1-p(1-t^{\alpha})}$. Then, for all
    \begin{itemize}
        \item[(i)] $t\in(0,1)$, $k_1(\alpha,t,p)$ is increasing in
        to $\alpha$ and $p$;
        \item[(ii)] $\alpha\in(0,\infty)$ and $p\in(0,1)$, $k_1(\alpha,t,p)$ is decreasing
        in $t$.
    \end{itemize}
\end{lemma}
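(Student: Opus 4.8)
The plan is to collapse all three monotonicity statements into one elementary computation by introducing the auxiliary variable $v=1-t^{\alpha}$. Since $t\in(0,1)$ and $\alpha>0$ force $t^{\alpha}\in(0,1)$, we have $v\in(0,1)$, and in this variable the function reads $k_1=v/(1-pv)$. Differentiating in $v$ gives
\begin{eqnarray*}
\frac{\partial k_1}{\partial v}=\frac{(1-pv)+pv}{(1-pv)^{2}}=\frac{1}{(1-pv)^{2}}>0,
\end{eqnarray*}
so $k_1$ is strictly increasing in $v$ for every admissible $p$. Each claimed monotonicity then follows by the chain rule once the sign of the corresponding derivative of $v$ with respect to the parameter in question is recorded.

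First I would dispose of the dependence on $p$ in part (i), which does not even require the substitution: treating $1-t^{\alpha}$ as a positive constant $v\in(0,1)$,
\begin{eqnarray*}
\frac{\partial k_1}{\partial p}=\frac{v^{2}}{(1-pv)^{2}}\geq 0,
\end{eqnarray*}
so $k_1$ increases in $p$. For the dependence on $\alpha$, I would note that $\frac{\partial}{\partial\alpha}t^{\alpha}=t^{\alpha}\ln t<0$ because $\ln t<0$ on $(0,1)$; hence $t^{\alpha}$ is decreasing in $\alpha$, $v=1-t^{\alpha}$ is increasing in $\alpha$, and composing with $\partial k_1/\partial v>0$ gives that $k_1$ is increasing in $\alpha$.

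Part (ii) is handled by the same device applied to $t$: since $\frac{\partial}{\partial t}t^{\alpha}=\alpha t^{\alpha-1}>0$, the quantity $t^{\alpha}$ is increasing and $v=1-t^{\alpha}$ is decreasing in $t$, so $k_1$ is decreasing in $t$ by the monotonicity in $v$. There is essentially no hard step here; the only point that needs care is getting the signs of the inner derivatives right, and the mild subtlety worth flagging is that the single sign computation $\partial k_1/\partial v=(1-pv)^{-2}>0$ is exactly what lets one treatment serve all three parts. The cleanest writeup therefore keeps $v=1-t^{\alpha}$ as the pivot rather than differentiating the original three-variable expression directly, which would produce needlessly cumbersome quotients.
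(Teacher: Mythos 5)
Your proof is correct; the paper itself omits the proof of this lemma (``The proofs are omitted since these are simple''), so there is no authorial argument to diverge from, and your substitution $v=1-t^{\alpha}$ with the single sign computation $\partial k_1/\partial v=(1-pv)^{-2}>0$ is exactly the kind of elementary verification intended. The only point worth making explicit is that the denominator satisfies $1-pv>0$ because $p,v\in(0,1)$ imply $pv<1$, which you use implicitly in every derivative; with that noted, all three monotonicity claims follow as you describe.
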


\begin{lemma}\label{lem2.7}
    Let $k_2:(0,\infty)\times (0,1)\rightarrow (0,\infty)$ be defined as
    $k_2(\alpha,p)=\frac{p t^{\alpha}\ln(t)}{1-p(1-t^{\alpha})}$. Then, for all
    \begin{itemize}
        \item[(i)] $p\in(0,1)$, $k_2(\alpha,p)$ is increasing in $\alpha;$
        \item[(ii)] $\alpha\in(0,\infty)$, $k_2(\alpha,p)$ is decreasing in $p$.
    \end{itemize}
\end{lemma}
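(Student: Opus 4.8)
The plan is to treat $t\in(0,1)$ as a fixed parameter, so that $\ln t<0$, and to establish both monotonicity claims by direct differentiation followed by a careful sign analysis. Throughout I would keep in mind that, because $t\in(0,1)$, we have $t^{\alpha}\in(0,1)$, so the denominator $1-p(1-t^{\alpha})$ lies in $(0,1)$ and is strictly positive, while the numerator $p\,t^{\alpha}\ln t$ is strictly negative. Thus $k_2$ is in fact a negative quantity, and the two assertions really describe the signs of its partial derivatives. This sign bookkeeping induced by $\ln t<0$ is the only point requiring care; the rest is elementary.

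For part (i), I would differentiate $k_2$ with respect to $\alpha$ by the quotient rule, using $\frac{\partial}{\partial\alpha}t^{\alpha}=t^{\alpha}\ln t$. Writing $a=\ln t$, the numerator is $N=p\,a\,t^{\alpha}$ and the denominator is $D=1-p+p\,t^{\alpha}$, so that $N'=p\,a^{2}t^{\alpha}$ and $D'=p\,a\,t^{\alpha}$. The key computation is that the numerator $N'D-ND'$ of $\partial k_2/\partial\alpha$ collapses, after the $p^{2}a^{2}t^{2\alpha}$ terms cancel, to $p\,a^{2}t^{\alpha}(1-p)$. Since $p\in(0,1)$ this is strictly positive, and dividing by $D^{2}>0$ gives $\partial k_2/\partial\alpha>0$, which proves that $k_2$ is increasing in $\alpha$.

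For part (ii), I would observe that, as a function of $p$ alone, $k_2$ has the M\"obius form $k_2=\frac{b\,p}{1-c\,p}$, where $b=t^{\alpha}\ln t<0$ and $c=1-t^{\alpha}\in(0,1)$ are both constant in $p$. Differentiating, the cross terms cancel and one obtains $\partial k_2/\partial p=\frac{b}{(1-c\,p)^{2}}$. Because $b<0$ and the denominator is a positive square, $\partial k_2/\partial p<0$, establishing that $k_2$ is decreasing in $p$.

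In short, each part reduces to a one-line derivative computation whose sign is read off directly; I expect no genuine obstacle, only the need to track the negative sign of $\ln t$ consistently so that the direction of monotonicity is interpreted correctly. As a sanity check one may also verify the boundary behaviour, namely $k_2\to p\ln t$ as $\alpha\to0$ and $k_2\to0$ as $\alpha\to\infty$ for part (i), and $k_2\to0$ as $p\to0$ and $k_2\to\ln t$ as $p\to1$ for part (ii), both of which are consistent with the claimed monotonicities.
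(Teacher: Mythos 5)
Your proof is correct: the paper omits the proof of this lemma entirely (``The proofs are omitted since these are simple''), and your direct differentiation, with the cancellation $N'D-ND'=p\,(\ln t)^2\,t^{\alpha}(1-p)>0$ for part (i) and the M\"obius-form derivative $\partial k_2/\partial p = t^{\alpha}\ln t\,/\,(1-p(1-t^{\alpha}))^2<0$ for part (ii), is exactly the elementary verification the authors intend. Your observation that $k_2<0$ for $t\in(0,1)$ (so the stated codomain $(0,\infty)$ is a slip in the paper) is also correct and worth noting.
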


\section{Matrix chain majorization \setcounter{equation}{0}}
In this section, we establish some ordering
results between the largest claims when a
matrix of parameters is related to another matrix of
parameters in some mathematical senses. To begin
with, let us write the respective cumulative distribution
functions of $U_{n:n}$ and $V_{n:n}$ as
\begin{eqnarray}\label{eq3.1}
F_{n:n}(t)=\prod_{i=1}^{n}\left[1-p_{i}\left[1-{F}^{\alpha_{i}}\left(\frac{t-\lambda_{i}}{\theta_{i}}\right)\right]\right],
~t>\max\{\lambda_1,\cdots,\lambda_n\}
\end{eqnarray}
and
\begin{eqnarray}\label{eq3.2}
G_{n:n}(t)=\prod_{i=1}^{n}\left[1-q_{i}\left[1-{F}^{\beta_{i}}\left(\frac{t-\mu_{i}}
{\delta_{i}}\right)\right]\right],
~t>\max\{\mu_1,\cdots,\mu_n\}.
\end{eqnarray}
First, we consider the following assumptions.
These will be called in the main results when
necessary.

\begin{itemize}
    \item[$(A1)$] Let $F(.)$ be the baseline distribution function. Again, let $\{X_{1},\cdots,X_{n}\}$ and $\{Y_{1},\cdots,Y_{n}\}$ be two sets of nonnegative
    independent random variables. For $i=1,\cdots,n,$ assume that $X_{i}\sim
    F^{\alpha_{i}}(\frac{x-\lambda_{i}}{\theta_{i}})$ and $Y_{i}\sim
    F^{\beta_{i}}(\frac{x-\mu_{i}}{\delta_{i}}).$
    \item[$(A2)$] Let $\{J_{1},\cdots,J_{n}\}$ and $\{J_{1}^*,\cdots,J_{n}^*\}$ be
two collections of independent Bernoulli random
variables,
    independent of $X_{i}$'s and $Y_{i}$'s, respectively. Further, $E(J_{i})=p_{i}$ and $E(J_{i}^*)=q_{i}$, for
    $i=1,\cdots,n$.
\end{itemize}

Let $r(.)$ be the hazard rate function of the baseline distribution $F(.),$ where $r(x)=f(x)/(1-F(x)).$ We provide the following conditions, which are also required for the smooth presentation of the results.
\begin{itemize}
    \item[$(C1)$]$r(x)$ is decreasing.
    \item[$(C2)$]$xr(x)$ is decreasing.
    \item[$(C3)$]$x^2r(x)$ is decreasing.
    \item[$(C4)$]$\frac{r'(x)}{r(x)}$ is increasing.
    \item[$(C5)$]$xr(x)$ is convex.
    \item[$(C6)$] $x^3r^{2}(x)$ is decreasing.
    \item[$(C7)$]$x^2[xr(x)]'$ is increasing.
    \item[$(C8)$]$r(x)$ is convex.
\end{itemize}
The function $\psi:(0,1)\rightarrow(0,\infty)$ is taken to be differentiable throughout the paper.
\begin{itemize}
    \item[$(C9)$]$\psi(w)$ is convex and increasing.
    \item[$(C10)$]$\psi(w)$ is convex and decreasing.

\end{itemize}
The following result establishes conditions, under which multivariate chain majorization between two matrices of parameters implies the usual stochastic order between the largest claim amounts. In the sequel, we take a common shape parameter vector for both sets. It is equal to a scalar $\alpha$, which lies in the interval $(0,1]$. Note that the following result contains two parts, of which the second part generalizes Theorem $1$ of \cite{barmalzan2017ordering}.

\begin{theorem}\label{th4.1}
For $n=2$, let $(A1), (A2)$ and $(C1)$ hold. Also, assume $\bm{\alpha}=\bm{\beta}=\alpha\bm{1}_2~(\alpha\le1) .$
\begin{itemize}
    \item[$(i)$] Suppose  the function  $\psi$  satisfies $(C9)$. If $\bm{\theta}=\bm{\delta}=\theta\bm{1}_2$ and $(\bm{\psi}(\bm{p}),\bm{\lambda};2)\in M_{2},$  then
  $(\bm{\psi}(\bm{p}),\bm{\lambda};2)\gg (\bm{\psi}(\bm{q}),\bm{\mu};2)
  \Rightarrow U_{2:2} \geq_{st} V_{2:2}$;
    \item[$(ii)$] Suppose the function  $\psi$ satisfies $(C10)$. If  $\bm{\lambda}=\bm{\mu}=\mu\bm{1}_2$ and $(\bm{\psi}(\bm{p}),1/\bm{\theta};2)\in M_{2},$
    then $(\bm{\psi}(\bm{p}),1/\bm{\theta};2)\gg (\bm{\psi}(\bm{q}),1/\bm{\delta};2)
  \Rightarrow U_{2:2} \geq_{st} V_{2:2}$.
\end{itemize}
\end{theorem}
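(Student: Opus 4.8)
The plan is to fix $t$ and view the distribution function of the largest claim as a function of the $2\times 2$ parameter matrix, after which Lemma~\ref{lem2.2} supplies the conclusion. For part $(i)$ put $c_{1i}=\psi(p_i)$ and $c_{2i}=\lambda_i$, so that $p_i=\psi^{-1}(c_{1i})$, and define $\varpi:\mathbb{R}^{+^{4}}\to\mathbb{R}^{+}$ by
\begin{equation*}
\varpi(C)=\prod_{i=1}^{2}\left[1-\psi^{-1}(c_{1i})\left(1-F^{\alpha}\left(\frac{t-c_{2i}}{\theta}\right)\right)\right].
\end{equation*}
By \eqref{eq3.1} and \eqref{eq3.2} one has $\varpi\bigl((\bm{\psi}(\bm{p}),\bm{\lambda};2)\bigr)=F_{2:2}(t)$ and $\varpi\bigl((\bm{\psi}(\bm{q}),\bm{\mu};2)\bigr)=G_{2:2}(t)$. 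Since $U_{2:2}\ge_{st}V_{2:2}$ is equivalent to $F_{2:2}(t)\le G_{2:2}(t)$ for every $t$, it suffices to check that $\varpi$ obeys the $(\le)$ branch of \eqref{eq2.2} on $M_2$; by Lemma~\ref{lem2.2} this reduces to verifying conditions (a) and (b) for every $C\in M_2$.

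Condition (a) holds at once, as $\varpi$ is a product over the two columns of a factor depending on that column alone, so interchanging the columns merely reorders the factors. For (b) the sole nontrivial instance is $j=1,k=2$. Writing $u_i=1-\psi^{-1}(c_{1i})\bigl(1-F^{\alpha}(z_i)\bigr)$ with $z_i=(t-c_{2i})/\theta$, I would record
\begin{equation*}
\varpi_{1i}=-u_{3-i}\,[\psi^{-1}]'(c_{1i})\bigl(1-F^{\alpha}(z_i)\bigr),\qquad \varpi_{2i}=-\frac{\alpha}{\theta}\,u_{3-i}\,\psi^{-1}(c_{1i})\,F^{\alpha-1}(z_i)f(z_i),
\end{equation*}
and thereby reduce (b) to proving
\begin{equation*}
(c_{12}-c_{11})(\varpi_{12}-\varpi_{11})+(c_{22}-c_{21})(\varpi_{22}-\varpi_{21})\le 0
\end{equation*}
for all $C\in M_2$.

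Controlling the sign of this two-term sum is the crux and the step I expect to be the main obstacle. The $M_2$ hypothesis $(c_{11}-c_{12})(c_{21}-c_{22})\ge 0$ says that $\bm{\psi}(\bm{p})$ and $\bm{\lambda}$ are equally ordered, so I may assume $c_{11}\ge c_{12}$ together with $c_{21}\ge c_{22}$. The convexity and monotonicity in $(C9)$ enter through $\psi^{-1}$, which is then increasing and concave, so that $[\psi^{-1}]'$ is positive and decreasing; this governs the probability-row increment. The hazard-rate assumption $(C1)$ enters through the identity $F^{\alpha-1}(z)f(z)=F^{\alpha-1}(z)\bar F(z)\,r(z)$, which, with $\alpha\le 1$, pins down the monotonicity of the location-row factor. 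The genuine difficulty is that neither row-contribution is signed on its own: one must expand the sum into its four terms and arrange them so that the common positive weights $u_1,u_2$ can be compared, after which the equal ordering combined with the monotonicities from $(C1)$ and $(C9)$ forces the total to be nonpositive.

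Part $(ii)$ runs along the same lines with $c_{2i}=1/\theta_i$, so that $z_i=(t-\mu)c_{2i}$ and the scale-row derivative becomes $\varpi_{2i}=\alpha(t-\mu)\,u_{3-i}\,\psi^{-1}(c_{1i})\,F^{\alpha-1}(z_i)f(z_i)$, which is positive rather than negative. Because reciprocal scale acts on $F^{\alpha}$ with the opposite monotonicity to location, both row-derivatives change sign, and to keep the sum in (b) nonpositive under the same equal-ordering constraint one must replace $(C9)$ by the decreasing assumption $(C10)$ on $\psi$; the hazard-rate input $(C1)$ enters as before. In either part, once (b) is verified Lemma~\ref{lem2.2} delivers $\varpi(C)\le\varpi(D)$ whenever $C\gg D$, that is, $U_{2:2}\ge_{st}V_{2:2}$.
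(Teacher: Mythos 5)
Your proposal follows essentially the same route as the paper: reduce to Lemma \ref{lem2.2} via permutation invariance and condition (b), compute the same two partial derivatives of $F_{2:2}(t)$ with respect to $\psi(p_i)$ and $\lambda_i$ (resp. $1/\theta_i$), and derive the sign of the resulting two-term sum from $(C1)$, $(C9)$/$(C10)$, $\alpha\le 1$ and the equal ordering built into $M_2$. The sign verification you flag as the main obstacle is precisely the step the paper itself dispatches in one line by appealing to Lemma 3 of \cite{balakrishnan2015stochastic}, so your write-up is at the same level of completeness as the published proof.
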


\begin{proof} $(i)$ We have
\begin{eqnarray}\label{eq4.1}
F_{2:2}(t)=\prod_{i=1}^{2}\left[1-\psi^{-1}(w_{i})
\left[1-{F}^{\alpha}\left(\frac{t-\lambda_{i}}{\theta}\right)\right]\right],
\end{eqnarray}
where  $\psi(p_{i})=w_{i,}$ for $i=1,2$. Note that
$F_{2:2}(t)$ is permutation invariant
in $(w_{i},\lambda_{i})$. Hence, the first condition
of Lemma \ref{lem2.2} is fulfilled. Further, the partial derivatives of \eqref{eq4.1}
with respect to $w_{i}$ and $\lambda_{i}$ are respectively obtained as

\begin{eqnarray}\label{eq4.2}
\frac{\partial F_{2:2}(t)}{\partial
w_{i}}=-\frac{\partial
\psi^{-1}(w_{i})}{\partial w_{i}}\frac{[1-{F^{\alpha}\left(\frac{t-\lambda_{i}}{\theta}\right)}]}{\left[1-\psi^{-1}(w_i)\left[1-{F^{\alpha}\left(\frac{t-\lambda_{i}}{\theta}\right)}\right]\right]}
F_{2:2}(t),\end{eqnarray}
and
 \begin{eqnarray}\label{eq4.2.}
\frac{\partial F_{2:2}(t)}{\partial
\lambda_{i}}=-\frac{\alpha {F^{\alpha-1}\left(\frac{t-\lambda_{i}}{\theta}\right)}f\left(\frac{t-\lambda_{i}}{\theta}\right)}{\theta\left[1-{F^{\alpha}\left(\frac{t-\lambda_{i}}{\theta}\right)}\right]}\frac{\psi^{-1}(w_i)[1-{F^{\alpha}\left(\frac{t-\lambda_{i}}{\theta}\right)}]}{1-\psi^{-1}(w_i)\left[1-{F^{\alpha}\left(\frac{t-\lambda_{i}}{\theta}\right)}\right]}F_{2:2}(t).
\end{eqnarray}
We define
\begin{eqnarray}\label{eq4.3}
\phi_1(\bm{w},\bm{\lambda})=(w_i-w_j)\left[\frac{\partial
F_{2:2}(t)}{\partial
w_{i}}-\frac{\partial
F_{2:2}(t)}{\partial
w_{j}}\right]+(\lambda_{i}-\lambda_{j})\left[\frac{\partial
F_{2:2}(t)}{\partial
\lambda_{i}}-\frac{\partial
F_{2:2}(t)}{\partial
\lambda_{j}}\right],
\end{eqnarray}
where the partial derivatives are given by \eqref{eq4.2} and \eqref{eq4.2.}. Together with Lemma $3$ of \cite{balakrishnan2015stochastic} and the assumptions made, we can show that $\phi_1(\bm{w},\bm{\lambda})$ is nonpositive. Thus, clearly, the second argument of Lemma \ref{lem2.2} is verified, and the proof is completed. By adopting the arguments of the proof of the first part, the second part follows easily.
\end{proof}

\begin{remark}
On using Theorem \ref{th4.1}(ii), one can easily find out a lower bound for the reliability function of the largest claims having heterogeneous portfolios of risks in the form of  reliability of the largest claims having homogeneous portfolio of risks.
Consider a $T$- transform matrix $T_{0.5}$ of order  $2\times2,$  where the first and second rows are same and equal to $(1/2,1/2)$. Let $({\psi}(p_1),{\psi}(p_2))=(e^{-p_1},e^{-p_2})$ and $1/\boldsymbol{\theta}=(1/\theta_1,1/\theta_2)$. Further, assume $({\psi}(q_1),{\psi}(q_2))=\left((e^{-p_1}+e^{-p_2})/2, (e^{-p_1}+e^{-p_2})/2\right)$  and $(1/\delta_1,1/\delta_2)=\left((\theta_1+\theta_2)/(2\theta_1\theta_2),(\theta_1+\theta_2)/(2\theta_1\theta_2)\right)$. It is easy to check that $$\left( \begin{smallmatrix} \psi(q_1)&\psi(q_2)\\ 1/\delta_1 & 1/\delta_2\\
\end{smallmatrix} \right)=\left( \begin{smallmatrix} \psi(p_1)&\psi(p_2)\\ 1/\theta_1 &1/ \theta_2\\
\end{smallmatrix} \right)T_{0.5}.$$ As a result, $\left( \begin{smallmatrix} \psi(p_1)&\psi(p_2)\\  1/\theta_1 &1/ \theta_2\\
\end{smallmatrix} \right)\gg \left( \begin{smallmatrix} \psi(q_1)&\psi(q_2)\\ 1/\delta_1 & 1/\delta_2\\
\end{smallmatrix} \right).$ Therefore, by Theorem \ref{th4.1}$(ii)$, we can propose a lower bound of the reliability function of $U_{2:2}$ as
 $$ \bar{F}_{2:2}(t)\geq 1-\left\{1+\ln\left(\frac{e^{-p_{1}}+e^{-p_{2}}}{2}\right)\left[1-{F}^{\alpha}
 \left(\frac{(t-\lambda)(\theta_1+\theta_2)}{2\theta_1\theta_2}\right)\right]\right\}^2. $$
\end{remark}

Now, it might be of interest to investigate whether the decreasing and convexity property of the function $\psi$ is a must or not in Theorem \ref{th4.1}$(ii)$. The following numerical counterexample states that this assumption is required to get the usual stochastic order in Theorem \ref{th4.1}$(ii)$.

\begin{counterexample}\label{ce3.1}
    Let $F(t)=1-(1+t^5)^{-4},~t> 0$ and $\psi(p)=1-p^3.$ It is easy to check that $\psi(p)$ is decreasing and concave. So, the condition in $(C10)$ is relaxed. Let us take $(\alpha_1,\alpha_2)=(\beta_1,\beta_2)=(0.01,0.01),$ $(1/\theta_1,1/\theta_2)=(0.7,0.6),$ $(1/\delta_1,1/\delta_2)=(0.66,0.64)$, $(\lambda_1,\lambda_2)=(\mu_1,\mu_2)=(0.9,0.9)$,  $(p_1,p_2)=((0.2)^{1/3},(0.5)^{1/3})$ and $(q_1,q_2)=((0.32)^{1/3},(0.38)^{1/3})$. Consider a $T$-transform matrix $T_{0.6}=\left( \begin{smallmatrix} 0.6 & 0.4\\ 0.4 &0.6\\
    \end{smallmatrix} \right)$. It can be shown that
    $$\left( \begin{smallmatrix} \psi(q_1)&\psi(q_2)\\ 1/\delta_1 & 1/\delta_2\\
    \end{smallmatrix} \right)=\left( \begin{smallmatrix} \psi(p_1)&\psi(p_2)\\ 1/\theta_1 &1/ \theta_2\\
    \end{smallmatrix} \right)T_{0.6},$$
    which produces  $\left( \begin{smallmatrix} \psi(p_1)&\psi(p_2)\\  1/\theta_1 &1/ \theta_2\\
    \end{smallmatrix} \right)\gg \left( \begin{smallmatrix} \psi(q_1)&\psi(q_2)\\ 1/\delta_1 & 1/\delta_2\\
    \end{smallmatrix} \right)$. Further, $\left( \begin{smallmatrix} \psi(p_1)&\psi(p_2)\\  1/\theta_1 &1/ \theta_2\\
    \end{smallmatrix} \right)\in M_{2}$.
Based on this present setup, we have
$F_{2:2}(1.5)-G_{2:2}(1.5)=1.5928\times
e^{-5}~(>0)$ and  $
F_{2:2}(1.6)-G_{2:2}(1.6)=-1.9324\times
e^{-4}~(<0)$. For graphical view, please see
Figure $1(a)$.
    It establishes that Theorem \ref{th4.1}$(ii)$ does not hold, if $(C10)$ is taken out.
\end{counterexample}
\begin{figure}[h]
    \begin{center}
    \subfigure[]{\label{c1}\includegraphics[height=2.41in]{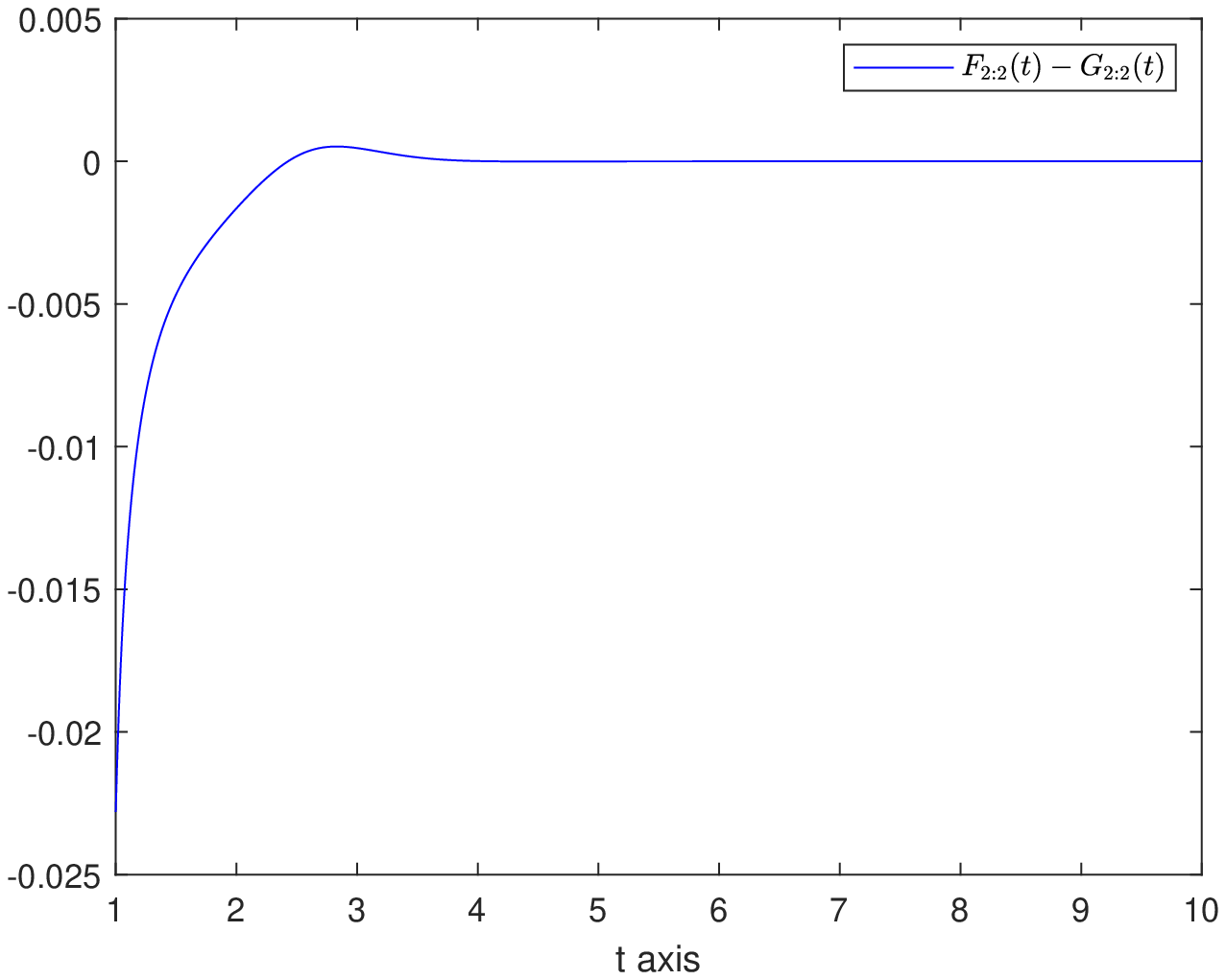}}
    \subfigure[]{\label{c2}\includegraphics[height=2.41in]{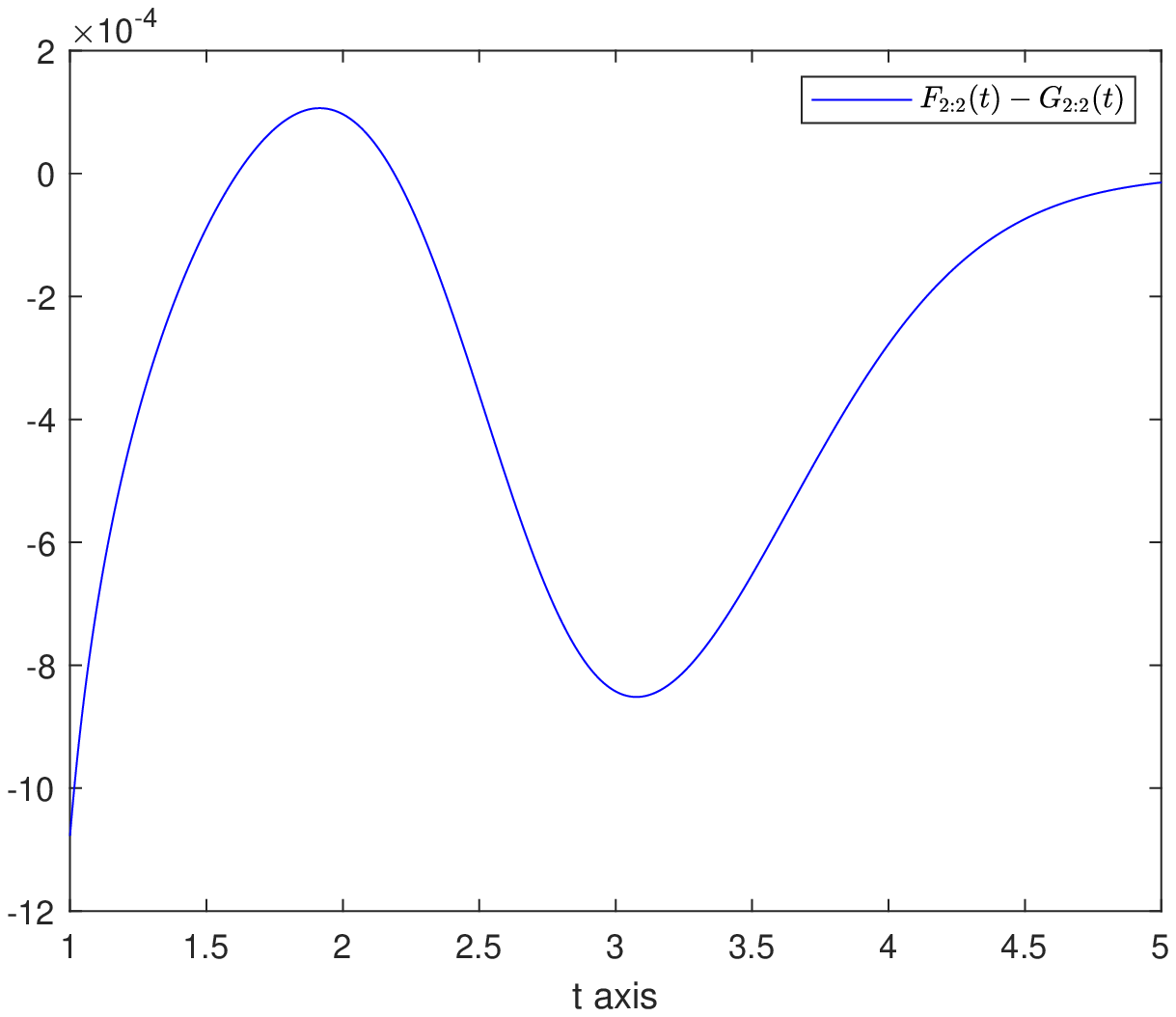}}
    \caption{(a) Graph of $F_{2:2}(t)-G_{2:2}(t)$ for  Counterexample \ref{ce3.1}. (b)  Graph of $F_{2:2}(t)-G_{2:2}(t)$ for  Counterexample \ref{ce3.2}. }
    \end{center}
    \end{figure}

We next consider a counterexample to show that if the condition ``$\left( \begin{smallmatrix} \psi(p_1)&\psi(p_2)\\  1/\theta_1 &1/ \theta_2\\
\end{smallmatrix} \right)\notin M_{2}$" is removed, then Theorem \ref{th4.1}$(ii)$ may not hold.

\begin{counterexample}\label{ce3.2}
Consider the same baseline distribution function
as in Counterexample \ref{ce3.1}. Suppose
$\psi(p)=-\ln p.$ Here, one can check that $(C1)$
and $(C10)$ are satisfied. Set
$(\alpha_1,\alpha_2)=(\beta_1,\beta_2)=(0.01,0.01),$
$(1/\theta_1,1/\theta_2)=(0.5,0.3),~(1/\delta_1,1/\delta_2)=(0.32,0.48)$,
$(\lambda_1,\lambda_2)=(\mu_1,\mu_2)=(0.9,0.9)$,
$(\psi(p_1),\psi(p_2))=(0.23,0.69)$ and
$(\psi(q_1),\psi(q_2))=(0.644,0.276)$.
    For a $T$-transform matrix $T_{0.1}=\left( \begin{smallmatrix} 0.1 & 0.9\\ 0.9 &0.1\\
    \end{smallmatrix} \right)$, we have
    $$\left( \begin{smallmatrix} \psi(q_1)&\psi(q_2)\\ 1/\delta_1 & 1/\delta_2\\
    \end{smallmatrix} \right)=\left( \begin{smallmatrix} \psi(p_1)&\psi(p_2)\\ 1/\theta_1 &1/ \theta_2\\
    \end{smallmatrix} \right)T_{0.1}.$$
    Thus, from Definition \ref{def2.4}, $\left( \begin{smallmatrix} \psi(p_1)&\psi(p_2)\\  1/\theta_1 &1/ \theta_2\\
    \end{smallmatrix} \right)\gg \left( \begin{smallmatrix} \psi(q_1)&\psi(q_2)\\ 1/\delta_1 & 1/\delta_2\\
    \end{smallmatrix} \right)$. Again, $\left( \begin{smallmatrix} \psi(p_1)&\psi(p_2)\\  1/\theta_1 &1/ \theta_2\\
    \end{smallmatrix} \right)$ does not belong to $ M_{2}$.
Now, considering the assumed numerical values, we
get $ F_{2:2}(1.6)-G_{2:2}(1.6)=-9.2138\times
e^{-6}~(<0)$ and  $
F_{2:2}(1.7)-G_{2:2}(1.7)=5.0560\times
e^{-5}~(>0)$, which negates the usual stochastic
order, stated in Theorem \ref{th4.1}$(ii)$. See
Figure $1(b)$ for the graph of
$F_{2:2}(t)-G_{2:2}(t).$
    \end{counterexample}

The following three consecutive results can be
thought of a generalization of the above result
to arbitrary $n\ge 3$. The cases in which the
chain majorization order holds between two
matrices   $(\bm{\psi}(\bm{q}),\bm{\mu};n)$ and
$(\bm{\psi}(\bm{p}),\bm{\lambda};n)$, or
$(\bm{\psi}(\bm{q}),1/\bm{\delta};n)$ and
$(\bm{\psi}(\bm{p}),1/\bm{\theta};n)$ are
considered. First, we use a $T$-transform matrix
to obtain the matrices
$(\bm{\psi}(\bm{q}),\bm{\mu};n)$ and
$(\bm{\psi}(\bm{q}),1/\bm{\delta};n)$ from
$(\bm{\psi}(\bm{p}),\bm{\lambda};n)$ and
$(\bm{\psi}(\bm{p}),1/\bm{\theta};n)$,
respectively, so that
$(\bm{\psi}(\bm{p}),\bm{\lambda};n)\gg
(\bm{\psi}(\bm{q}),\bm{\mu};n)$ and
$(\bm{\psi}(\bm{p}),1/\bm{\theta};n)\gg
(\bm{\psi}(\bm{q}),1/\bm{\delta};n)$ hold.

\begin{theorem}\label{th4.2}
     Let us assume that $(A1), (A2)$ and $(C1)$ hold. Further, $\bm{\alpha}=\bm{\beta}=\alpha \bm{1}_n~(\alpha\le 1).$ Suppose $T_{w}$ is a $T$-transform matrix.

\begin{itemize}
    \item[$(i)$] Let $\psi$ satisfy $(C9)$.  If $\bm{\theta}=\bm{\delta}=
    \theta\bm{1}_n$ and $(\bm{\psi}(\bm{p}),\bm{\lambda};n)\in M_{n}$, then
    $(\bm{\psi}(\bm{q}),\bm{\mu};n)=(\bm{\psi}(\bm{p}),\bm{\lambda};n)T_{w}
  \Rightarrow U_{n:n} \geq_{st} V_{n:n}$;

  \item[$(ii)$] Let $\psi$ satisfy $(C10)$. If  $\bm{\lambda}=\bm{\mu}=\mu\bm{1}_n$ and $(\bm{\psi}(\bm{p}),1/\bm{\theta};n)\in M_{n}$, then
    $(\bm{\psi}(\bm{q}),1/\bm{\delta};n)=(\bm{\psi}(\bm{p}),1/\bm{\theta};n)T_{w}
  \Rightarrow U_{n:n} \geq_{st} V_{n:n}$.
\end{itemize}
\end{theorem}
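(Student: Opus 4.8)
The plan is to recognize $F_{n:n}(t)$ as a product of identical bivariate factors and then lift the $n=2$ analysis of Theorem \ref{th4.1} to arbitrary $n$ by means of Lemma \ref{lem2.3}. For part $(i)$, setting $\theta_{i}=\theta$ and writing $w_{i}=\psi(p_{i})$, I would recast \eqref{eq3.1} as
\begin{eqnarray*}
F_{n:n}(t)=\prod_{i=1}^{n}\upsilon(w_{i},\lambda_{i}),\qquad
\upsilon(w,\lambda)=1-\psi^{-1}(w)\left[1-F^{\alpha}\left(\frac{t-\lambda}{\theta}\right)\right],
\end{eqnarray*}
so that $F_{n:n}(t)=\zeta_{n}(C)$ with $C=(\bm{\psi}(\bm{p}),\bm{\lambda};n)$ in the notation of \eqref{eq2.3}. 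Since $\psi^{-1}(w)\in(0,1)$ and $1-F^{\alpha}\in(0,1)$, the factor $\upsilon$ is strictly positive, so $\upsilon$ maps into $\mathbb{R}^{+}$ exactly as Lemma \ref{lem2.3} demands, and $G_{n:n}(t)=\zeta_{n}(D)$ with $D=(\bm{\psi}(\bm{q}),\bm{\mu};n)$.

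The crucial input is that the bivariate function $\zeta_{2}=F_{2:2}(t)$ already satisfies \eqref{eq2.2}, and this is precisely what the proof of Theorem \ref{th4.1}$(i)$ supplied. There, $F_{2:2}(t)$ was shown to be permutation invariant in $(w_{i},\lambda_{i})$, which is condition $(a)$ of Lemma \ref{lem2.2}, and the quantity $\phi_{1}(\bm{w},\bm{\lambda})$ in \eqref{eq4.3} was shown to be nonpositive under $(A1)$, $(A2)$, $(C1)$, $(C9)$ and $\alpha\le 1$, which is condition $(b)$ of Lemma \ref{lem2.2} in the ``$\le$'' direction. Hence $\zeta_{2}$ satisfies \eqref{eq2.2} with $\varpi(C)\le\varpi(D)$ whenever $C\gg D$, and no fresh analytic estimate is needed at the level of $n$.

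With this in hand I would invoke Lemma \ref{lem2.3} directly. Since $(\bm{\psi}(\bm{p}),\bm{\lambda};n)\in M_{n}$ by hypothesis and $(\bm{\psi}(\bm{q}),\bm{\mu};n)=(\bm{\psi}(\bm{p}),\bm{\lambda};n)\,T_{w}$ is obtained by a single $T$-transform, Lemma \ref{lem2.3} yields $\zeta_{n}(C)\le\zeta_{n}(D)$, that is, $F_{n:n}(t)\le G_{n:n}(t)$ for every $t>\max\{\lambda_{1},\ldots,\lambda_{n}\}$. Equivalently $\bar{F}_{n:n}(t)\ge\bar{G}_{n:n}(t)$, which is exactly $U_{n:n}\ge_{st}V_{n:n}$ by Definition \ref{def2.1}$(a)$. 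Part $(ii)$ follows by the identical argument after replacing $\bm{\lambda}$ by $1/\bm{\theta}$, fixing the common location $\mu_{i}=\mu$, and using $(C10)$ in place of $(C9)$; the corresponding $2\times 2$ derivative inequality is the one already verified in Theorem \ref{th4.1}$(ii)$.

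The main point to watch is not computational but organizational. First, I must track the inequality direction consistently through Lemmas \ref{lem2.2} and \ref{lem2.3}, since the ``$\le$'' verified at the $2\times 2$ level is exactly what propagates to ``$\le$'' at level $n$ and hence to the stochastic order in the correct orientation. Second, Lemma \ref{lem2.3} requires the full $n\times n$ matrix $C$ to lie in $M_{n}$ before the $T$-transform is applied, so I should emphasize that the hypothesis $(\bm{\psi}(\bm{p}),\bm{\lambda};n)\in M_{n}$ (not merely the $2\times 2$ membership used inside Theorem \ref{th4.1}) is genuinely invoked. Because the entire analytic burden was discharged at the $n=2$ stage, the present result is essentially a structural corollary of Theorem \ref{th4.1} together with the product-to-chain extension encoded in Lemma \ref{lem2.3}.
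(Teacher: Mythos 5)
Your proposal is correct and follows essentially the same route as the paper: both write $F_{n:n}(t)$ as a product $\prod_{i=1}^{n}\upsilon(w_{i},\lambda_{i})$, note that the bivariate case $\zeta_{2}=F_{2:2}(t)$ satisfies \eqref{eq2.2} by the verification already done in Theorem \ref{th4.1}, and then invoke Lemma \ref{lem2.3} to conclude $F_{n:n}(t)\le G_{n:n}(t)$ and hence $U_{n:n}\ge_{st}V_{n:n}$. Your added care about the orientation of the inequality and the role of the hypothesis $(\bm{\psi}(\bm{p}),\bm{\lambda};n)\in M_{n}$ is a welcome (and accurate) elaboration of what the paper leaves implicit.
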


\begin{proof}
Here, we present the proof of
the first part. The second part can be proved with the similar arguments. Denote
$\zeta_{n}(\bm{w},\bm{\lambda})=\prod_{i=1}^{n}\left[1-\psi^{-1}(w_{i})\left[1-{F}^{\alpha}\left(\frac{t-\lambda_i}
{\theta}\right)\right]\right]=\prod_{i=1}^{n}\zeta_{i}({w_i},{\lambda_i})$, where
$\zeta_{i}({w_i},{\lambda_i})=1-\psi^{-1}(w_{i})\left[1-{F}^{\alpha}\left(\frac{t-\lambda_i}{\theta}\right)\right]$,
for $i=1,\cdots,n$. By Theorem
\ref{th4.1}$(i)$, one can easily check that
$\zeta_{2}$ satisfies (\ref{eq2.2}). Rest of the proof is completed from
Lemma
\ref{lem2.3}.
\end{proof}

Suppose $T_{w_{1}},\cdots,T_{w_{k}}$  are $k$
(finite) number of $T$-transform matrices. Also,
assume that all these $T$-transform matrices have
the same structure. Since, the product of a
finite number of $T$-transform matrices having
the same structure is again a $T$-transform
matrix. Hence, the above result can be extended.
This is addressed in the following corollary,
which is a direct consequence of Theorem
\ref{th4.2}. Here,
$(\bm{\psi}(\bm{p}),\bm{\lambda};n)\gg
(\bm{\psi}(\bm{q}),\bm{\mu};n)$ and
$(\bm{\psi}(\bm{p}),1/\bm{\theta};n)\gg
(\bm{\psi}(\bm{q}),1/\bm{\delta};n)$ hold, since
$(\bm{\psi}(\bm{q}),\bm{\mu};n)$ and
$(\bm{\psi}(\bm{q}),1/\bm{\delta};n)$ are
respectively obtained from
$(\bm{\psi}(\bm{p}),\bm{\lambda};n)$ and
$(\bm{\psi}(\bm{p}),1/\bm{\theta};n)$ using a
finite number of $T$-transform matrices. Denote
$T^{*}_{w}=T_{w_{1}}\cdots T_{w_{k}}.$

\begin{corollary}\label{th4.2a}
     Suppose  $(A1), (A2)$ and $(C1)$ hold. Also, let
     $\bm{\alpha}=\bm{\beta}=\alpha \bm{1}_n~(\alpha\le 1).$
        \begin{itemize}
        \item[$(i)$] Let    $\psi$ satisfy $(C9)$.  If $\bm{\theta}=\bm{\delta}=\theta\bm{1}_n$ and  $(\bm{\psi}(\bm{p}),\bm{\lambda};n)\in M_{n}$, then
        $(\bm{\psi}(\bm{q}),\bm{\mu};n)=(\bm{\psi}(\bm{p}),\bm{\lambda};n)T^{*}_{w}
        \Rightarrow U_{n:n} \geq_{st} V_{n:n}$;

        \item[$(ii)$] Let   $\psi$ satisfy $(C10)$.
        If  $\bm{\lambda}=\bm{\mu}=\mu\bm{1}_n$ and $(\bm{\psi}(\bm{p}),1/\bm{\theta};n)\in M_{n}$,
        then
        $(\bm{\psi}(\bm{q}),1/\bm{\delta};n)=(\bm{\psi}(\bm{p}),1/\bm{\theta};n)T^{*}_{w}
        \Rightarrow U_{n:n} \geq_{st} V_{n:n}$.
            \end{itemize}
\end{corollary}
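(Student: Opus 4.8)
The plan is to reduce the corollary directly to Theorem \ref{th4.2} by showing that the matrix $T^{*}_{w}=T_{w_{1}}\cdots T_{w_{k}}$, being a product of finitely many $T$-transform matrices of the same structure, is itself a single $T$-transform matrix. Once this is established, both parts follow immediately: the hypothesis $(\bm{\psi}(\bm{q}),\bm{\mu};n)=(\bm{\psi}(\bm{p}),\bm{\lambda};n)T^{*}_{w}$ in part $(i)$, and the analogous identity in part $(ii)$, becomes exactly the hypothesis of Theorem \ref{th4.2} with the single $T$-transform $T^{*}_{w}$ in place of $T_{w}$, and the conclusion $U_{n:n}\ge_{st}V_{n:n}$ is inherited verbatim.

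The key algebraic step is the closure of same-structure $T$-transforms under multiplication. Writing $T_{w_{1}}=w_{1}I_{n}+(1-w_{1})\pi$ and $T_{w_{2}}=w_{2}I_{n}+(1-w_{2})\pi$ with a common permutation matrix $\pi$, I would use that $\pi$ interchanges two fixed rows, so $\pi$ is a transposition and hence an involution, i.e. $\pi^{2}=I_{n}$. Expanding the product and collecting the $I_{n}$ and $\pi$ terms then gives $T_{w_{1}}T_{w_{2}}=w^{*}I_{n}+(1-w^{*})\pi$, where $w^{*}=w_{1}w_{2}+(1-w_{1})(1-w_{2})$. Since $1-w^{*}=w_{1}(1-w_{2})+w_{2}(1-w_{1})\ge 0$ and $w^{*}\ge 0$ whenever $w_{1},w_{2}\in[0,1]$, we obtain $w^{*}\in[0,1]$, so $T_{w_{1}}T_{w_{2}}$ is again a $T$-transform matrix of the same structure $\pi$.

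A straightforward induction on $k$ then shows that $T^{*}_{w}=T_{w_{1}}\cdots T_{w_{k}}$ equals $\tilde{w}I_{n}+(1-\tilde{w})\pi$ for some $\tilde{w}\in[0,1]$, that is, it is a genuine $T$-transform matrix. Applying Theorem \ref{th4.2}$(i)$ under the assumptions $\bm{\theta}=\bm{\delta}=\theta\bm{1}_{n}$, $(\bm{\psi}(\bm{p}),\bm{\lambda};n)\in M_{n}$, and $\psi$ satisfying $(C9)$ yields part $(i)$; Theorem \ref{th4.2}$(ii)$ under the corresponding scale-heterogeneity hypotheses with $(C10)$ yields part $(ii)$.

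The only point requiring care, and the essential hypothesis, is that all $T_{w_{j}}$ share the same structure $\pi$. This is precisely what permits the use of $\pi^{2}=I_{n}$ in the expansion; for transforms built from distinct transpositions $\pi_{1}\neq\pi_{2}$, the product $T_{w_{1}}T_{w_{2}}$ need not reduce to a single $T$-transform, since the cross terms involve $\pi_{1}\pi_{2}$, which is in general neither $I_{n}$ nor a transposition, and the reduction to Theorem \ref{th4.2} would then break down. I do not anticipate any genuine obstacle beyond verifying $w^{*}\in[0,1]$ and the bookkeeping in the induction.
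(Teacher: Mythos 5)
Your proposal is correct and follows exactly the route the paper takes: the paper justifies the corollary by asserting that a finite product of same-structure $T$-transform matrices is again a $T$-transform matrix and then invoking Theorem \ref{th4.2}, and your computation $T_{w_{1}}T_{w_{2}}=\bigl[w_{1}w_{2}+(1-w_{1})(1-w_{2})\bigr]I_{n}+\bigl[w_{1}(1-w_{2})+w_{2}(1-w_{1})\bigr]\pi$ (using $\pi^{2}=I_{n}$) together with the induction supplies the verification the paper leaves implicit. No gaps.
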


Note that a finite product of $T$-transform
matrices may not produce a $T$-transform matrix,
when they have different structures. Thus, it is
natural to face the question whether Corollary
\ref{th4.2a} still holds for differently
structured $T$-transform matrices. The next
result shows that it is possible under some new
assumptions.

\begin{theorem}\label{th4.3}

Let $(A1), (A2)$ and $(C1)$ hold. Assume
$\bm{\alpha}=\bm{\beta}=\alpha
\bm{1}_n~(\alpha\le 1).$ Further, let
$T_{w_{1}},\cdots,T_{w_{j}}$ be the $T$-transform
matrices with different structures for
$j=1,\cdots,k-1,$ ($k \geq 2$).
\begin{itemize}
    \item[$(i)$] Suppose $\psi$ satisfies $(C9)$. If $\bm{\theta}=\bm{\delta}=\theta\bm{1}_n$ and
$(\bm{\psi}(\bm{p}),\bm{\lambda};n)\in M_{n}$,
then
$(\bm{\psi}(\bm{q}),\bm{\mu};n)=(\bm{\psi}(\bm{p}),\bm{\lambda};n)T_{w_{1}}\cdots
T_{w_{k}}\Rightarrow U_{n:n}\geq_{st} V_{n:n}$;

    \item[$(ii)$] Suppose   $\psi$ satisfies $(C10)$. If
$\bm{\lambda}=\bm{\mu}=\mu\bm{1}_n$ and
$(\bm{\psi}(\bm{p}),1/\bm{\theta};n)\in M_{n},$
then
$(\bm{\psi}(\bm{q}),1/\bm{\delta};n)=(\bm{\psi}(\bm{p}),1/\bm{\theta};n)T_{w_{1}}\cdots
T_{w_{k}}$
    $\Rightarrow U_{n:n}\geq_{st} V_{n:n}$.

\end{itemize}
\end{theorem}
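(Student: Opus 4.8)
The plan is to reduce the stochastic comparison to a pointwise inequality between the distribution functions and then to transport it along the chain of $T$-transforms one factor at a time. As in the proof of Theorem \ref{th4.2}, write $\zeta_n(\bm{w},\bm{\lambda})=\prod_{i=1}^{n}[1-\psi^{-1}(w_i)(1-F^{\alpha}(\frac{t-\lambda_i}{\theta}))]$ for the common functional form of $F_{n:n}$ in \eqref{eq3.1}, so that $U_{n:n}\geq_{st}V_{n:n}$ is equivalent to $\zeta_n(\bm{\psi}(\bm{p}),\bm{\lambda})\le\zeta_n(\bm{\psi}(\bm{q}),\bm{\mu})$ for every admissible $t$. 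I would set $A_{0}=(\bm{\psi}(\bm{p}),\bm{\lambda};n)$ and introduce the intermediate matrices $A_{j}=A_{0}T_{w_{1}}\cdots T_{w_{j}}$ for $j=0,\ldots,k$, so that $A_{k}=(\bm{\psi}(\bm{q}),\bm{\mu};n)$ and each $A_{j}=A_{j-1}T_{w_{j}}$ differs from its predecessor by a single $T$-transform. Since $\ge_{st}$ is transitive, it then suffices to establish the single-step inequality $\zeta_n(A_{j-1})\le\zeta_n(A_{j})$ for every $j=1,\ldots,k$, and chain them together.

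For $j=1$ the source $A_{0}$ lies in $M_{n}$ by hypothesis, and by Theorem \ref{th4.1}$(i)$ the two-dimensional functional $\zeta_{2}$ satisfies \eqref{eq2.2}; hence Lemma \ref{lem2.3} gives $\zeta_n(A_{0})\le\zeta_n(A_{1})$ exactly as in Theorem \ref{th4.2}. The substance lies in the steps $j\ge 2$, where the $T$-transforms have different structures. Here I would exploit the product form $\zeta_n(C)=\prod_{i}\upsilon(c_{1i},c_{2i})$: because $T_{w_{j}}$ perturbs only the two columns it mixes, the ratio $\zeta_n(A_{j-1})/\zeta_n(A_{j})$ depends solely on those two columns, so the comparison collapses to $\zeta_{2}$ evaluated on the corresponding $2\times 2$ submatrix and its $T_{w_{j}}$-image. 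Every $2\times 2$ positive matrix belongs to $M_{2}$ or to $Q_{2}$, according to whether its columns are similarly or oppositely ordered; consequently the reduced comparison always falls within the scope of Lemma \ref{lem2.3}, and the single-step inequality follows.

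The main obstacle I anticipate is precisely that the intermediate matrices $A_{1},\ldots,A_{k-1}$ need not remain in $M_{n}$ (nor in $Q_{n}$): mixing two columns that are far apart in the common ordering can destroy the pairwise sign condition against the untouched columns, so a naive induction applying Lemma \ref{lem2.3} globally to each $A_{j-1}$ breaks down, and the reduction to the two mixed columns is what rescues it. To close this gap I must verify that condition $(b)$ of Lemma \ref{lem2.2} holds for $\zeta_{2}$ with one and the same sign on both $M_{2}$ and $Q_{2}$, so that each single-step inequality points in the direction $\zeta_n(A_{j-1})\le\zeta_n(A_{j})$ no matter how $T_{w_{j}}$ reorders the partially mixed columns; this uniform sign is what conditions $(C1)$ and $(C9)$, together with Lemma $3$ of \cite{balakrishnan2015stochastic}, are designed to guarantee. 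Part $(ii)$ is then handled verbatim after replacing $(\bm{\psi}(\bm{p}),\bm{\lambda};n)$ by $(\bm{\psi}(\bm{p}),1/\bm{\theta};n)$, with the common location $\mu\bm{1}_{n}$ playing the role of the common scale and with Theorem \ref{th4.1}$(ii)$ and condition $(C10)$ invoked in place of their counterparts.
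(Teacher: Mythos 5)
Your telescoping skeleton---intermediate matrices $A_j=A_0T_{w_1}\cdots T_{w_j}$, intermediate maxima, transitivity of $\ge_{st}$---is exactly the paper's: the proof there defines $(\bm{\psi}(\bm{p}^{(j)}),\bm{\lambda}^{(j)};n)=(\bm{\psi}(\bm{p}),\bm{\lambda};n)T_{w_1}\cdots T_{w_j}$, attaches random variables $W^{(j)}_{n:n}$ to them, and chains $U_{n:n}\ge_{st}W^{(1)}_{n:n}\ge_{st}\cdots\ge_{st}W^{(k-1)}_{n:n}\ge_{st}V_{n:n}$. Where you diverge is the single-step inequality for $j\ge 2$. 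The paper asserts that, under the stated assumptions (the different structures of the $T_{w_j}$'s together with $(\bm{\psi}(\bm{p}),\bm{\lambda};n)\in M_n$), every intermediate matrix remains in $M_n$, so each step is again an instance of Theorem~\ref{th4.2}. You instead deny that the intermediate matrices stay in $M_n$ and try to rescue the step by collapsing to the $2\times 2$ submatrix of the two mixed columns and claiming that condition $(b)$ of Lemma~\ref{lem2.2} holds for $\zeta_2$ \emph{with one and the same sign on both $M_2$ and $Q_2$}.

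That claim is false, and the paper itself refutes it: Counterexample~\ref{ce3.2} exhibits a matrix $(\bm{\psi}(\bm{p}),1/\bm{\theta};2)\notin M_2$ (hence in $Q_2$, since every positive $2\times 2$ matrix lies in one of the two sets) for which $(C1)$ and $(C10)$ hold and a single $T$-transform produces the chain majorization, yet $F_{2:2}(t)-G_{2:2}(t)$ changes sign, so the stochastic order fails. If condition $(b)$ held with a uniform sign on $Q_2$ as well as $M_2$, Lemma~\ref{lem2.2} would force the order in that example. So the single-step inequality genuinely requires the relevant $2\times 2$ (sub)matrix to be in $M_2$, and your reduction does not close the gap: if an intermediate matrix leaves $M_n$ and the next $T_{w_j}$ mixes precisely a pair of oppositely ordered columns, your argument cannot conclude, and the conclusion may in fact be the reverse inequality. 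The burden is therefore exactly the assertion you tried to circumvent---that the $A_j$ remain in $M_n$---which is what the paper invokes (the standard claim from \cite{balakrishnan2015stochastic} for $T$-transforms with different structures applied to a matrix of $M_n$). As written, your proof does not establish the theorem; to repair it you must either prove that membership in $M_n$ is preserved along the chain under the different-structure hypothesis, or find some other control on the intermediate matrices.
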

\begin{proof}
We provide the proof of the first part. The
second part follows similarly. Let us consider
$(\bm{\psi}(\bm{p}^{(j)}),\bm{\lambda}^{(j)};n)=(\bm{\psi(p)},\bm{\lambda};n)T_{w_{1}}\cdots
$ $T_{w_{j}}$, where $j=1,\cdots,k-1.$ Further,
let $W_1^{(j)},\cdots,W_n^{(j)}$ be $n$
independent random variables having the
distribution function of  $W_i^{(j)}$ as $$
F^{(j)}_{W_{i}}(t)=1-\psi^{-1}(w^{(j)}_i)\left[1-F^{\alpha}\left(\frac{t-\lambda^{(j)}_{i}}{\theta}\right)
\right],$$ for $i=1,\cdots,n$ and
$j=1,\cdots,k-1.$ Using the assumptions made, we
have
$(\bm{\psi}(\bm{p}^{(j)}),\bm{\lambda}^{(j)};n)\in
M_{n},$ for $j=1,\cdots,k-1.$ Now,
\begin{align*}
\left(\bm{\psi(q)},\bm{\mu};n\right)&=\left(\bm{\psi}(\bm{p}),\bm{\lambda};n\right)T_{w_{1}}\cdots T_{w_{k}}\\
&=\left[\left(\bm{\psi}(\bm{p}),\bm{\lambda};n\right)T_{w_{1}}\cdots T_{w_{k-1}}\right]T_{w_{k}}\\
&=\left(\bm{\psi}(\bm{p}^{(k-1)}),\bm{\lambda}^{(k-1)};n\right)T_{w_{k}},
    \end{align*}
which implies $W^{(k-1)}_{n:n}\geq_{st} V_{n:n}$.
Similarly,
$(\bm{\psi}(\bm{p}^{(k-1)}),\bm{\lambda}^{(k-1)};n)=(\bm{\psi}(\bm{p}^{(k-2)})
,\bm{\lambda}^{(k-2)};n)$ $ T_{w_{k-1}}$ implies
$W^{(k-2)}_{n:n}\geq_{st} W^{(k-1)}_{n:n}$.
Applying similar arguments, we obtain
$$ U_{n:n}\geq_{st}W^{(1)}_{n:n}\geq_{st}\cdots \geq_{st} W^{(k-2)}_{n:n}\geq_{st}W^{(k-1)}_{n:n}\geq_{st} V_{n:n}.$$ This completes the proof of the first part.
 \end{proof}

Till now, we have derived conditions under which
the usual stochastic order holds between the
largest claim amounts arising from two
heterogeneous portfolios of risks. Now, one may
wonder whether the above results can be upgraded
to other stochastic orders. Below, we answer this
question affirmatively. In this part, it is shown
that under some new conditions, the usual
stochastic order can be extended to the reversed
hazard rate order. The reversed hazard rate of
$U_{n:n}$ is given by
\begin{eqnarray}\label{eq3.7}
\tilde{r}_{n:n}(t)=\sum_{i=1}^{n}\frac{\alpha_{i}}{\theta_i}r
\left(\frac{t-\lambda_i}{\theta_i}\right)\left[
\frac{p_iF^{(\alpha_{i}-1)}\left(\frac{t-\lambda_i}
    {\theta_i}\right)\left[1-F\left(\frac{t-\lambda_i}{\theta_i}\right)\right]}
{1-p_i\left[1-F^{\alpha_{i}}\left(\frac{t-\lambda_i}
    {\theta_i}\right)\right]}\right].
\end{eqnarray}
 The reversed hazard rate of  $V_{n:n}$,
denoted by $\tilde{s}_{n:n}(.)$ can be obtained
by substituting
$q_{i},\delta_{i},\mu_{i},\beta_{i}$ in the place
of $p_{i},\theta_{i}, \lambda_{i},\alpha_{i}$,
respectively in (\ref{eq3.7}). First, we consider
two heterogeneous insurance portfolios each
consisting of two individual risks.
\begin{theorem}\label{th4.4}
    For $n=2$, let $(A1)$ and $(A2)$ hold. Further, take $\bm{\alpha}=\bm{\beta}=\bm{1}_n.$
\begin{itemize}
    \item[$(i)$] If $\bm{\theta}=\bm{\delta}~(=\theta\bm{1}_2)$, $\bm{p}=\bm{q}~(=p\bm{1}_2)$ and $(\bm{\lambda},1/\bm{\theta};2)\in Q_{2}$, then we have $(\bm{\lambda},1/\bm{\theta};2)\gg (\bm{\mu},1/\bm{\delta};2)$
    $\Rightarrow U_{2:2} \geq_{rh} V_{2:2}$, provided $(C2)$, $C(3)$, $(C4)$ and $(C5)$ are satisfied;
 \item[$(ii)$] Let $\psi$ satisfy $(C9)$. If $\bm{\theta}=\bm{\delta}~(=\theta\bm{1}_2)$ and $(\bm{\lambda},\bm{\psi}(\bm{p});2)\in M_2$, then
$(\bm{\lambda},\bm{\psi}(\bm{p});2)\gg
(\bm{\mu},\bm{\psi}(\bm{q});2)\Rightarrow U_{2:2}
\geq_{rh} V_{2:2}$, provided $(C1)$ and $(C8)$
hold;
\item[$(iii)$] Let $\psi$ satisfy $(C9)$. If $\bm{\lambda}=\bm{\mu}~(=\lambda\bm{1}_2)$ and $(1/\bm{\theta},\bm{\psi}(\bm{p});2)\in Q_{2}$, then
$(1/\bm{\theta},\bm{\psi}(\bm{p});2)\gg (1/\bm{\delta},\bm{\psi}(\bm{q});2)\Rightarrow U_{2:2} \geq_{rh} V_{2:2}$, provided $(C2)$ and $(C5)$are satisfied.
\end{itemize}
\end{theorem}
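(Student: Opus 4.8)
The plan is to use the reversed hazard rate characterisation of $\le_{rh}$ from Definition \ref{def2.1}(b): since $U_{2:2}\ge_{rh}V_{2:2}$ is equivalent to $\tilde r_{2:2}(t)\ge \tilde s_{2:2}(t)$ for every $t$, and both reversed hazard rates come from specialising \eqref{eq3.7} to $\bm\alpha=\bm\beta=\bm 1_2$, it suffices to regard $\tilde r_{2:2}(t)$, for a fixed $t$, as a function $\varpi(C)$ of the relevant $2\times2$ parameter matrix $C$ and to prove $\varpi(C)\ge\varpi(D)$ whenever $C\gg D$ within the stated class. Using $f=r(1-F)$, formula \eqref{eq3.7} with $\alpha_i=1$ collapses to $\tilde r_{2:2}(t)=\sum_{i=1}^{2}\frac{p_i}{\theta_i}\,r(z_i)\,k_1\!\big(1,F(z_i),p_i\big)$ with $z_i=(t-\lambda_i)/\theta_i$, where $k_1$ is the function of Lemma \ref{lem2.6} (indeed $\tfrac{\psi^{-1}(w)(1-F)}{1-\psi^{-1}(w)(1-F)}=\psi^{-1}(w)\,k_1(1,F,\psi^{-1}(w))$). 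I would carry this form through all three parts.

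I would then check the two hypotheses of Lemma \ref{lem2.2}. Condition (a), permutation invariance of $\varpi$, is immediate since $\tilde r_{2:2}(t)$ is a symmetric sum over the two risks. The whole difficulty is condition (b), $\sum_{i=1}^{2}(c_{ik}-c_{ij})(\varpi_{ik}(C)-\varpi_{ij}(C))\ge 0$. Because the two summands of $\tilde r_{2:2}$ separate across the risks, the derivative of $\tilde r_{2:2}$ in a parameter of risk $k$ depends only on the parameters of risk $k$; writing the $k$-th summand as $G$ of its own free parameters, condition (b) reduces to the monotone-gradient inequality $\langle \bm a-\bm b,\nabla G(\bm a)-\nabla G(\bm b)\rangle\ge0$, to be verified only on the concordant region (for $M_2$, part (ii)) or the discordant region (for $Q_2$, parts (i) and (iii)) imposed by the sign constraint defining the class.

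For part (i) the scale and claim probabilities are common, so $G$ depends on the single variable $\lambda$ and the inequality is exactly the convexity of $\lambda\mapsto \frac{p}{\theta}\,r(z)\,k_1(1,F(z),p)$; differentiating twice and expressing $f',f''$ through $r,r'$, I would reduce the sign of the second derivative to $(C2)$--$(C5)$, with $(C4)$ handling the $r'/r$ term and $(C3),(C5)$ handling the terms carrying powers of the argument. For parts (ii) and (iii) two rows vary, so I would compute the mixed expression in (b) directly, using the monotonicity of $k_1$ (increasing in $p$, decreasing in $F$) from Lemma \ref{lem2.6} and the convexity/monotonicity of $\psi$ in $(C9)$ to sign the $\psi(p)$-contribution, together with $(C1),(C8)$ in part (ii) and $(C2),(C5)$ in part (iii) to sign the location, respectively scale, contribution and the cross term on the relevant region. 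An application of Lemma \ref{lem2.2} then yields $\varpi(C)\ge\varpi(D)$, i.e. $\tilde r_{2:2}(t)\ge \tilde s_{2:2}(t)$, hence $U_{2:2}\ge_{rh}V_{2:2}$.

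The main obstacle is step (b): the partial derivatives of the reversed hazard rate mix $r$, $r'$, and the argument $z=(t-\lambda)/\theta$, so that differentiation in the scale direction manufactures factors such as $z\,r(z)$ and $z^2r(z)$ — which is precisely why $(C2),(C3),(C5)$ are phrased through $xr(x)$ and $x^2r(x)$ — and one must show that on the prescribed concordant or discordant configuration the location/scale contribution and the $\psi(p)$ contribution reinforce rather than cancel. Isolating the worst case forced by $M_2$ (equal signs) or by $Q_2$ (opposite signs) and bounding the cross term through the monotonicity of $k_1$ and of $\psi$ is where the listed conditions are genuinely consumed.
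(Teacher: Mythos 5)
Your plan is essentially the paper's proof: for fixed $t$ it treats $\tilde r_{2:2}(t)$ as the function $\varpi$ of Lemma \ref{lem2.2}, relies on permutation invariance, computes the partial derivatives in the two row-parameters (equations \eqref{eq4.6} and \eqref{eq4.7}), and verifies condition (b) --- your ``monotone-gradient'' expression is literally the paper's $\phi_2$ in \eqref{eq4.4} --- by invoking Lemma \ref{lem2.6} together with $(C2)$--$(C5)$, with parts (ii) and (iii) dispatched by the same scheme. The level of detail at which you defer the final sign-check is also comparable to the paper's.

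One caveat concerns part (i): you take the hypothesis $\bm{\theta}=\bm{\delta}=\theta\bm{1}_2$ literally and collapse the verification to convexity of the one-variable map $\lambda\mapsto G(\lambda)$, whereas the paper's proof keeps $m_i=1/\theta_i$ as a genuinely varying second row --- which is what the explicit requirement $(\bm{\lambda},1/\bm{\theta};2)\in Q_2$ presupposes, since that condition is vacuous for a constant row --- and accordingly also signs the $(m_i-m_j)$-difference of \eqref{eq4.6} and exploits the discordance of $\lambda$ and $1/\theta$ between the two summands of \eqref{eq4.4}. If the stated hypothesis is a typo (as the $Q_2$ requirement suggests), your part (i) is missing exactly those scale-direction contributions and the cross interaction; the two-row machinery you describe for part (iii) is what would be needed to supply them.
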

\begin{proof} $(i)$ Under the set up, Equation \eqref{eq3.7} can be expressed as
    \begin{eqnarray}\label{eq4.5}
        \tilde{r}_{2:2}(t)=\sum_{i=1}^{2}{m_{i}}r
        \left((t-\lambda_i){m_i}\right)\left[
        \frac{p\left[1-F\left({(t-\lambda_i)}{m_i}\right)\right]}
        {1-p\left[1-F\left({(t-\lambda_i)}
            {m_i}\right)\right]}\right],
    \end{eqnarray} where $m_i=1/ \theta_i$, for $i=1,\cdots,n.$
    On differentiating (\ref{eq4.5}) with respect to
    $m_{i}$ and $\lambda_i$ partially, we respectively get
    \begin{eqnarray}\label{eq4.6}
        \frac{\partial \tilde{r}_{2:2}(t)}{\partial
            m_{i}}=&\frac{\partial}{\partial x}\left[{xr(x)}\right]_{x=\left((t-\lambda_i){m_i}\right)}\left[
        \frac{p\left[1-F\left((t-\lambda_i){m_i}\right)\right]}
        {1-p\left[1-F\left((t-\lambda_i){m_i}\right)\right]}\right]\\
        &-\left[xr^2(x)\right]_{x=\left((t-\lambda_i){m_i}\right)}\left[
        \frac{p\left[1-F\left((t-\lambda_i){m_i}\right)\right]}
        {\left[1-p\left[1-F\left((t-\lambda_i){m_i}\right)\right]\right]^2}\right]\nonumber
    \end{eqnarray}
and
\begin{eqnarray}\label{eq4.7}
    \frac{\partial \tilde{r}_{2:2}(t)}{\partial
         \lambda_{i}}=&-\frac{\left[x^2r(x)\right]_{x=\left((t-\lambda_i){m_i}\right)}}{(t-\lambda_i)^2}\left[\frac{r'(x)}{r(x)}\right]_{x=\left((t-\lambda_i){m_i}\right)}\left[
    \frac{p\left[1-F\left((t-\lambda_i){m_i}\right)\right]}
    {1-p\left[1-F\left((t-\lambda_i){m_i}\right)\right]}\right]\\
    &+\frac{1}{(t-\lambda_i)^2}\left[xr(x)\right]^2_{x=\left((t-\lambda_i){m_i}\right)}\left[
    \frac{p\left[1-F\left((t-\lambda_i){m_i}\right)\right]}
    {\left[1-p\left[1-F\left((t-\lambda_i){m_i}\right)\right]\right]^2}\right]\nonumber.
\end{eqnarray}
Now, consider the following
\begin{eqnarray}\label{eq4.4}
    \phi_2(\bm{m},\bm{\lambda})=(m_i-m_j)\left[\frac{\partial
        \tilde{r}_{2:2}(t)}{\partial
        m_{i}}-\frac{\partial
        \tilde{r}_{2:2}(t)}{\partial
        m_{j}}\right]+(\lambda_{i}-\lambda_{j})\left[\frac{\partial
        \tilde{r}_{2:2}(t)}{\partial
        \lambda_{i}}-\frac{\partial
        \tilde{r}_{2:2}(t)}{\partial
        \lambda_{j}}\right].
\end{eqnarray}
Under the assumptions and Lemma \ref{lem2.6}, it
can be shown that the right hand side of
\eqref{eq4.4} is greater than or equals to zero.
Thus, the result follows from Lemma \ref{lem2.2}.
Using similar approach, other two parts can be
proved.
 \end{proof}

The following counterexample shows that the
condition in $(C9)$ is necessary to obtain the
reversed hazard rate order in Theorem
\ref{th4.4}$(ii)$.
\begin{counterexample}\label{ce3.3}
Consider the baseline distribution function as $F(t)=1-(1+5t)^{-1/5},~t$
$> 0$, for which the hazard rate function $r(x)$ is decreasing and convex.
Let $\psi(p)=1-p^2.$ Clearly, $\psi(p)$ does not satisfy $(C9)$. Set $(\alpha_1,\alpha_2)=(\beta_1,\beta_2)=(1,1),$ $(\theta_1,\theta_2)=(\delta_1,\delta_2)=(0.5,0.5)$, $(\lambda_1,\lambda_2)=(0.9,0.6),~(\mu_1,\mu_2)=(0.81,0.69)$,  $(p_1,p_2)=(\sqrt{0.2},\sqrt{0.3})$ and $(q_1,q_2)=(\sqrt{0.23},\sqrt{0.27})$. Consider a $T$-transform matrix $T_{0.7}=\left( \begin{smallmatrix} 0.7 & 0.3\\ 0.3 &0.7\\
    \end{smallmatrix} \right)$. It can be shown that
    $$\left( \begin{smallmatrix}  \mu_1 & \mu_2\\\psi(q_1)&\psi(q_2)\\
    \end{smallmatrix} \right)=\left( \begin{smallmatrix}  \lambda_1 & \lambda_2\\\psi(p_1)&\psi(p_2)\\
    \end{smallmatrix} \right)T_{0.7},$$
    which yields $\left( \begin{smallmatrix} \lambda_1 & \lambda_2\\\psi(p_1)&\psi(p_2)\\
    \end{smallmatrix} \right)\gg \left( \begin{smallmatrix} \mu_1 & \mu_2\\\psi(q_1)&\psi(q_2)\\
    \end{smallmatrix} \right)$. Further, $\left( \begin{smallmatrix} \lambda_1 & \lambda_2\\\psi(p_1)&\psi(p_2)\\
    \end{smallmatrix} \right)\in M_{2}$ . Denote $\eta(t)=\tilde{r}_{2:2}(t)-\tilde{s}_{2:2}(t)$.
    Then, $\eta(1.7)=1.3522\times e^{-4}~(>0)$ and
    $\eta(1.8)=-2.0998\times e^{-4}~(<0)$, which shows that $\eta(t)$ changes
    sign, when $t$ travels from $0$ to $\infty$. Graph is presented in
    Figure $2(a)$ for clear view. Thus, $U_{2:2}\ngeq_{rh} V_{2:2}.$
     \begin{figure}[h]
        \begin{center}

            \subfigure[]{\label{c4}\includegraphics[height=2.41in]{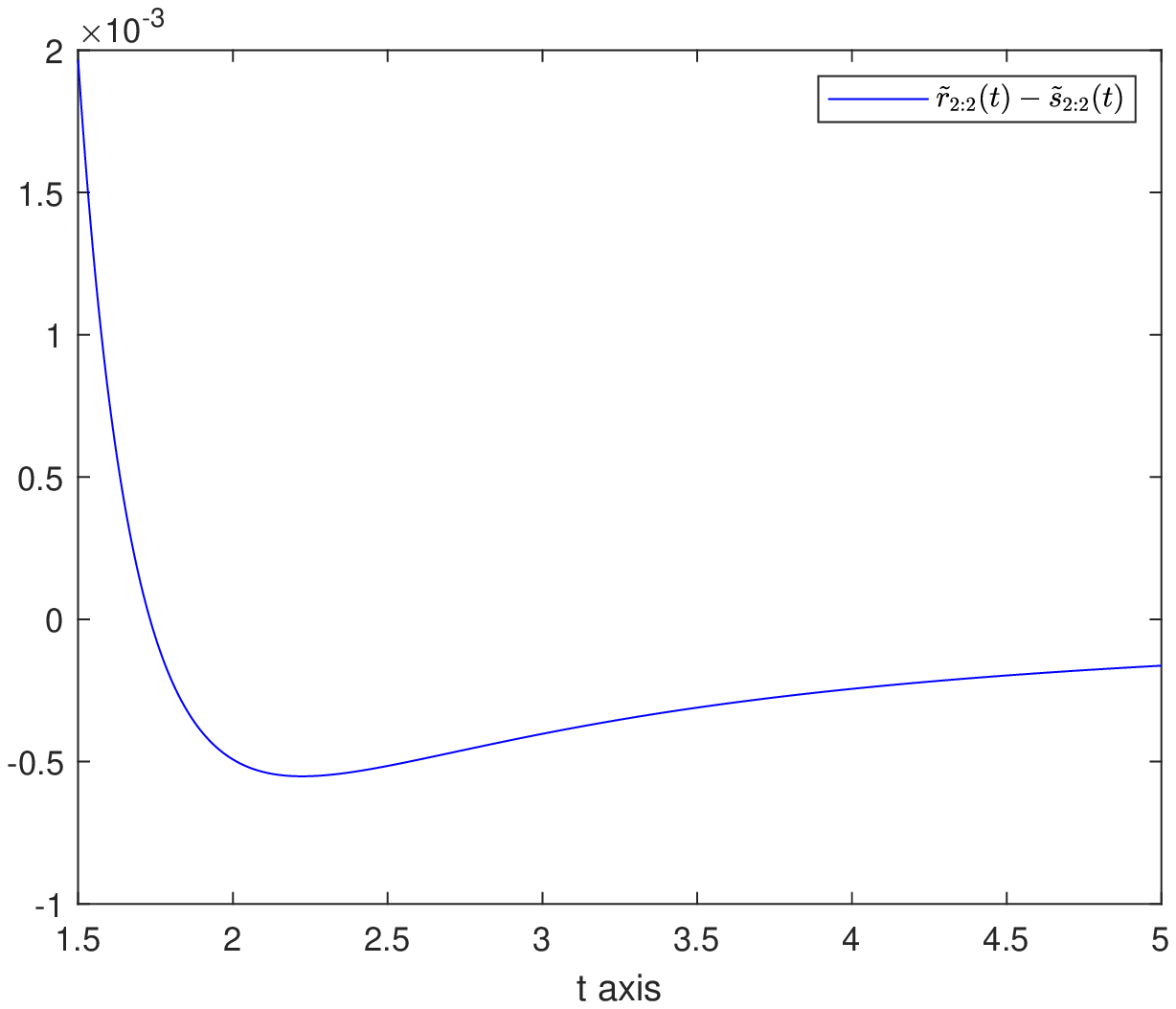}}
            \subfigure[]{\label{c3}\includegraphics[height=2.41in]{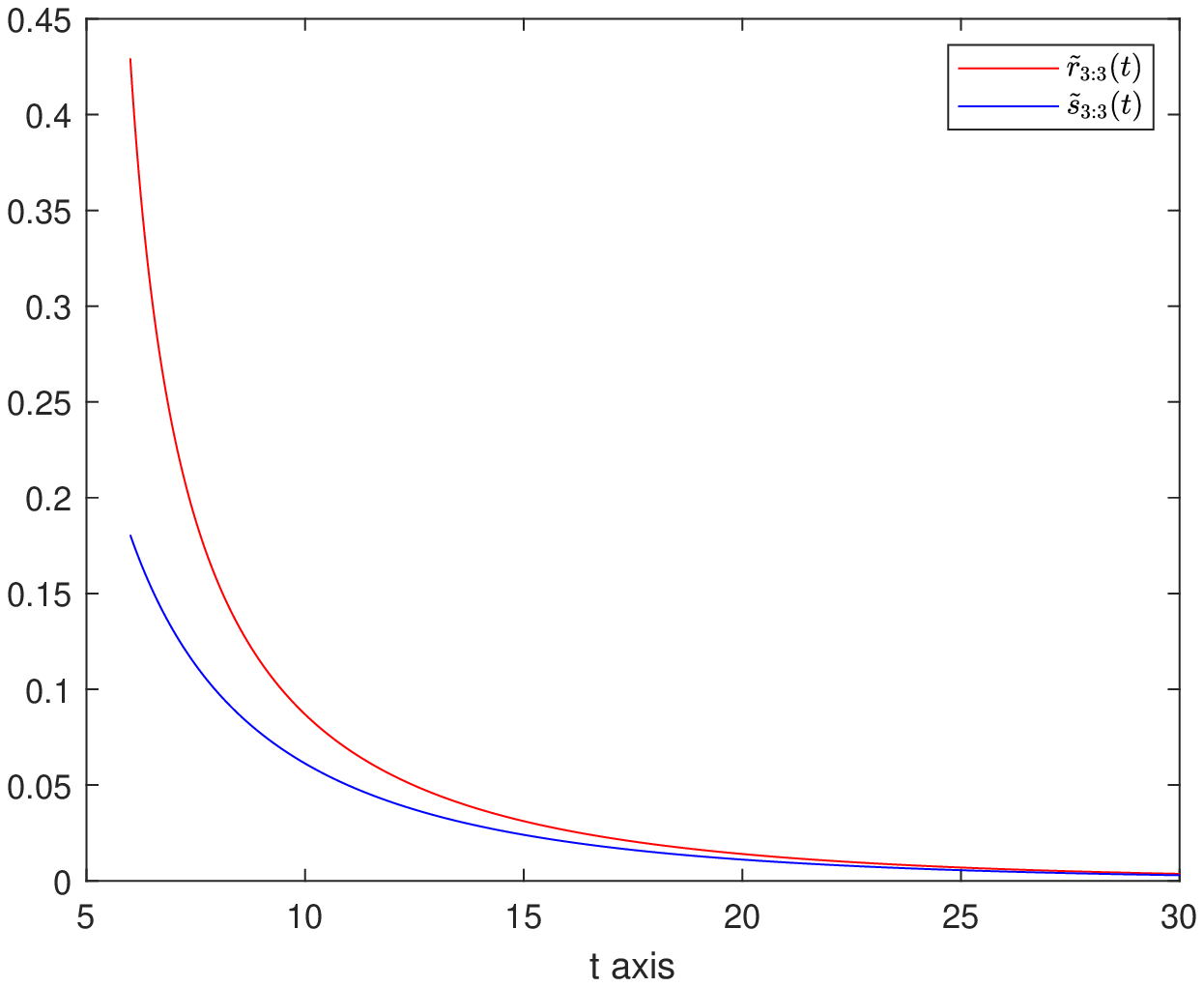}}
            \caption{(a) Graph of $\tilde{r}_{{2:2}}(t)-\tilde{s}_{{2:2}}(t)$ for  Counterexample \ref{ce3.3}. (b) Graphs of $\tilde{r}_{3:3}(t)$ and $\tilde{s}_{3:3}(t)$ for  Example \ref{ex3.4}.}
        \end{center}
    \end{figure}
\end{counterexample}

In analogy to Theorem \ref{th4.2}, the result in
Theorem \ref{th4.4} can also be generalized for
arbitrary $n\ge 3$. Here, we use a single
$T$-transform matrix to get
$(\bm{\mu},1/\bm{\delta};n)$,
$(\bm{\mu},\bm{\psi}(\bm{q});n)$ and
$(1/\bm{\delta},\bm{\psi}(\bm{q});n)$ from
$(\bm{\lambda},1/\bm{\theta};n)$,
$(\bm{\lambda},\bm{\psi}(\bm{p});n)$ and
$(1/\bm{\theta},\bm{\psi}(\bm{p});n)$,
respectively. That is,
$(\bm{\lambda},1/\bm{\theta};n)\gg
(\bm{\mu},1/\bm{\delta};n)$,
$(\bm{\lambda},\bm{\psi}(\bm{p});n)\gg
(\bm{\mu},\bm{\psi}(\bm{q});n)$ and
$(1/\bm{\theta},\bm{\psi}(\bm{p});n)\gg
(1/\bm{\delta},\bm{\psi}(\bm{q});n)$ hold.

\begin{theorem}\label{th4.5}
 Let $(A1)$ and $ (A2)$ hold, and $\bm{\alpha}=\bm{\beta}=\bm{1}_n$. Let $T_w$ be a $T$-transform matrix.
\begin{itemize}
\item[$(i)$] If $\bm{\theta}=\bm{\delta}~(=\theta\bm{1}_n)$,
$\bm{p}=\bm{q}~(=p\bm{1}_n)$ and
$(\bm{\lambda},1/\bm{\theta};n)\in Q_{n}$ hold,
then we have
$(\bm{\mu},1/\bm{\delta};n)=(\bm{\lambda},1/\bm{\theta};n)T_{w}\Rightarrow
U_{n:n}\ge_{rh}V_{n:n}$, provided $(C2)$, $(C3)$,
$(C4)$ and $(C5)$ are satisfied;

\item[$(ii)$] Suppose   $\psi$  satisfies $(C9)$.
Then, for
$\bm{\theta}=\bm{\delta}~(=\theta\bm{1}_n)$ and
$(\bm{\lambda},\bm{\psi}(\bm{p});n)\in M_n$, we
have
$(\bm{\mu},\bm{\psi}(\bm{q});n)=(\bm{\lambda},\bm{\psi}(\bm{p});n)T_{w}\Rightarrow
U_{n:n}\ge_{rh}V_{n:n}$, provided $(C1),~(C8)$
hold;

\item[$(iii)$] Suppose  $\psi$ satisfies $(C9)$.  Then, for
$\bm{\lambda}=\bm{\mu}~(=\lambda\bm{1}_n)$ and
$(1/\bm{\theta},\bm{\psi}(\bm{p});n)\in Q_{n}$,
we have $(1/\bm{\delta},\bm{\psi}(\bm{q});n)
=(1/\bm{\theta},\bm{\psi}(\bm{p});n)T_{w}\Rightarrow
U_{n:n}\ge_{rh}V_{n:n}$, provided  $(C2),$ $(C5)$
hold.

\end{itemize}
\end{theorem}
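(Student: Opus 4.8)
The plan is to exploit the additive structure of the reversed hazard rate in \eqref{eq3.7} together with the fact that a single $T$-transform changes only two of the $n$ columns of the parameter matrix; this reduces the general-$n$ claim to the two-dimensional statement already established in Theorem \ref{th4.4}. I would treat part $(ii)$ in detail, parts $(i)$ and $(iii)$ being identical up to replacing $M_n$ by $Q_n$ and invoking the corresponding conditions.

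First I would record that $U_{n:n}\geq_{rh}V_{n:n}$ is equivalent to the pointwise inequality $\tilde{r}_{n:n}(t)\geq \tilde{s}_{n:n}(t)$ for every $t$, so the goal is to compare two reversed hazard rates. Writing $C=(\bm{\lambda},\bm{\psi}(\bm{p});n)$ and $D=(\bm{\mu},\bm{\psi}(\bm{q});n)$, formula \eqref{eq3.7} (with $\bm{\alpha}=\bm{1}_n$, $\bm{\theta}=\theta\bm{1}_n$, and $p_i=\psi^{-1}(w_i)$) expresses each reversed hazard rate as a sum $\sum_{k=1}^{n} g\left(c_{1k},c_{2k}\right)$ over the columns of the relevant matrix, where $g$ is read off directly from the summand.

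Next I would use $T_{w}=wI_{n}+(1-w)\pi$, with $\pi$ interchanging two columns, say $j$ and $l$. Because $T_w$ leaves every column other than $j,l$ unchanged, the matrices $C$ and $D=CT_w$ agree outside columns $j,l$, so in the difference $\tilde{r}_{n:n}(t)-\tilde{s}_{n:n}(t)=\sum_{k}\left[g(c_{1k},c_{2k})-g(d_{1k},d_{2k})\right]$ the $n-2$ untouched columns cancel and only the $j,l$ terms survive. On these two columns $D$ restricts to $C$ multiplied by the genuine $2\times 2$ $T$-transform $wI_2+(1-w)\pi_0$ (the transposition), and the membership $C\in M_n$ forces the corresponding $2\times 2$ submatrix into $M_2$, since the defining sign condition is inherited coordinatewise. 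Thus the surviving difference is exactly the object $\tilde{r}_{2:2}(t)-\tilde{s}_{2:2}(t)$ of Theorem \ref{th4.4}$(ii)$, evaluated on a matrix in $M_2$ related to its image by a single $T$-transform. Applying Theorem \ref{th4.4}$(ii)$ (equivalently, the verification via Lemma \ref{lem2.2} together with Lemma \ref{lem2.6} under $(C1)$ and $(C8)$), this difference is nonnegative, whence $\tilde{r}_{n:n}(t)\geq \tilde{s}_{n:n}(t)$ for all $t$, i.e. $U_{n:n}\geq_{rh}V_{n:n}$.

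The only genuinely new ingredient is this column-reduction; the analytic core, namely controlling the sign of the two-column expression through $(C2)$--$(C5)$, $(C8)$ and the monotonicity of $k_1$ in Lemma \ref{lem2.6}, is inherited wholesale from Theorem \ref{th4.4}. The step I expect to require the most care is checking that the $M_n$ (respectively $Q_n$) hypothesis and the $T$-transform structure restrict cleanly to the two affected columns, so that the hypotheses of Theorem \ref{th4.4} are truly met on the $2\times 2$ submatrix. It is worth stressing that, unlike the $\leq_{st}$ results, Lemma \ref{lem2.3} cannot be used here: that lemma is built for the product form of the distribution function $F_{n:n}$, whereas the reversed hazard rate is additive, which is precisely why the single-$T$-transform reduction, rather than the product lemma, is the correct instrument.
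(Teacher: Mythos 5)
Your proposal is correct, and its high-level strategy coincides with the paper's: reduce the general-$n$ claim to the $n=2$ case of Theorem \ref{th4.4} by exploiting the fact that a single $T$-transform $T_w=wI_n+(1-w)\pi$ alters only the two columns interchanged by $\pi$. Where you differ is in the mechanism of the reduction. The paper disposes of the proof in one line, ``Using Lemma \ref{lem2.3}, the proof follows similar to Theorem \ref{th4.2},'' i.e.\ it appeals to the product lemma that was used for the $\le_{st}$ results. You instead work directly with the additive representation \eqref{eq3.7} of the reversed hazard rate, cancel the $n-2$ untouched columns in $\tilde{r}_{n:n}(t)-\tilde{s}_{n:n}(t)$, check that the $M_n$ (resp.\ $Q_n$) membership and the $T$-transform structure restrict to the surviving $2\times 2$ submatrix, and then quote Theorem \ref{th4.4}. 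Your closing remark is a substantive and fair criticism: Lemma \ref{lem2.3} as stated produces a pointwise inequality between \emph{products} $\zeta_n(C)\ge\zeta_n(D)$, which is the right object for comparing distribution functions but not, verbatim, for the reversed hazard rate order; what is actually needed is either your additive column-cancellation or, equivalently, the observation that the ratio $F_{n:n}(t)/G_{n:n}(t)$ collapses to the two-column ratio so that Theorem \ref{th4.4} controls its monotonicity. Your version therefore supplies the detail the paper omits and is arguably the more transparent justification; the only point you should still verify explicitly (a technicality the paper also glosses over) is the matching of supports, namely that $\max_i\mu_i\le\max_i\lambda_i$ because each $\mu_i$ is a convex combination of two of the $\lambda$'s, so the ratio of distribution functions is examined on the correct half-line.
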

\begin{proof} Using Lemma \ref{lem2.3}, the proof of the theorem
follows similar to Theorem \ref{th4.2}. Hence, it
is omitted for the sake of brevity.
\end{proof}
Similar to Corollary \ref{th4.2a}, the above
result in Theorem \ref{th4.5} can be extended
from a $T$-transform matrix to a finite number of
$T$-transform matrices having same structure.
This is presented in the following corollary. In
this sequel, we denote $T^{*}_{w}=T_{w_1}\cdots
T_{w_{k}},$ where $k$ is finite.

\begin{corollary}\label{th4.5a}
    Under the assumptions in $(A1)$ and $(A2)$, let $\bm{\alpha}=\bm{\beta}=\bm{1}_n.$
    \begin{itemize}
        \item[$(i)$] Then, for $\bm{\theta}=\bm{\delta}~(=\theta\bm{1}_n)$,
        $\bm{p}=\bm{q}~(=p\bm{1}_n)$ and $(\bm{\lambda},1/\bm{\theta};n)\in Q_{n}$, we get $(\bm{\mu},1/\bm{\delta};n)=(\bm{\lambda},1/\bm{\theta};n)T^{*}_{w}\Rightarrow U_{n:n}\ge_{rh}V_{n:n}$, provided $(C2)$, $(C3)$, $(C4)$ and $(C5)$ hold;

        \item[$(ii)$] Suppose $\psi$ satisfies $(C9)$. Let $(C1)$ and $(C8)$ hold.
        Then, for $\bm{\theta}=\bm{\delta}~(=\theta\bm{1}_n)$ and
        $(\bm{\lambda},\bm{\psi}(\bm{p});n)\in M_n$, we get $(\bm{\mu},\bm{\psi}(\bm{q});n)=(\bm{\lambda},\bm{\psi}(\bm{p});n)$
        $T^{*}_{w}
        \Rightarrow U_{n:n}\ge_{rh}V_{n:n}$;

        \item[$(iii)$] Suppose  $\psi$ satisfies $(C9)$.
        Let $(C2)$ and $(C5)$ be satisfied. Further, assume $\bm{\lambda}=\bm{\mu}~(=\lambda\bm{1}_n)$ and $(1/\bm{\theta},\bm{\psi}(\bm{p});n)\in Q_{n}$. Then, $(1/\bm{\delta},\bm{\psi}(\bm{q});n)=(1/\bm{\theta},\bm{\psi}(\bm{p});n)T^{*}_{w}\Rightarrow U_{n:n}\ge_{rh}V_{n:n}$.

    \end{itemize}
\end{corollary}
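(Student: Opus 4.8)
The plan is to reduce each of the three parts to the single-matrix result already established in Theorem \ref{th4.5}, by exploiting the algebraic closure of same-structure $T$-transform matrices under multiplication. First I would recall that, by the hypothesis of the corollary, $T_{w_1},\dots,T_{w_k}$ all share a common structure, so each is built from one and the same permutation matrix $\pi$ obtained by interchanging two coordinates; in particular $\pi^2=I_n$. Writing $T_{w_j}=w_jI_n+(1-w_j)\pi$ and multiplying two such factors gives $T_{w_1}T_{w_2}=[w_1w_2+(1-w_1)(1-w_2)]I_n+[w_1(1-w_2)+w_2(1-w_1)]\pi$, which is again of the form $wI_n+(1-w)\pi$ with $w=w_1w_2+(1-w_1)(1-w_2)\in[0,1]$ (and $1-w$ equal to the displayed coefficient of $\pi$). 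An easy induction then shows that the full product $T^{*}_{w}=T_{w_1}\cdots T_{w_k}$ is itself a single $T$-transform matrix of the same structure. This is precisely the closure fact invoked just before Corollary \ref{th4.2a}.

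Once this reduction is in place, part $(i)$ follows immediately: under the stated hypotheses $\bm{\theta}=\bm{\delta}$, $\bm{p}=\bm{q}$, $(\bm{\lambda},1/\bm{\theta};n)\in Q_n$ together with $(C2)$, $(C3)$, $(C4)$, $(C5)$, the relation $(\bm{\mu},1/\bm{\delta};n)=(\bm{\lambda},1/\bm{\theta};n)T^{*}_{w}$ is nothing but the single-$T$-transform hypothesis of Theorem \ref{th4.5}$(i)$ with the $T$-transform matrix $T_{w}$ replaced by $T^{*}_{w}$, whence $U_{n:n}\ge_{rh}V_{n:n}$. Parts $(ii)$ and $(iii)$ are handled identically, invoking Theorem \ref{th4.5}$(ii)$ (under $(C1)$, $(C8)$ and $(C9)$) and Theorem \ref{th4.5}$(iii)$ (under $(C2)$, $(C5)$ and $(C9)$), respectively, with $T^{*}_{w}$ in place of the single $T_{w}$.

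I expect no genuine obstacle here, since the argument is purely a closure property followed by a direct appeal to the preceding theorem. The only point deserving care is the verification that the product remains inside the class of $T$-transform matrices, which hinges crucially on the common-structure assumption $\pi_1=\cdots=\pi_k=\pi$: without it the relation $\pi^2=I_n$ cannot be used to collapse the product, and the expression need not reduce to a single $wI_n+(1-w)\pi$, so the appeal to Theorem \ref{th4.5} would break down. This is exactly why the corollary is stated for same-structure matrices, in complete parallel with Corollary \ref{th4.2a}; the differently structured case is instead treated separately by the chain argument of Theorem \ref{th4.3}.
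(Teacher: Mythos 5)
Your proposal is correct and follows essentially the same route as the paper: the paper likewise reduces the corollary to Theorem \ref{th4.5} by invoking the fact that a finite product of same-structure $T$-transform matrices is again a $T$-transform matrix (your explicit computation of $T_{w_1}T_{w_2}$ merely fills in the closure fact that the paper asserts without proof). No gap.
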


Next result states that the results in Corollary
\ref{th4.2} also hold, when we have $T$-transform
matrices with different structures instead of the
same structure. The proof is similar to that of
Theorem \ref{th4.3}, and therefore, it is
omitted.
\begin{theorem}\label{th4.6}
Under $(A1)$ and $(A2)$, let
$\bm{\alpha}=\bm{\beta}=\bm{1}_n.$ Further, let
$T_{w_{1}},\cdots,T_{w_{j}}$ be the $T$-transform
matrices with different structures, for
$j=1,\cdots,k-1,$ where $k\geq 2$.

\begin{itemize}
    \item[$(i)$] Then, for
      $\bm{\theta}=\bm{\delta}~(=\theta\bm{1}_n)$,
      $\bm{p}=\bm{q}~(=p\bm{1}_n)$ and $(\bm{\lambda},1/\bm{\theta};n)\in Q_{n}$, we get $(\bm{\mu},1/\bm{\delta};n)=(\bm{\lambda},1/\bm{\theta};n)T_{w_{1}}\cdots T_{w_{k}}$
      $\Rightarrow U_{n:n}\ge_{rh}V_{n:n}$, provided $(C2)$, $(C3)$, $(C4)$ and $(C5)$ hold;

   \item[$(ii)$] Suppose $\psi$ satisfies $(C9)$. Let $(C1)$ and $(C8)$ hold.
   Then, for $\bm{\theta}=\bm{\delta}~(=\theta\bm{1}_n)$ and $(\bm{\lambda},\bm{\psi}(\bm{p});n)\in M_n$, we get $(\bm{\mu},\bm{\psi}(\bm{q});n)=(\bm{\lambda},\bm{\psi}(\bm{p});n)$
   $T_{w_{1}}\cdots$
   $ T_{w_{k}}$ $\Rightarrow U_{n:n}\ge_{rh}V_{n:n}$;

\item[$(iii)$] Suppose $\psi$ satisfies $(C9)$. For
$\bm{\lambda}=\bm{\mu}~(=\lambda\bm{1}_n)$ and
$(1/\bm{\theta},\bm{\psi}(\bm{p});n)\in Q_{n}$,
we have
$(1/\bm{\delta},\bm{\psi}(\bm{q});n)=(1/\bm{\theta},\bm{\psi}(\bm{p});n)T_{w_{1}}
\cdots T_{w_{k}}\Rightarrow
U_{n:n}\ge_{rh}V_{n:n}$, provided $(C2)$ and
$(C5)$ hold.
\end{itemize}
\end{theorem}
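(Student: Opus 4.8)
The plan is to follow the inductive peeling argument from the proof of Theorem \ref{th4.3}, transporting the reversed hazard rate order through the chain in place of the usual stochastic order, and using Theorem \ref{th4.5} (the single $T$-transform version) as the engine at each stage. Because $T_{w_{1}},\dots,T_{w_{k}}$ have different structures, their product $T_{w_{1}}\cdots T_{w_{k}}$ need not be a single $T$-transform, so Theorem \ref{th4.5} cannot be applied to the whole product at once; instead I would apply one factor at a time and glue the resulting one-step comparisons together by transitivity of $\ge_{rh}$.

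Concretely, for part $(ii)$ I would define the intermediate parameter matrices
\[
(\bm{\lambda}^{(j)},\bm{\psi}(\bm{p}^{(j)});n)=(\bm{\lambda},\bm{\psi}(\bm{p});n)T_{w_{1}}\cdots T_{w_{j}},\qquad j=1,\dots,k-1,
\]
and introduce independent random variables $W^{(j)}_{1},\dots,W^{(j)}_{n}$ whose laws are the ELS models (shape $1$, common scale $\theta$, locations $\lambda^{(j)}_{i}$ and mixing probabilities recovered through $\psi^{-1}$), exactly as in Theorem \ref{th4.3}. Grouping the factors as
\[
(\bm{\mu},\bm{\psi}(\bm{q});n)=\big[(\bm{\lambda},\bm{\psi}(\bm{p});n)T_{w_{1}}\cdots T_{w_{k-1}}\big]T_{w_{k}}=(\bm{\lambda}^{(k-1)},\bm{\psi}(\bm{p}^{(k-1)});n)\,T_{w_{k}}
\]
exhibits $(\bm{\mu},\bm{\psi}(\bm{q});n)$ as the image of $(\bm{\lambda}^{(k-1)},\bm{\psi}(\bm{p}^{(k-1)});n)$ under the single transform $T_{w_{k}}$, and the same grouping at each earlier index shows that consecutive intermediate matrices differ by exactly one $T$-transform. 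Invoking Theorem \ref{th4.5}$(ii)$ (with $\psi$ obeying $(C9)$ and the baseline obeying $(C1)$, $(C8)$) at every link would yield
\[
U_{n:n}\ge_{rh}W^{(1)}_{n:n}\ge_{rh}\cdots\ge_{rh}W^{(k-1)}_{n:n}\ge_{rh}V_{n:n},
\]
and transitivity of the reversed hazard rate order would finish the proof. Parts $(i)$ and $(iii)$ are structurally identical: one replaces $M_{n}$ by $Q_{n}$, exchanges the two parameter rows accordingly, and calls Theorem \ref{th4.5}$(i)$ or $(iii)$ (under $(C2)$--$(C5)$, respectively $(C2)$ and $(C5)$ together with $(C9)$) at each step.

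The step I expect to be the crux is checking that every intermediate matrix $(\bm{\lambda}^{(j)},\bm{\psi}(\bm{p}^{(j)});n)$ remains in $M_{n}$ (respectively $Q_{n}$), because this membership is exactly the hypothesis Theorem \ref{th4.5} requires before it may be applied to the next factor $T_{w_{j+1}}$. A single $T_{w}$ replaces two columns by convex combinations, so for the swapped pair the sign of $(x_{i}-x_{j})(y_{i}-y_{j})$ is merely multiplied by $(2w-1)^{2}\ge 0$ and hence preserved; the sensitive part is the cross terms pairing a transformed column with an untouched one, where membership is \emph{not} automatic for arbitrary comonotone (countermonotone) data. Consequently this is where the standing hypotheses must genuinely be used, and it is the one place calling for a careful argument rather than a routine computation. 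Once the intermediate membership is secured, the one-step applications of Theorem \ref{th4.5} and the transitivity of $\ge_{rh}$ complete the proof exactly as in Theorem \ref{th4.3}.
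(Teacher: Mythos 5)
Your proposal matches the paper's intended argument exactly: the paper omits the proof of this theorem, stating only that it ``is similar to that of Theorem \ref{th4.3},'' and your peeling-one-$T$-transform-at-a-time construction with intermediate variables $W^{(j)}_{n:n}$, single-step applications of Theorem \ref{th4.5}, and transitivity of $\ge_{rh}$ is precisely that argument transported to the reversed hazard rate setting. Your observation that the preservation of membership in $M_n$ (resp.\ $Q_n$) for the intermediate matrices is the genuinely delicate step is well taken --- the paper's Theorem \ref{th4.3} proof asserts this without detail --- but it does not represent a departure from the paper's route.
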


\section{Vector majorization\setcounter{equation}{0}}
This section is devoted to derive sufficient
conditions for the comparison results between the
largest claim amounts of two heterogeneous
portfolios of risks in the sense of the usual
stochastic and the reversed hazard rate
orderings. First, we present the results for the
usual stochastic ordering. The following theorem
shows that under some conditions, the weakly
supermajorized vector of shape parameters leads
to larger largest claim amount in the sense of
the usual stochastic order.  Here, we assume that
both the location and scale parameters are same
and fixed.

\begin{theorem}\label{th3.1a}
    Let $(A1)$ and $(A2)$ hold. Also,
for  $i=1,\cdots,n$, assume
$\theta_{i}=\delta_{i}=\theta$, $p_i=q_i$,
$\lambda_{i}=\mu_{i}=\lambda$ and
$\bm{\alpha},\bm{\beta}\in
\mathcal{E_+}(\mathcal{D_+}), $ $\bm{p}\in
\mathcal{D_+}(\mathcal{E_+})$.
    Then, $\bm{\alpha}\succeq^{w}\bm{\beta}$ implies $  U_{n:n}\le_{st}V_{n:n}.$

\end{theorem}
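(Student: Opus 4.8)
The plan is to recast the usual stochastic order as a pointwise inequality between the two distribution functions and then settle it by a weak--majorization preservation argument driven by Lemma \ref{lem2.7}. Since $\theta_i=\delta_i=\theta$, $\lambda_i=\mu_i=\lambda$ and $p_i=q_i$, the distribution functions in \eqref{eq3.1}--\eqref{eq3.2} collapse to $F_{n:n}(t)=\prod_{i=1}^n[1-p_i(1-s^{\alpha_i})]$ and $G_{n:n}(t)=\prod_{i=1}^n[1-p_i(1-s^{\beta_i})]$, where $s:=F\big(\tfrac{t-\lambda}{\theta}\big)\in(0,1)$ for $t>\lambda$ (both vanish for $t\le\lambda$). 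Because $U_{n:n}\le_{st}V_{n:n}$ is equivalent to $F_{n:n}(t)\ge G_{n:n}(t)$ for every $t$, it suffices to prove, for each fixed $s\in(0,1)$, that $\Phi(\bm\alpha)\ge\Phi(\bm\beta)$ with $\Phi(\bm x):=\prod_{i=1}^n[1-p_i(1-s^{x_i})]$. Passing to logarithms, I would set $\Psi(\bm x)=\ln\Phi(\bm x)=\sum_{i=1}^n g_i(x_i)$ with $g_i(x)=\ln[1-p_i(1-s^{x})]$, so the goal becomes $\Psi(\bm\alpha)\ge\Psi(\bm\beta)$.

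The engine is the observation that $g_i'(x)=\dfrac{p_i s^{x}\ln s}{1-p_i(1-s^{x})}=k_2(x,p_i)$, the function of Lemma \ref{lem2.7}. First I would record the monotonicity needed for the \emph{weak} half of the order: since $\ln s<0$, each $g_i'=k_2(\cdot,p_i)<0$, so $\Psi$ is decreasing in every coordinate. Next I would establish the Schur-type arrangement half. Treating the case $\bm\alpha,\bm\beta\in\mathcal E_+$ and $\bm p\in\mathcal D_+$, the coordinates are sorted so that $\alpha_i$ increases and $p_i$ decreases in $i$; by Lemma \ref{lem2.7}(i) $k_2$ is increasing in $\alpha$ and by Lemma \ref{lem2.7}(ii) it is decreasing in $p$, so both effects push $\partial\Psi/\partial x_i=k_2(\alpha_i,p_i)$ upward and the partial derivatives are \emph{increasing in the index} $i$. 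This is exactly the ordered-cone analogue of the Schur--Ostrowski condition for the separable but non-symmetric $\Psi$.

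With these two facts in hand, I would invoke the differential criterion for weak--supermajorization preservation (the monotone-plus-Schur criterion of \cite{Marshall2011}, applied on the ordered cone): a coordinatewise-decreasing function whose partial derivatives are monotone in the index along $\mathcal E_+$ satisfies $\bm\beta\preceq^{w}\bm\alpha\Rightarrow\Psi(\bm\beta)\le\Psi(\bm\alpha)$. Since $\bm\alpha\succeq^{w}\bm\beta$ is precisely $\bm\beta\preceq^{w}\bm\alpha$, this yields $\Psi(\bm\alpha)\ge\Psi(\bm\beta)$, hence $F_{n:n}\ge G_{n:n}$ and $U_{n:n}\le_{st}V_{n:n}$. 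The parenthetical case $\bm\alpha,\bm\beta\in\mathcal D_+$, $\bm p\in\mathcal E_+$ is handled identically: now $\alpha_i$ decreases and $p_i$ increases in $i$, so $i\mapsto k_2(\alpha_i,p_i)$ is decreasing, which is the correct arrangement condition on the descending cone $\mathcal D_+$.

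The main obstacle is the non-symmetry of $\Psi$: the textbook weak-majorization results presuppose a symmetric (Schur-convex) function, whereas here the heterogeneous weights $p_i$ break permutation invariance. The resolution---and the reason the hypotheses pin down opposite orderings for $\bm\alpha$ and $\bm p$---is that along the fixed ordered cone the Schur requirement reduces to monotonicity of the index map $i\mapsto k_2(\alpha_i,p_i)$, which Lemma \ref{lem2.7} delivers. A secondary point needing care is that only \emph{weak} supermajorization (not equality of the coordinate sums) is assumed, so the coordinatewise decrease of $\Psi$ is genuinely needed and must be checked alongside the arrangement condition rather than discarded.
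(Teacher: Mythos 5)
Your proposal is correct and follows essentially the same route as the paper: differentiate the (log of the) distribution function with respect to the shape parameters, recognize the derivative as $k_2(\alpha_i,p_i)$ from Lemma \ref{lem2.7}, use its monotonicity in $\alpha$ and $p$ together with the opposite orderings of $\bm\alpha$ and $\bm p$ to obtain the Schur/arrangement condition on the ordered cone, and conclude via the decreasing-plus-Schur-convex characterization of weak supermajorization (Theorem A.8 of Marshall--Olkin). Working with $\ln F_{n:n}$ rather than $F_{n:n}$ is only a cosmetic simplification of the same argument.
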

\begin{proof}
First, we consider the case
$\bm{\alpha},\bm{\beta}\in \mathcal{E_+}$ and
$\bm{p}\in \mathcal{D_+}$. Under the given set
up, the distribution function of $U_{n:n}$ is
obtained as
    \begin{eqnarray}\label{eq3.13}
    F_{n:n}(t)
    =\prod_{i=1}^{n}\left[1-p_{i}\left[1-{F}^{\alpha_i}\left(\frac{t-\lambda}{\theta}\right)\right]\right].
    \end{eqnarray}
     After taking the partial derivative of $F_{n:n}(t)$ with respect to $\alpha_{i}$, we
     get
    \begin{eqnarray}\label{eq3.14}
    \frac{\partial F_{n:n}(t)}{\partial \alpha_{i}}
    =G_iF_{n:n}(t),
    \end{eqnarray}
    where $G_i=(p_{i}\ln \left[F(\frac{t-\lambda}{\theta})\right]{F}^{\alpha_i}
    \left(\frac{t-\lambda}{\theta}\right))/(1-p_{i}\left[1-{F}^{\alpha_i}
    \left(\frac{t-\lambda}{\theta}\right)\right]).$
    For $i\leq j,$ we have $p_{i}\ge p_{j},~\alpha_{i}\le\alpha_{j}$. Thus, by Lemma \ref{lem2.7},
    we obtain $G_{i}\le G_{j}.$ Hence, the difference
    $\left[\frac{\partial F_{n:n}(t)}{\partial \alpha_{i}}-\frac{\partial F_{n:n}(t)}
    {\partial \alpha_{j}}\right]$ can be shown to be at most zero. This implies that
     $F_{n:n}(t)$ is Schur-convex with respect to $\bm{\alpha}\in \mathcal{E_+}$. Also, it is decreasing with respect to $\bm{\alpha}\in \mathcal{E_+}$.
     Thus, the rest of the proof is completed by Theorem $A.8$ of
\cite{Marshall2011}. The proof is similar when
$\bm{\alpha},\bm{\beta}\in
\mathcal{D_+}$ and $ \bm{p}\in
\mathcal{E_+}$.
    \end{proof}
In the following theorem, we consider that the
location, scale and shape parameters are same,
but vector valued.
\begin{theorem}\label{th3.2}
Suppose $\psi$ be a differentiable function. Let
$(A1)$ and $(A2)$ hold. Further, take
$\alpha_{i}=\beta_{i}$, $\theta_{i}=\delta_{i}$
and $\lambda_{i}=\mu_{i}$, for $i=1,\cdots,n.$
 If $\psi(.)$ satisfies $(C9)$, then
        $\psi(\bm{p})\succeq_{w}\psi(\bm{q})\Rightarrow U_{n:n}\ge_{st}V_{n:n}$, provided $\bm{\theta},\bm{\lambda},\bm{\alpha}, \bm{p}, \bm{q}\in \mathcal{E_+}(\mathcal{D_+})$.

\end{theorem}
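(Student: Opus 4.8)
The plan is to exploit that the two portfolios differ only through the Bernoulli probabilities. Since $\alpha_i=\beta_i$, $\theta_i=\delta_i$ and $\lambda_i=\mu_i$, the quantities $a_i:=1-F^{\alpha_i}((t-\lambda_i)/\theta_i)\in[0,1]$ are common to both portfolios, so $F_{n:n}(t)=\prod_{i=1}^n(1-p_ia_i)$ and $G_{n:n}(t)=\prod_{i=1}^n(1-q_ia_i)$. Putting $w_i=\psi(p_i)$ and $v_i=\psi(q_i)$, both distribution functions are the single map $\Phi(\bm{w}):=\prod_{i=1}^n(1-\psi^{-1}(w_i)a_i)$ evaluated at $\psi(\bm{p})$ and $\psi(\bm{q})$. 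As $\psi(\bm{p})\succeq_w\psi(\bm{q})$ means $\psi(\bm{q})\preceq_w\psi(\bm{p})$, proving $U_{n:n}\ge_{st}V_{n:n}$, i.e.\ $\Phi(\psi(\bm{p}))\le\Phi(\psi(\bm{q}))$, reduces by the weak-submajorization analogue of the Marshall--Olkin preservation result used in Theorem \ref{th3.1a} to showing that, on the cone $\mathcal{E}_+$ (resp.\ $\mathcal{D}_+$), $\Phi$ is decreasing and Schur-concave; for a decreasing Schur-concave $\phi$ one has $\psi(\bm{q})\preceq_w\psi(\bm{p})\Rightarrow\phi(\psi(\bm{p}))\le\phi(\psi(\bm{q}))$.

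The monotonicity is routine: $\partial\Phi/\partial w_i=-(\psi^{-1})'(w_i)\,a_i\,(1-\psi^{-1}(w_i)a_i)^{-1}\Phi(\bm{w})\le 0$, since $\psi$ increasing gives $(\psi^{-1})'>0$ and $1-p_ia_i>0$. For Schur-concavity I would use the differential criterion on the ordered cone, which amounts to controlling the sign of $(w_i-w_j)\bigl(\partial\Phi/\partial w_i-\partial\Phi/\partial w_j\bigr)=(w_i-w_j)(h_j-h_i)\Phi(\bm{w})$, where $h_i:=a_i\,[\psi'(p_i)(1-p_ia_i)]^{-1}$. On $\mathcal{E}_+$ we have $w_i\le w_j$ for $i<j$, so the needed conclusion is $h_i\le h_j$, i.e.\ $h_i$ nondecreasing in the index. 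Here the common ordering of the parameters enters: since $\bm{\alpha},\bm{\lambda},\bm{\theta}$ point the same way, $a_i$ is monotone in $i$ (increasing on $\mathcal{E}_+$), hence so is $a_i/(1-p_ia_i)$, while the factor $(\psi^{-1})'(w_i)=1/\psi'(p_i)$ is governed by the convexity hypothesis (C9).

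I expect the monotonicity of $h_i$ to be the main obstacle. Writing $h_i=H(w_i,a_i)$ with $H(w,a)=(\psi^{-1})'(w)\,a\,(1-\psi^{-1}(w)a)^{-1}$, one checks that $H$ is increasing in $a$, but the two effects in $w$ compete: $a/(1-pa)$ increases while $(\psi^{-1})'(w)=1/\psi'(p)$ decreases under (C9). Thus $h_i\le h_j$ is a genuine one-variable estimate, not a term-by-term comparison, and I would isolate it as a lemma in the style of Lemmas \ref{lem2.6}--\ref{lem2.7}, verifying that the increase carried by the common factor $a_i$ dominates along the admissible parameter configurations. Reconciling the direction of the curvature imposed by (C9) with the weak-submajorization direction is the delicate point I would scrutinise most carefully; once $h_i$ is shown monotone, Schur-concavity and hence the preservation theorem deliver $\Phi(\psi(\bm{p}))\le\Phi(\psi(\bm{q}))$, and the $\mathcal{D}_+$ case follows by reversing all the orderings.
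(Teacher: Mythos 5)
Your setup and reduction are exactly the paper's: both of you write $F_{n:n}(t)=\Phi(\psi(\bm{p}))$ with $\Phi(\bm{w})=\prod_i\bigl(1-\psi^{-1}(w_i)a_i\bigr)$, aim to show $\Phi$ is decreasing and Schur-concave on the ordered cone, and invoke Theorem A.8 of Marshall--Olkin. The genuine problem is precisely the step you flag and postpone: the monotonicity $h_i\le h_j$ of $h_i=(\psi^{-1})'(w_i)\,a_i/(1-\psi^{-1}(w_i)a_i)$ for $i\le j$. This cannot be rescued by a lemma in the style of Lemmas \ref{lem2.6}--\ref{lem2.7}, because it is false. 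Take $\psi(p)=e^p$ (which satisfies $(C9)$) and all location, scale and shape parameters constant, so $a_i\equiv a$; then $h_i=e^{-p_i}a/(1-p_ia)$, and $\tfrac{d}{dp}\bigl[e^{-p}/(1-pa)\bigr]$ has the sign of $a(1+p)-1$, which is negative for every $p\in(0,1)$ once $a<1/2$. Hence for small $a$ (large $t$) the inequality reverses and $\Phi$ is locally Schur-convex, not Schur-concave. Your diagnosis of the competition is right: $(C9)$ makes $\psi^{-1}$ \emph{concave}, so $(\psi^{-1})'$ decreases and pulls against the increase of $a_i/(1-p_ia_i)$; the ``dominance'' you hope for does not hold. (The paper's own proof asserts at this point that $\psi^{-1}$ is ``also increasing and convex'' and deduces $\partial\psi^{-1}(w_i)/\partial w_i\le\partial\psi^{-1}(w_j)/\partial w_j$; that assertion is incorrect, so the paper does not close this gap either.)

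In fact no strategy can close it, because the statement fails under these hypotheses. With $\psi(p)=e^p$, $n=2$, constant $\bm{\alpha},\bm{\lambda},\bm{\theta}$, $\bm{p}=(0.1,0.9)$ and $\bm{q}=(0.55,0.55)$ (all vectors in $\mathcal{E}_+$), one has $e^{0.55}\le e^{0.9}$ and $2e^{0.55}\approx 3.467\le e^{0.1}+e^{0.9}\approx 3.565$, so $\psi(\bm{p})\succeq_w\psi(\bm{q})$; yet $F_{2:2}(t)-G_{2:2}(t)=(1-0.1a)(1-0.9a)-(1-0.55a)^2=a(0.1-0.2125a)>0$ for $0<a<0.47$, i.e.\ for all sufficiently large $t$, which contradicts $U_{2:2}\ge_{st}V_{2:2}$. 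So the ``delicate point'' you isolated is not a technicality to be checked but the place where both your argument and the theorem break down; a repair would require either a hypothesis making $\psi^{-1}$ convex (e.g.\ $\psi$ increasing and concave) or a strengthened majorization assumption.
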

\begin{proof}
We take $\bm{\theta},\bm{\lambda},\bm{\alpha},
\bm{p}, \bm{q}\in \mathcal{E_+}.$ The proof of
the other case is similar. Note that the
distribution function of $U_{n:n}$ is
    \begin{eqnarray}\label{eq3.1a}
    F_{n:n}(t)
    =\prod_{i=1}^{n}\left[1-\psi^{-1}(w_{i})\left[1-{F}^{\alpha_i}\left(\frac{t-\lambda_i}{\theta_i}\right)\right]\right],
    \end{eqnarray}
    where $p_i=\psi^{-1}(w_i)$, for $i=1,\cdots,n.$ Differentiating $F_{n:n}(t)$
with respect to $w_{i}$ partially, we obtain
    \begin{eqnarray}\label{eq3.5.}
    \frac{\partial F_{n:n}(t)}{\partial w_{i}}
    =-\frac{\partial \psi^{-1}(w_{i})}{\partial w_{i}}C_iF_{n:n}(t),
    \end{eqnarray}
    where $C_i=([1-{F}^{\alpha_i}(\frac{t-\lambda_i}{\theta_i})])/(
    1-\psi^{-1}(w_{i})[1-{F}^{\alpha_i}(\frac{t-\lambda_i}{\theta_i})]).$
    Under the assumptions made, for $i\leq j,$ we have
$w_{i}\le
w_{j},~\lambda_{i}\leq\lambda_{j},~\theta_{i}\le\theta_{j},~\alpha_{i}\le\alpha_{j}$
and  $\psi(w)$ is increasing, convex. Therefore,
$\psi^{-1}(.)$ is also increasing and  convex.
This implies $\psi^{-1}(w_{i})\le
\psi^{-1}(w_{j})$. Since $\psi^{-1}(.)$ is
convex, we can write $\frac{\partial
\psi^{-1}(w_{i})}{\partial w_{i}}\le
\frac{\partial \psi^{-1}(w_{j})}{\partial
w_{j}}$. Further, by Lemma \ref{lem2.6}, we have
$C_{i}\le C_{j}.$ Combining these two
inequalities, it can be checked that Equation
(\ref{eq3.5.}) is negative, and the difference
$\left[\frac{\partial F_{n:n}(t)}{\partial
w_{i}}-\frac{\partial F_{n:n}(t)}{\partial
w_{j}}\right]$ is greater than or equals to zero.
Hence, by using Lemma $3.1$ of
\cite{kundu2016some}, $F_{n:n}(t)$ is
Schur-concave with respect to $\bm{w}\in
\mathcal{E_+}$. Also, $F_{n:n}(t)$ is decreasing.
Now, applying Theorem $A.8$ of
\cite{Marshall2011}, we get the required result.
\end{proof}

\begin{remark}
Theorem \ref{th3.2} demonstrates that more heterogeneity among transformed occurrence probabilities with respect to the weakly submajorization order provides better tail function of the largest claim amount.
\end{remark}

%

The following theorem shows that the usual stochastic ordering holds between the largest claim amounts, when the reciprocal of the scale parameters are associated with the $p$-majorization and reciprocal majorization orders. We assume that the shape parameters are less than or equal to $1$.
\begin{theorem}\label{th3.1}
    Suppose $(A1)$ and $(A2)$ hold. For  $i=1,\cdots,n$, let $\alpha_{i}=\beta_{i}=\alpha\leq 1$, $p_i=q_i$ and $\lambda_{i}=\mu_{i}$. Take $\bm{p},\bm{\theta},\bm{\delta},\bm{\lambda}\in \mathcal{E_+}(\mathcal{D_+})$.
    \begin{itemize}
    \item[$(i)$] If  $(C2)$ holds, then $1/\bm{\theta}\succeq^{p}1/\bm{\delta}$ implies $  U_{n:n}\ge_{st}V_{n:n};$
        \item[$(ii)$] If  $(C3)$ holds, then $1/\bm{\theta}\succeq^{rm}1/\bm{\delta}$ implies $ U_{n:n}\ge_{st}V_{n:n}.$
    \end{itemize}
\end{theorem}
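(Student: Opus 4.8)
```latex
The plan is to establish the usual stochastic order $U_{n:n} \ge_{st} V_{n:n}$ by showing that the common cumulative distribution function of the maxima, viewed as a function of the reciprocal scale vector $1/\bm{\theta}$, is both \emph{decreasing} and possesses the appropriate \emph{Schur-type} monotonicity with respect to the relevant preorder. Since $\bm{\alpha}=\bm{\beta}=\alpha\bm{1}_n$, $\bm{p}=\bm{q}$ and $\bm{\lambda}=\bm{\mu}$, the two maxima share the same functional form, so the problem reduces entirely to comparing $F_{n:n}$ evaluated at $1/\bm{\theta}$ versus $1/\bm{\delta}$. Writing $m_i = 1/\theta_i$, I would first record
\begin{eqnarray*}
F_{n:n}(t)=\prod_{i=1}^{n}\left[1-p_{i}\left[1-{F}^{\alpha}\left((t-\lambda_i)m_i\right)\right]\right],
\end{eqnarray*}
and compute the partial derivative $\partial F_{n:n}(t)/\partial m_i$. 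Because $\alpha \le 1$, the factor $\alpha F^{\alpha-1}(\cdot) \le$-type terms will be controlled, and the sign and monotonicity of the derivative will hinge on an expression of the form $(t-\lambda_i)\,r\!\left((t-\lambda_i)m_i\right)$ weighted by a positive ratio; here condition $(C2)$ (that $xr(x)$ is decreasing) enters decisively.

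For part $(i)$, the preorder is the $p$-larger order $1/\bm{\theta}\succeq^{p}1/\bm{\delta}$. The standard route is through a change of variables to the logarithmic scale: $p$-majorization of a positive vector is ordinary majorization of its componentwise logarithm. I would therefore set $u_i = \ln m_i$ and examine $F_{n:n}$ as a function of $\bm{u}$, aiming to show it is Schur-monotone in $\bm{u}$ in the direction forcing the desired inequality, using the characterization that a differentiable symmetric function is Schur-convex (concave) iff $(u_i-u_j)\big(\partial/\partial u_i - \partial/\partial u_j\big)$ has constant sign. The chain rule gives $\partial F_{n:n}/\partial u_i = m_i\,\partial F_{n:n}/\partial m_i$, so the extra factor $m_i$ is exactly what converts the derivative into an $x r(x) = (t-\lambda_i) m_i\, r\!\left((t-\lambda_i)m_i\right)$ quantity, matching $(C2)$. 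Combined with the ordering $\bm{p},\bm{\lambda}\in\mathcal{E}_+(\mathcal{D}_+)$ and $(t-\lambda_i)m_i$ monotone, one shows $F_{n:n}$ is decreasing and Schur-convex (or concave, in the $\mathcal{D}_+$ case) in $\bm{u}$, and then Theorem~A.8 of \cite{Marshall2011} delivers the implication from the majorization of $\bm{u}$.

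For part $(ii)$, the preorder is reciprocal majorization $1/\bm{\theta}\succeq^{rm}1/\bm{\delta}$, i.e.\ ordinary majorization of the vector of reciprocals $(\theta_1,\dots,\theta_n)=(1/m_1,\dots,1/m_n)$. Here the natural substitution is $v_i = 1/m_i = \theta_i$, and I would analyze $F_{n:n}$ as a function of $\bm{v}$, invoking the same Schur-monotonicity characterization. The chain rule now produces $\partial F_{n:n}/\partial v_i = -m_i^2\,\partial F_{n:n}/\partial m_i$, and the factor $m_i^2$ upgrades the relevant quantity to $x^2 r(x)$ with $x=(t-\lambda_i)m_i$, which is precisely why $(C3)$ (that $x^2 r(x)$ is decreasing) rather than $(C2)$ is the hypothesis in this part. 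With $\bm{\theta},\bm{\delta}\in\mathcal{E}_+(\mathcal{D}_+)$ and the monotonicity of the remaining factors, I would verify the constant-sign condition on $(v_i-v_j)\big(\partial/\partial v_i - \partial/\partial v_j\big)F_{n:n}$ and again close with Theorem~A.8 of \cite{Marshall2011}.

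The main obstacle I anticipate is verifying the sign of the pairwise difference of partial derivatives once transformed to the correct coordinates: the derivative splits into a product of a monotone ``$x^k r(x)$''-type factor and a positive probability ratio of the form $p_i F^{\alpha}(\cdot)/\big(1-p_i[1-F^{\alpha}(\cdot)]\big)$, and one must show these two factors are \emph{simultaneously} ordered across indices $i \le j$ so that their product (summed into $F_{n:n}$) has the required monotonicity. This is where the arrangement hypotheses $\bm{p},\bm{\lambda}\in\mathcal{E}_+(\mathcal{D}_+)$ and $\alpha\le 1$ must be used carefully, likely via a lemma analogous to Lemma~\ref{lem2.6} establishing monotonicity of the probability ratio in its arguments, to prevent the two factors from pulling in opposite directions.
```
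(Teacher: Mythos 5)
Your plan follows the paper's own proof essentially step for step: the paper likewise rewrites $F_{n:n}$ in terms of $s_i=\ln(1/\theta_i)$ for part $(i)$ so that the derivative produces the $xr(x)$ factor controlled by $(C2)$, differentiates directly in $\theta_i$ for part $(ii)$ so that $x^2r(x)$ and $(C3)$ appear, orders the probability ratio across indices via Lemma \ref{lem2.6}, and closes with Theorem A.8 of \cite{Marshall2011} (together with a lemma of \cite{hazra2017stochastic} for the reciprocal-majorization step). The only small slip is your gloss of reciprocal majorization as ``ordinary majorization of the vector of reciprocals'' --- by Definition \ref{def2.2}(e) it is a weak, partial-sum order, which is exactly why the decreasingness of $F_{n:n}$ in $\bm{\theta}$ (which you do note) is genuinely needed and not just Schur-monotonicity.
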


\begin{proof}
    $(i)$ We prove the result when $\bm{p},\bm{\theta},\bm{\delta},\bm{\lambda}\in \mathcal{E_+}$. The proof for $\bm{p},\bm{\theta},\bm{\delta},\bm{\lambda}\in\mathcal{D_+}$ is analogous. Under the present set up, the distribution function of $U_{n:n}$ is
    \begin{eqnarray}\label{eq3.3}
    F_{n:n}(t)=\prod_{i=1}^{n}\left[1-p_{i}\left[1-{F}^{\alpha}\left({(t-\lambda_i)e^{s_i}}\right)\right]\right],
    \end{eqnarray}
    where $s_i=-\ln\theta_i$, for $i=1,\cdots,n.$  To prove the required result,
    we consider $\chi_1(e^{\boldsymbol{s}})=\prod_{i=1}^{n}\left[1-p_i\left[1-{F}^{\alpha}\left({(t-\lambda_i)e^{s_i}}\right)\right]\right].$
    Differentiating  $\chi_1(e^{\boldsymbol{s}})$ with respect to $s_{i}$ partially, we have
    \begin{eqnarray}\label{eq3.4}
    \frac{\partial \chi_1(e^{\boldsymbol{s}})}{\partial s_{i}}=\left[\frac{\alpha F^{(\alpha-1)}(x)[1-F(x)]}{1-F^{\alpha}(x)}\right]_{x=(t-\lambda_i)e^{s_i}}\left[xr(x)\right]_{x=(t-\lambda_i)e^{s_i}}A_i\chi_1(e^{\boldsymbol{s}}),
    \end{eqnarray} where $A_i=(\left[1-F^{\alpha}\left({(t-\lambda_i)e^{s_i}}\right)\right])/(1-p_i[1-F^{\alpha}\left({(t-\lambda_i)e^{s_i}}\right)])$.
    First, we consider the case $s_i\ge s_j,\lambda_i\leq\lambda_j,p_i\leq p_j$.
    Using the given assumptions and Lemma \ref{lem2.6}, the inequality $A_{i}\le A_{j}$ holds.
    Combining this with Lemma $3$ of \cite{balakrishnan2015stochastic} and the decreasing property of $xr(x)$, we obtain
    \begin{eqnarray}\label{eq3.5}
    \left[\frac{\partial \chi_1(e^{\boldsymbol{s}})}{\partial s_{i}}-\frac{\partial \chi_1(e^{\boldsymbol{s}})}{\partial s_{j}}\right]\leq 0.
    \end{eqnarray}
    From Lemma $2.1$ of \cite{khaledi2002dispersive}, (\ref{eq3.5}) implies that $\chi_1(e^{\boldsymbol{s}})$ is Schur-concave  with respect to $\bm{s}\in \mathcal{D_+}$. Further, $\chi_1(e^{\boldsymbol{s}})$ is increasing  with respect to $\bm{s}$. Thus, by Theorem $A.8$ of \cite{Marshall2011}, the rest of the proof is completed. \\
    $(ii)$
    The cumulative distribution function of $U_{n:n}$ is
    \begin{eqnarray}\label{eq3.6}
    F_{n:n}(t)
    =\prod_{i=1}^{n}\left[1-p_i\left[1-{F}^{\alpha}\left(\frac{t-\lambda_i}{\theta_i}\right)\right]\right].
    \end{eqnarray}
    Denote $\chi_2(1/\boldsymbol{\theta})=\prod_{i=1}^{n}[1-p_i[1-{F}^{\alpha}
    (\frac{t-\lambda_i}{\theta_i})]].$
    Differentiating, $\chi_2(1/\boldsymbol{\theta})$ with
    respect to $\theta_{i}$ partially, we get
    \begin{eqnarray}\label{eq3.6a}
    \frac{\partial \chi_2(1/\boldsymbol{\theta})}{\partial \theta_{i}}=-\left[\frac{\alpha F^{(\alpha-1)}(x)[1-F(x)]}{1-F^{\alpha}(x)}\right]_{x=\left(\frac{t-\lambda_i}{\theta_i}\right)}
    \frac{\left[x^2r(x)\right]_{x=\left(\frac{t-\lambda_i}{\theta_i}\right)}}
    {(t-\lambda_i)}B_i\chi_2(1/\boldsymbol{\theta}),
    \end{eqnarray}
    where $B_i=([1-F^{\alpha}(\frac{t-\lambda_i}{\theta_i})])/(1-p_i[1-F^{\alpha}(\frac{t-\lambda_i}{\theta_i})
    ])$.  Now, similar to the arguments used in the proof of Part $(i)$ and by Lemma $1$ of \cite{hazra2017stochastic}, it can be established that $\chi_{2}(1/\boldsymbol{\theta})$ is decreasing and Schur-concave with respect to $\bm{\theta}\in\mathcal{E_+}(\mathcal{D_+})$. Hence, the proof is completed.
\end{proof}

In the next, we prove ordering result between two largest claim amounts in the sense of the usual stochastic order, when the vectors of the location parameters are connected with the weak submajorization order. We consider that the scale and risk parameters are same, but vector valued. The proof is omitted since it follows similarly to that of Theorem \ref{th3.1}$(i)$.
\begin{theorem}\label{th3.2.}
    Let $(A1)$, $(A2)$ and $(C2)$ hold. If $\bm{\theta},\bm{\lambda},\bm{\mu}, \bm{p}\in \mathcal{D_+}(\mathcal{E_+})$,  then we have
    $\bm{\lambda}\succeq_{w}\bm{\mu}\Rightarrow U_{n:n}\ge_{st}V_{n:n},$ provided  $\bm{\alpha}=\bm{\beta}=\alpha
    \bm{1}_n~(\alpha\leq 1)$, $\bm{\theta}=\bm{\delta}$ and $\bm{p}=\bm{q}$.
\end{theorem}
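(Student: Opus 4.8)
The plan is to follow the proof of Theorem~\ref{th3.1}$(i)$ almost verbatim, now treating the common vector of location parameters $\bm{\lambda}$ as the only free argument. Since $\bm{\alpha}=\bm{\beta}=\alpha\bm{1}_n$, $\bm{\theta}=\bm{\delta}$ and $\bm{p}=\bm{q}$, the functions $F_{n:n}$ and $G_{n:n}$ in \eqref{eq3.1}--\eqref{eq3.2} are the \emph{same} functional of the location vector, evaluated at $\bm{\lambda}$ and $\bm{\mu}$ respectively. Hence it suffices to show that, writing $\varphi(\bm{\lambda}):=F_{n:n}(t)$, the map $\bm{\lambda}\mapsto\varphi(\bm{\lambda})$ is decreasing and Schur-concave on the relevant chamber. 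Because $\bm{\lambda}\succeq_{w}\bm{\mu}$ means $\bm{\mu}\preceq_{w}\bm{\lambda}$, and a decreasing Schur-concave function reverses the weak submajorization order, this will give $\varphi(\bm{\mu})\ge\varphi(\bm{\lambda})$, i.e. $F_{n:n}(t)\le G_{n:n}(t)$ for every $t$, which is precisely $U_{n:n}\ge_{st}V_{n:n}$. As in Theorem~\ref{th3.1}$(i)$ I would treat the case $\bm{\theta},\bm{\lambda},\bm{\mu},\bm{p}\in\mathcal{E_+}$ in detail, the $\mathcal{D_+}$ case being analogous.

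Second, I would differentiate. With $x_i=(t-\lambda_i)/\theta_i>0$ on the support $t>\max_i\lambda_i$, a direct computation gives $\partial\varphi(\bm{\lambda})/\partial\lambda_i=-D_i\,\varphi(\bm{\lambda})$, where
\begin{equation*}
D_i=\frac{\alpha p_i}{t-\lambda_i}\,[x_ir(x_i)]\,\frac{F^{\alpha-1}(x_i)\,[1-F(x_i)]}{1-p_i\,[1-F^{\alpha}(x_i)]}\ge 0 .
\end{equation*}
The factor $x_ir(x_i)$ is produced on purpose by rewriting $r(x_i)/\theta_i=x_ir(x_i)/(t-\lambda_i)$, exactly so that condition $(C2)$ becomes applicable; this is the one place where differentiating in the location rather than in the (log-)scale forces the extra positive factor $1/(t-\lambda_i)$ compared with Theorem~\ref{th3.1}$(i)$. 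Non-negativity of $D_i$ immediately yields that $\varphi$ is decreasing in each $\lambda_i$, settling the monotonicity requirement.

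Third, and this is the crux, I would establish Schur-concavity by signing $\partial_{\lambda_i}\varphi-\partial_{\lambda_j}\varphi=-(D_i-D_j)\varphi$. For $i<j$ in $\mathcal{E_+}$ one has $\lambda_i\le\lambda_j$, $\theta_i\le\theta_j$, $p_i\le p_j$, and hence $x_i\ge x_j$. I would then compare $D_i$ and $D_j$ factor by factor: $p_i\le p_j$ and $1/(t-\lambda_i)\le 1/(t-\lambda_j)$ are immediate; $x_ir(x_i)\le x_jr(x_j)$ follows from $(C2)$; the numerator $F^{\alpha-1}(x)[1-F(x)]$ is decreasing in $x$ because $\alpha\le 1$ (its derivative in $u=F(x)$ equals $u^{\alpha-2}[(\alpha-1)-\alpha u]<0$), so it is smaller at index $i$; and the remaining factor $[1-F^{\alpha}(x_i)]/[1-p_i(1-F^{\alpha}(x_i))]$ is controlled by Lemma~\ref{lem2.6}, which makes it increasing in $p_i$ and decreasing in $F(x_i)$, again smaller at index $i$. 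As all factors are positive, these comparisons combine (cf. Lemma~$3$ of \cite{balakrishnan2015stochastic}) to give $D_i\le D_j$, whence $(\lambda_i-\lambda_j)(\partial_{\lambda_i}\varphi-\partial_{\lambda_j}\varphi)\le 0$. Lemma~$2.1$ of \cite{khaledi2002dispersive} then converts this into Schur-concavity of $\varphi$ on $\mathcal{E_+}$, and Theorem~$A.8$ of \cite{Marshall2011} finishes the argument. I expect the only delicate point to be the bookkeeping of these several monotone factors and the verification that $(C2)$ together with $\alpha\le 1$ forces their product to be monotone in the index; the passage through Theorem~$A.8$ and the reduction of the $\mathcal{D_+}$ case are routine.
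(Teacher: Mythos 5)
Your argument is correct and is precisely the route the paper intends: the paper omits this proof, saying only that it ``follows similarly to that of Theorem \ref{th3.1}$(i)$'', and your computation of $\partial F_{n:n}(t)/\partial\lambda_i=-D_i\,F_{n:n}(t)$ followed by the decreasing-plus-Schur-concave reduction through Theorem $A.8$ of \cite{Marshall2011} is exactly that argument carried out in the location parameter. One bookkeeping slip to fix: listing both $F^{\alpha-1}(x)[1-F(x)]$ and $[1-F^{\alpha}(x)]/[1-p(1-F^{\alpha}(x))]$ as separate factors of $D_i$ introduces a spurious $[1-F^{\alpha}(x)]$; you should either group the fraction as $F^{\alpha-1}(x)[1-F(x)]\cdot\bigl(1-p[1-F^{\alpha}(x)]\bigr)^{-1}$ (each piece is nonnegative and smaller at index $i$) or as $\frac{F^{\alpha-1}(x)[1-F(x)]}{1-F^{\alpha}(x)}\cdot k_1(\alpha,F(x),p)$ with Lemma \ref{lem2.6} applied to the second piece --- under either consistent grouping $D_i\le D_j$ still follows, so the conclusion stands.
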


Below, we obtain two different sets of sufficient conditions for the existence of the usual stochastic order between the largest claim amounts arising from heterogeneous portfolios of risks for the case of different location,  scale and  risk  parameters.

\begin{theorem}\label{th3.12}
    Let a function $\psi$ be differentiable. Suppose $(A1)$ and $(A2)$ hold. Further, let $\bm{\alpha}=\bm{\beta}=\alpha
    \bm{1}_n~(\alpha\leq 1)$ and $\bm{\theta},\bm{\lambda},\bm{\mu},\bm{\delta}, \bm{p}, \bm{q}\in \mathcal{E_+}(\mathcal{D_+})$.
    \begin{itemize}
    \item[$(i)$] If  $(C2)$ and $(C9)$ hold, then $1/\bm{\theta}\succeq^{p}1/\bm{\delta}$, $\psi(\bm{p})\succeq_{w}\psi(\bm{q})$ and  $\bm{\lambda}\succeq_{w}\bm{\mu}$ imply $  U_{n:n}\ge_{st}V_{n:n};$
        \item[$(ii)$] If  $(C2)$, $(C3)$ and $(C9)$ hold, then $1/\bm{\theta}\succeq^{rm}1/\bm{\delta},$ $\psi(\bm{p})\succeq_{w}\psi(\bm{q})$ and  $\bm{\lambda}\succeq_{w}\bm{\mu}$ imply $ U_{n:n}\ge_{st}V_{n:n}.$
    \end{itemize}
\end{theorem}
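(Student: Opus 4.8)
The plan is to prove the result by a chaining argument that decouples the three sources of heterogeneity---scale, occurrence probability, and location---and treats each one separately by invoking the single-parameter results already established. I would work throughout with the case $\bm{\theta},\bm{\lambda},\bm{\mu},\bm{\delta},\bm{p},\bm{q}\in\mathcal{E}_+$, the case $\mathcal{D}_+$ being entirely analogous. Since $\ge_{st}$ is transitive, it suffices to interpolate between $U_{n:n}$ and $V_{n:n}$ through two intermediate maxima, changing exactly one parameter vector at a time.

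Concretely, I would first introduce a maximum $W_{n:n}^{(1)}$ whose individual risks carry the common shape $\alpha$, the original occurrence vector $\bm{p}$, the original location vector $\bm{\lambda}$, but the new scale vector $\bm{\delta}$; and a second maximum $W_{n:n}^{(2)}$ carrying $\alpha$, $\bm{q}$, $\bm{\lambda}$ and $\bm{\delta}$. Thus $U_{n:n}$ and $W_{n:n}^{(1)}$ differ only in scale, $W_{n:n}^{(1)}$ and $W_{n:n}^{(2)}$ differ only in occurrence probability, and $W_{n:n}^{(2)}$ and $V_{n:n}$ differ only in location.

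The three comparisons would then follow from the earlier theorems. For part $(i)$, since $1/\bm{\theta}\succeq^{p}1/\bm{\delta}$ and all relevant vectors lie in $\mathcal{E}_+$, condition $(C2)$ together with Theorem \ref{th3.1}$(i)$ gives $U_{n:n}\ge_{st}W_{n:n}^{(1)}$; for part $(ii)$ this same step instead uses $1/\bm{\theta}\succeq^{rm}1/\bm{\delta}$ with $(C3)$ and Theorem \ref{th3.1}$(ii)$. Next, because $\psi$ satisfies $(C9)$ and $\psi(\bm{p})\succeq_{w}\psi(\bm{q})$ with $\bm{\delta},\bm{\lambda},\bm{p},\bm{q}\in\mathcal{E}_+$, Theorem \ref{th3.2} yields $W_{n:n}^{(1)}\ge_{st}W_{n:n}^{(2)}$. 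Finally, since $\bm{\lambda}\succeq_{w}\bm{\mu}$ with $\bm{\delta},\bm{\lambda},\bm{\mu},\bm{q}\in\mathcal{E}_+$ and $(C2)$ holds, Theorem \ref{th3.2.} delivers $W_{n:n}^{(2)}\ge_{st}V_{n:n}$. Transitivity of $\ge_{st}$ then produces $U_{n:n}\ge_{st}V_{n:n}$, establishing both parts.

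The bookkeeping I would be most careful about is verifying that every intermediate maximum satisfies the cone hypotheses of the theorem invoked at that step. This is precisely where the blanket assumption $\bm{\theta},\bm{\lambda},\bm{\mu},\bm{\delta},\bm{p},\bm{q}\in\mathcal{E}_+(\mathcal{D}_+)$ does the work: because at each stage we swap a single vector between two vectors already sitting in the same cone, membership in $\mathcal{E}_+$ (resp. $\mathcal{D}_+$) is automatically preserved, and the common shape vector $\alpha\bm{1}_n$ lies trivially in both cones. The only genuinely mild point to confirm is that the stated condition bundles---$(C2),(C9)$ in part $(i)$ and $(C2),(C3),(C9)$ in part $(ii)$---supply exactly the hypotheses demanded by each of the three single-parameter results with nothing missing. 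No real analytic obstacle arises here, since all the underlying monotonicity and Schur-convexity/concavity arguments have already been carried out in Theorems \ref{th3.1}, \ref{th3.2} and \ref{th3.2.}.
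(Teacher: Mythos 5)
Your proposal is correct and is essentially the paper's own argument: the authors simply state that the result ``follows from Theorem \ref{th3.2}, Theorem \ref{th3.1}$(i)$ ($(ii)$) and Theorem \ref{th3.2.}'' and omit the details, while you have written out the intended interpolation through two intermediate maxima and the transitivity of $\ge_{st}$. The bookkeeping you flag (cone membership of each intermediate parameter configuration and the matching of $(C2)$, $(C3)$, $(C9)$ to the three cited theorems) checks out, so nothing is missing.
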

\begin{proof}
The proof of the first (second) part of the theorem follows from Theorem \ref{th3.2}, Theorem \ref{th3.1}$(i)$ ($(ii)$) and Theorem \ref{th3.2.}. So, it is omitted.
\end{proof}

The following theorem states that the $k$th largest claim amounts can be comparable with respect to the usual stochastic order.

\begin{theorem}\label{th3.9}
    Let $(A1)$ and $(A2)$ hold, and
    $\psi$ be a
    differentiable function. Also, let  $\lambda_i=\mu_i=
    \lambda$, $\theta_i=\delta_i=\theta$, for $i=1,\cdots,n$.
    Then, $\boldsymbol {\alpha}\succeq^m\boldsymbol\beta
        \Rightarrow U_{k:n}\leq_{st} V_{k:n}$, provided $\psi(\bm{p})=\psi(\bm{q})=v$.
        \end{theorem}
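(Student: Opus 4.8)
The plan is to reduce the comparison of the $k$th largest claims to a statement about majorization of the shape-parameter vectors, exactly in the spirit of the proof of Theorem~\ref{th3.1a}. Under the present setup ($\lambda_i=\mu_i=\lambda$, $\theta_i=\delta_i=\theta$, and $\psi(\bm{p})=\psi(\bm{q})=v$ so that all occurrence probabilities coincide, $p_i=q_i=p$), the only heterogeneity left is in the shape parameters $\bm{\alpha}$ versus $\bm{\beta}$. Since $U_{k:n}$ and $V_{k:n}$ are the $k$th order statistics of independent (non-identically distributed) risks $U_i=J_iX_i$ with $X_i\sim F^{\alpha_i}((x-\lambda)/\theta)$, I would first write down the distribution function of $U_{k:n}$ as a symmetric function of the marginal distribution functions $F_i(t)=1-p[1-F^{\alpha_i}((t-\lambda)/\theta)]$, using the standard representation of the CDF of the $k$th order statistic from independent variables,
\begin{eqnarray*}
F_{k:n}(t)=\sum_{\ell=k}^{n}\sum_{|S|=\ell}\ \prod_{i\in S}F_i(t)\prod_{j\notin S}\big(1-F_j(t)\big).
\end{eqnarray*}
The goal will be to show that, as a function of $\bm{\alpha}$, this $F_{k:n}(t)$ is Schur-convex for each fixed $t$, so that $\bm{\alpha}\succeq^m\bm{\beta}$ yields $F_{k:n}(t)\le G_{k:n}(t)$ pointwise, i.e.\ $U_{k:n}\le_{st}V_{k:n}$.

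The main analytic step is to verify Schur-convexity of $F_{k:n}(t)$ in $\bm{\alpha}$. Because $F_{k:n}(t)$ is manifestly permutation-symmetric in its arguments, by the Schur--Ostrowski criterion it suffices to check that for any pair of indices $i,j$,
\begin{eqnarray*}
(\alpha_i-\alpha_j)\!\left[\frac{\partial F_{k:n}(t)}{\partial\alpha_i}-\frac{\partial F_{k:n}(t)}{\partial\alpha_j}\right]\le 0.
\end{eqnarray*}
Here I would exploit that each marginal $F_i(t)$ depends on $\alpha_i$ only, so $\partial F_{k:n}/\partial\alpha_i=(\partial F_{k:n}/\partial F_i)\cdot(\partial F_i/\partial\alpha_i)$. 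The chain-rule factor $\partial F_i/\partial\alpha_i=p\,F^{\alpha_i}((t-\lambda)/\theta)\ln F((t-\lambda)/\theta)$ is negative and, crucially, monotone in $\alpha_i$; the monotonicity of the relevant ratio in $\alpha$ is precisely the content of Lemma~\ref{lem2.7}, which I would invoke to control the sign of the derivative difference. The order-statistic weight $\partial F_{k:n}/\partial F_i$ is itself symmetric and has the right monotone dependence on the $F_i$'s, so combining these two monotonicities gives the desired sign.

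Having established Schur-convexity together with the (evident) fact that $F_{k:n}(t)$ is decreasing in each $\alpha_i$ (larger shape stochastically increases each risk, hence increases the order statistic, hence decreases its CDF), I would conclude by the standard majorization-to-order transfer, Theorem~A.8 of \cite{Marshall2011}, that $\bm{\alpha}\succeq^m\bm{\beta}$ implies $F_{k:n}(t)\le G_{k:n}(t)$ for all $t$, which is exactly $U_{k:n}\le_{st}V_{k:n}$. I expect the hard part to be the clean verification of the Schur--Ostrowski inequality for the \emph{intermediate} order statistic: unlike the maximum ($k=n$), where $F_{n:n}$ is a single product and Lemma~\ref{lem2.7} applies termwise as in Theorem~\ref{th3.1a}, the general $k$ produces a sum over subsets, so one must argue that the symmetric order-statistic weights do not spoil the sign obtained from the marginal derivatives. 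The key observation that rescues the argument is that the whole dependence on $\alpha_i$ is funneled through the single scalar $F_i(t)$ via a fixed increasing map, so the comparison collapses to a one-dimensional monotonicity question governed by Lemma~\ref{lem2.7}.
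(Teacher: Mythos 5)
Your overall strategy---write $F_{k:n}(t)$ as a symmetric function of the marginals $h_i=F_i(t)$, verify the Schur--Ostrowski condition in $\bm{\alpha}$, and transfer the majorization---is a genuinely different route from the paper's, which disposes of the theorem in two lines: the marginal distribution function $H(\alpha)=1-\psi^{-1}(v)\left[1-F^{\alpha}\left(\frac{t-\lambda}{\theta}\right)\right]$ is log-convex in $\alpha$ (exactly Lemma~\ref{lem2.7}$(i)$, since $H'/H=k_2$), and the conclusion for every $U_{k:n}$ is then quoted from Theorem~3.5 of \cite{pledger1971comparison}. What you propose is, in effect, to reprove that classical theorem; that is legitimate, but then the burden is on you to carry out the step you yourself flag as the hard part, and that is precisely where the sketch fails.

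The gap: you assert that "the order-statistic weight $\partial F_{k:n}/\partial F_i$ \ldots has the right monotone dependence on the $F_i$'s, so combining these two monotonicities gives the desired sign." It does not. Writing $\Phi_i=\partial F_{k:n}/\partial h_i=P(A+B_j=k-1)=h_jP(A=k-2)+(1-h_j)P(A=k-1)$, where $A$ counts successes outside $\{i,j\}$ and $B_j\sim \mathrm{Bernoulli}(h_j)$, one finds $\Phi_i-\Phi_j=(h_j-h_i)\left[P(A=k-2)-P(A=k-1)\right]$, whose sign is indeterminate; so the weights $\Phi_i,\Phi_j$ are not ordered, and no combination of two separate monotonicities is available. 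The step that actually works is the identity $h_j\Phi_j-h_i\Phi_i=(h_j-h_i)P(A=k-1)\ge 0$ for $h_i\le h_j$, paired with log-convexity (not mere monotonicity) of $H$: since $|H'|/H$ is decreasing in $\alpha$, for $\alpha_i>\alpha_j$ one gets $\Phi_i|H'(\alpha_i)|=(h_i\Phi_i)\cdot\bigl(|H'(\alpha_i)|/h_i\bigr)\le(h_j\Phi_j)\cdot\bigl(|H'(\alpha_j)|/h_j\bigr)=\Phi_j|H'(\alpha_j)|$, which is the required Schur--Ostrowski inequality. Two sign slips should also be fixed: the displayed condition with "$\le 0$" is the Schur-concavity criterion, not the Schur-convexity you announce; and Schur-convexity together with $\bm{\alpha}\succeq^{m}\bm{\beta}$ gives $F_{k:n}(t)\ge G_{k:n}(t)$, not $\le$, which is what $U_{k:n}\le_{st}V_{k:n}$ requires. (Monotonicity in each $\alpha_i$ and Theorem~A.8 of \cite{Marshall2011} are superfluous here, since the hypothesis is exact majorization rather than a weak one.)
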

\begin{proof}
    Consider the function
    $F(\alpha_i,t)
    =1-{\psi}^{-1}(v)\left[1-{F}^{\alpha_i}\left(\frac{t-\lambda}{\theta}\right)\right]$, which can be shown to be increasing and log-convex with respect to $\alpha_i$, for $i=1,\cdots,n$. Thus, the result follows from Theorem $3.5$ of \cite{pledger1971comparison}.
\end{proof}

In this part, we study the conditions under which the reversed hazard rate order holds between the largest claim amounts from two heterogeneous insurance portfolios of risks.
\begin{theorem}\label{th3.3.}
    Assume that  $(A1)$ and $(A2)$ hold. Further, let $\bm{\alpha}=\bm{\beta}=\bm{1}_n$, $\bm{p}=\bm{q}$,
    $\bm{\theta}=\bm{\delta}$ and $\bm{\lambda},\bm{\theta},\bm{\mu}, \bm{p}\in\mathcal{E}_{+}(\mathcal{D}_{+})$.
    Then, $\bm{\lambda}\succeq_{w}\bm{\mu}\Rightarrow
    U_{n:n}\ge_{rh}V_{n:n}$ if  $(C2),$ $(C3)$ and $(C4)$ are satisfied.
    \end{theorem}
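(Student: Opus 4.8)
The plan is to pass from the reversed hazard rate order to a pointwise comparison of reversed hazard rate functions and then to apply a weak--submajorization preservation result. Recall that $U_{n:n}\ge_{rh}V_{n:n}$ is equivalent to $\tilde r_{n:n}(t)\ge\tilde s_{n:n}(t)$ for every $t$, where $\tilde r_{n:n}$ is given by \eqref{eq3.7} and $\tilde s_{n:n}$ is its analogue for $V_{n:n}$. Under the hypotheses $\bm\alpha=\bm\beta=\bm{1}_n$, $\bm p=\bm q$ and $\bm\theta=\bm\delta$, the functions $\tilde r_{n:n}$ and $\tilde s_{n:n}$ coincide in every argument except the location vectors $\bm\lambda$ and $\bm\mu$. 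Hence, fixing $t$, I would regard $\tilde r_{n:n}(t)=\sum_{i=1}^n g_i(\lambda_i)$ as a function of the location vector and aim to prove that this map is increasing and Schur-convex on $\mathcal{E}_+$ (respectively $\mathcal{D}_+$). Granting these two properties, the characterization of functions that preserve the weak submajorization order (Theorem~A.8 of \cite{Marshall2011}) applied to $\bm\mu\preceq_w\bm\lambda$ would give $\tilde s_{n:n}(t)\le\tilde r_{n:n}(t)$ for all $t$, i.e. $U_{n:n}\ge_{rh}V_{n:n}$.

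First I would compute $\partial\tilde r_{n:n}/\partial\lambda_i$. Writing $x_i=(t-\lambda_i)/\theta_i$ and using $f=r(1-F)$, a differentiation analogous to \eqref{eq4.7} would yield, after simplification,
\[
\frac{\partial\tilde r_{n:n}(t)}{\partial\lambda_i}=\frac{1}{(t-\lambda_i)^2}\left\{[x_ir(x_i)]^2\,\frac{p_i[1-F(x_i)]}{\bigl(1-p_i[1-F(x_i)]\bigr)^2}+\bigl(-x_i^2r'(x_i)\bigr)\,\frac{p_i[1-F(x_i)]}{1-p_i[1-F(x_i)]}\right\}.
\]
Since $(C3)$ (namely $2xr+x^2r'\le0$ for $x>0$) forces $r'\le0$, every factor on the right is nonnegative, so $\partial\tilde r_{n:n}/\partial\lambda_i\ge0$ and the map is increasing in each $\lambda_i$. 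This disposes of the monotonicity requirement.

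The Schur-convexity is the heart of the argument. Because the summands carry the heterogeneous weights $\theta_i$ and $p_i$, the map is not symmetric, so I would instead verify the Schur-type criterion on the ordered cone, $(\lambda_i-\lambda_j)\bigl(\partial_{\lambda_i}\tilde r_{n:n}-\partial_{\lambda_j}\tilde r_{n:n}\bigr)\ge0$ for $i\le j$, and then invoke Lemma~2.1 of \cite{khaledi2002dispersive}. Taking $\bm\lambda,\bm\theta,\bm p\in\mathcal{E}_+$ (the $\mathcal{D}_+$ case being symmetric), for $i\le j$ one has $\lambda_i\le\lambda_j$, $\theta_i\le\theta_j$, $p_i\le p_j$, so that $x_i\ge x_j$; thus $x$ decreases and $p$ increases along the index. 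The idea is that every factor in the displayed derivative is positive and monotone along the index: $1/(t-\lambda_i)^2$ increases; $[x_ir(x_i)]^2$ increases because $xr(x)$ is decreasing in $x$ by $(C2)$; $-x_i^2r'(x_i)=[x_i^2r(x_i)]\cdot[-r'(x_i)/r(x_i)]$ increases because $x^2r(x)$ is decreasing in $x$ by $(C3)$ and $-r'(x)/r(x)$ is decreasing in $x$ by $(C4)$; and the occurrence-probability factors increase, since by Lemma~\ref{lem2.6} the quantity $k_1(1,F(x),p)=[1-F(x)]/(1-p[1-F(x)])$ is increasing in $p$ and decreasing in $F(x)$, whence $p[1-F(x)]/(1-p[1-F(x)])$ and $1/(1-p[1-F(x)])$ both grow as $p$ increases and $x$ decreases. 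Consequently $\partial_{\lambda_i}\tilde r_{n:n}\le\partial_{\lambda_j}\tilde r_{n:n}$ whereas $\lambda_i-\lambda_j\le0$, which is the sign required.

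I expect the main obstacle to be exactly this last step: showing that the product of several parameter-dependent factors is monotone in the index, which is delicate because $x_i$ depends on both $\lambda_i$ and $\theta_i$ and the occurrence weights enter nonlinearly. The simultaneous use of $(C2)$--$(C4)$ together with the monotonicity directions supplied by Lemma~\ref{lem2.6} is what makes it work, and throughout one must respect the support constraint $t>\max_i\lambda_i$, which guarantees $t-\lambda_i>0$ and $x_i>0$ so that all the factors are genuinely positive. Once monotonicity and Schur-convexity are established, Theorem~A.8 of \cite{Marshall2011} yields $\tilde r_{n:n}(t;\bm\lambda)\ge\tilde r_{n:n}(t;\bm\mu)$ for every $t$, completing the proof; the case $\bm\lambda,\bm\theta,\bm\mu,\bm p\in\mathcal{D}_+$ is handled identically with the ordering of the indices reversed.
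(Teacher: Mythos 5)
Your proposal is correct and follows essentially the same route as the paper's own proof: the same partial derivative of $\tilde r_{n:n}$ with respect to $\lambda_i$ (your grouping $-x^2r'=[x^2r]\cdot[-r'/r]$ is exactly the paper's formula), the same monotonicity observation, the same index-wise comparison of factors via $(C2)$--$(C4)$ and Lemma~\ref{lem2.6} to get the ordered-cone Schur-convexity condition, and the same final appeal to Theorem~A.8 of \cite{Marshall2011}. The only (immaterial) difference is that you invoke Lemma~2.1 of \cite{khaledi2002dispersive} for Schur-convexity on the cone where the paper cites Lemma~3.3~(3.1) of \cite{kundu2016some}.
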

\begin{proof}
    Under the set up, the reversed hazard rate function of $U_{n:n}$ is
    \begin{eqnarray}\label{eq3.8}
    \tilde{r}_{n:n}(t)=\sum_{i=1}^{n}\frac{1}{\theta_i}r
    \left(\frac{t-\lambda_i}{\theta_i}\right)\left[
    \frac{p_i\left[1-F\left(\frac{t-\lambda_i}{\theta_i}\right)\right]}
    {1-p_i\left[1-F\left(\frac{t-\lambda_i}
        {\theta_i}\right)\right]}\right].
    \end{eqnarray}
    The partial derivative of $\tilde{r}_{n:n}(t)$ given by (\ref{eq3.8}) with respect to
    $\lambda_{i}$, is given by
    \begin{eqnarray}\label{eq3.9}
    \frac{\partial \tilde{r}_{n:n}(t)}{\partial
         \lambda_{i}}=&-\frac{\left[x^2r(x)\right]_{x=\left(\frac{t-\lambda_i}{\theta_i}\right)}}{(t-\lambda_i)^2}\left[\frac{r'(x)}{r(x)}\right]_{x=\left(\frac{t-\lambda_i}{\theta_i}\right)}\left[
    \frac{p_i\left[1-F\left(\frac{t-\lambda_i}{\theta_i}\right)\right]}
    {1-p_i\left[1-F\left(\frac{t-\lambda_i}
        {\theta_i}\right)\right]}\right]\\
    &+\frac{1}{(t-\lambda_i)^2}\left[xr(x)\right]^2_{x=\left(\frac{t-\lambda_i}{\theta_i}\right)}\left[
    \frac{p_i\left[1-F\left(\frac{t-\lambda_i}{\theta_i}\right)\right]}
    {\left[1-p_i\left[1-F\left(\frac{t-\lambda_i}
        {\theta_i}\right)\right]\right]^2}\right]\nonumber.
    \end{eqnarray}
    It is easy to see that $\tilde{r}_{n:n}(t)$ is increasing
    with respect to $\lambda_{i}$, $i=1,\cdots,n$, since $r(x)$ is decreasing.
    For any $1\le i\le j\le n$, we have
    $\frac{t-\lambda_i}{\theta_i}\ge(\le)\frac{t-\lambda_j}{\theta_j}$. Applying the property of $\frac{r'(x)}{r(x)}~(\le 0)$ and $xr(x)$, we obtain $$-\left[\frac{r'(x)}{r(x)}\right]_{x=\frac{t-\lambda_i}{\theta_i}}\le(\ge)-\left[\frac{r'(x)}{r(x)}\right]_{x=\frac{t-\lambda_j}{\theta_j}} \text{ and } [{xr(x)}]_{x=\frac{t-\lambda_i}{\theta_i}}\le(\ge)[{xr(x)}]_{x=\frac{t-\lambda_j}{\theta_j}}.$$
    With these results, Lemma \ref{lem2.6} yields that
    $[\frac{\partial
        \tilde{r}_{n:n}(t)}{\partial
        \lambda_{i}}-\frac{\partial
        \tilde{r}_{n:n}(t)}{\partial \lambda_{j}}]$ is less than or equals (greater than or equals) to zero.
    Thus, $\tilde{r}_{n:n}(t)$ is Schur-convex with
    respect to
    $\bm{\lambda}\in\mathcal{E}_{+}(\mathcal{D}_{+})$
by using Lemma $3.3$ $(3.1)$ of
    \cite{kundu2016some}. The rest of the proof is completed by Theorem $A.8$ of
    \cite{Marshall2011}.
\end{proof}

In the next theorem, we  consider that the shape parameter vectors are equal to $\bm{1}_n$ and the location parameter vectors are same.
\begin{theorem}\label{th3.3}
    Suppose $(A1)$ and $(A2)$  hold. Further, we assume $\bm{\alpha}=\bm{\beta}=\bm{1}_n$, $\bm{p}=\bm{q}$, $\bm{\lambda}=\bm{\mu}$ and $\bm{\lambda},\bm{\theta},\bm{\delta}, \bm{p}\in\mathcal{E}_{+}(\mathcal{D}_{+})$.
    \begin{itemize}

        \item[$(i)$] If $(C2)$  and $(C5)$ hold, then $1/\bm{\theta}\succeq^{w}1/\bm{\delta}\Rightarrow
        U_{n:n}\ge_{rh}V_{n:n}$;
        \item[$(ii)$] If $(C2),$ $(C6)$ and $(C7)$ hold, then $1/\bm{\theta}\succeq^{rm}1/\bm{\delta}\Rightarrow
        U_{n:n}\ge_{rh}V_{n:n}$.
    \end{itemize}
\end{theorem}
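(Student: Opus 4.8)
The plan is to work directly with the reversed hazard rate, since $U_{n:n}\ge_{rh}V_{n:n}$ is equivalent to $\tilde r_{n:n}(t)\ge\tilde s_{n:n}(t)$ for every $t$. Because $\bm\alpha=\bm\beta=\bm 1_n$ and $\lambda_i=\mu_i=\lambda$, the expression \eqref{eq3.7} collapses to
\begin{eqnarray*}
\tilde r_{n:n}(t)=\sum_{i=1}^n\frac{1}{\theta_i}\,r\!\left(\frac{t-\lambda}{\theta_i}\right)\frac{p_i\big[1-F(\tfrac{t-\lambda}{\theta_i})\big]}{1-p_i\big[1-F(\tfrac{t-\lambda}{\theta_i})\big]},
\end{eqnarray*}
with $\delta_i$ replacing $\theta_i$ for $\tilde s_{n:n}$. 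So it suffices to regard $\tilde r_{n:n}(t)$, for fixed $t$, as a function of the scale vector and to prove that it is monotone and enjoys the appropriate Schur property on $\mathcal E_+$ (resp. $\mathcal D_+$); the conclusion then follows from Theorem $A.8$ of \cite{Marshall2011} once the right majorization relation is matched to the right monotonicity--Schur pair.

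For Part $(i)$ I would substitute $m_i=1/\theta_i$ and differentiate, obtaining (as in \eqref{eq4.6}, now with $p_i$ in place of the common $p$)
\begin{eqnarray*}
\frac{\partial\tilde r_{n:n}}{\partial m_i}=[xr(x)]'\,\frac{p_i(1-F(x))}{1-p_i(1-F(x))}-[xr^2(x)]\,\frac{p_i(1-F(x))}{\big(1-p_i(1-F(x))\big)^2},\qquad x=(t-\lambda)m_i.
\end{eqnarray*}
Monotonicity is immediate from $(C2)$, which forces $[xr(x)]'\le0$ while $xr^2(x)=(xr(x))^2/x$ is nonnegative and decreasing; hence $\partial\tilde r_{n:n}/\partial m_i\le0$, so $\tilde r_{n:n}$ is decreasing in $\bm m$. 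For the Schur property I would take $i<j$ (say in the $\mathcal E_+$ case, so $p_i\le p_j$, $m_i\ge m_j$, $x_i\ge x_j$) and sign $\partial_{m_i}\tilde r_{n:n}-\partial_{m_j}\tilde r_{n:n}$. Writing the weights as $A_i=p_i(1-F(x_i))/[1-p_i(1-F(x_i))]$ and $B_i=A_i/[1-p_i(1-F(x_i))]$, Lemma \ref{lem2.6} (monotonicity of $k_1$ in its second and third arguments) gives $A_i\le A_j$ and $B_i\le B_j$, while $(C5)$ makes $[xr(x)]'$ increasing and $xr^2(x)$ is decreasing. Combining these termwise yields $\partial_{m_i}\tilde r_{n:n}\ge\partial_{m_j}\tilde r_{n:n}$, i.e. $(m_i-m_j)(\partial_{m_i}-\partial_{m_j})\tilde r_{n:n}\ge0$, so $\tilde r_{n:n}$ is Schur-convex in $1/\bm\theta\in\mathcal D_+$ (the $\mathcal D_+$ case for $\bm\theta$ is symmetric). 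Being decreasing and Schur-convex, Theorem $A.8$ of \cite{Marshall2011} turns $1/\bm\delta\preceq^w 1/\bm\theta$ into $\tilde s_{n:n}(t)\le\tilde r_{n:n}(t)$, as required.

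For Part $(ii)$, suited to reciprocal majorization, I would differentiate with respect to $\theta_i$ itself. Since $\partial/\partial\theta_i=-m_i^2\,\partial/\partial m_i$ and $m_i=x/(t-\lambda)$, the chain rule lifts the two kernels of Part $(i)$ by a factor $x^2$, so the derivative is governed by $x^2[xr(x)]'$ and $x^3r^2(x)$ instead of $[xr(x)]'$ and $xr^2(x)$. This is exactly why the hypotheses become $(C6)$ ($x^3r^2(x)$ decreasing) and $(C7)$ ($x^2[xr(x)]'$ increasing); together with $(C2)$ and the same weight ordering from Lemma \ref{lem2.6}, one signs $\partial_{\theta_i}\tilde r_{n:n}-\partial_{\theta_j}\tilde r_{n:n}$ and establishes the requisite monotonicity and Schur behaviour in $\bm\theta$. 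The passage from this to the conclusion under $1/\bm\theta\succeq^{rm}1/\bm\delta$ is then handled exactly as in the proof of Theorem \ref{th3.1}$(ii)$, via Lemma $1$ of \cite{hazra2017stochastic} and Theorem $A.8$ of \cite{Marshall2011}.

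The main obstacle is signing the difference of the partial derivatives. Since $\tilde r_{n:n}$ is not symmetric (the $p_i$ differ), the classical Schur criterion does not apply verbatim, and the argument must be run on the ordered cones $\mathcal E_+/\mathcal D_+$. Everything hinges on three monotonicities pointing the same way: the occurrence-probability weights $A_i,B_i$ (controlled by Lemma \ref{lem2.6}), the factor $[xr(x)]'$ (resp. $x^2[xr(x)]'$) made increasing by $(C5)$ (resp. $(C7)$), and $xr^2(x)$ (resp. $x^3r^2(x)$) made decreasing by $(C2)$ (resp. $(C6)$). Tracking the two sign conventions for $\mathcal E_+$ versus $\mathcal D_+$ and verifying that the two products in each partial derivative combine with the correct overall sign is where the care is needed; once that is in place, the majorization theorems close the argument mechanically.
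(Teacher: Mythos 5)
Your proposal is correct and follows essentially the same route as the paper: the paper likewise writes $\tilde r_{n:n}$ in terms of $m_i=1/\theta_i$, differentiates, uses $(C2)$ for the decreasing property and $(C5)$ together with Lemma \ref{lem2.6} for the Schur-convexity on the ordered cone, then closes part $(i)$ with Theorem $A.8$ of \cite{Marshall2011} and part $(ii)$ in the same manner via Lemma $1$ of \cite{hazra2017stochastic}, with $(C6)$ and $(C7)$ replacing $(C5)$ exactly as you indicate. One minor point: the theorem assumes only $\bm{\lambda}=\bm{\mu}$ as ordered vectors rather than a common scalar $\lambda$, but your argument survives unchanged because $\lambda_i\le\lambda_j$ and $m_i\ge m_j$ still yield $(t-\lambda_i)m_i\ge(t-\lambda_j)m_j$.
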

\begin{proof}

    $(i)$ Under the given set up, we have
    \begin{eqnarray}\label{eq3.7.}
    \tilde{r}_{n:n}(t)=\sum_{i=1}^{n}{m_{i}}r
    \left((t-\lambda_i){m_i}\right)\left[
    \frac{p_i\left[1-F\left({(t-\lambda_i)}{m_i}\right)\right]}
    {1-p_i\left[1-F\left({(t-\lambda_i)}
        {m_i}\right)\right]}\right],
    \end{eqnarray} where $m_i=1/ \theta_i$, for $i=1,\cdots,n.$
    Taking derivative of  (\ref{eq3.7.}) with respect to
    $m_{i}$ partially, we obtain
    \begin{eqnarray}\label{eq3.8.}
    \frac{\partial \tilde{r}_{n:n}(t)}{\partial
        m_{i}}=&\frac{\partial}{\partial x}\left[{xr(x)}\right]_{x=\left((t-\lambda_i){m_i}\right)}\left[
    \frac{p_i\left[1-F\left((t-\lambda_i){m_i}\right)\right]}
    {1-p_i\left[1-F\left((t-\lambda_i){m_i}\right)\right]}\right]\\
    &-\left[xr^2(x)\right]_{x=\left((t-\lambda_i){m_i}\right)}\left[
    \frac{p_i\left[1-F\left((t-\lambda_i){m_i}\right)\right]}
    {\left[1-p_i\left[1-F\left((t-\lambda_i){m_i}\right)\right]\right]^2}\right]\nonumber.
    \end{eqnarray}
    From \eqref{eq3.8.}, it is clear that
    $\tilde{r}_{n:n}(t)$ is decreasing
    with respect to $m_{i}$, for $i=1,\cdots,n$.
    Now, let us take $1\le i\le j\le n$. Then,
    $(t-\lambda_i){m_i}\ge(\le)(t-\lambda_j){m_j}$. Moreover, $xr(x)$ is decreasing and convex. Hence, we get  $\frac{\partial}{\partial x}\left[{xr(x)}\right]_{x=\left((t-\lambda_i){m_i}\right)}\ge(\le)\frac{\partial}{\partial x}\left[{xr(x)}\right]_{x=\left((t-\lambda_j){m_j}\right)}$ and $[{xr(x)}]_{x=({t-\lambda_i}){m_i}}\le(\ge)[{xr(x)}]_{x=({t-\lambda_j}){m_j}}$. Utilizing
    Lemma \ref{lem2.6}, we can show  that for any $i\le j$, $\frac{\partial
        \tilde{r}_{n:n}(t)}{\partial
        m_{i}}-\frac{\partial
        \tilde{r}_{n:n}(t)}{\partial m_{j}}\ge(\le)0$. Thus,
 $\tilde{r}_{n:n}(t)$ is Schur-convex with respect to    $\bm{m}\in\mathcal{D}_{+}(\mathcal{E}_{+})$. The rest of the proof follows from Lemma $2.1$ of \cite{khaledi2002dispersive} and Theorem $A.8$ of \cite{Marshall2011}. The second part of the theorem can be proved in a similar manner by using Lemma $1$ of \cite{hazra2017stochastic}. Thus, it is omitted.
\end{proof}

%

\begin{theorem}\label{th3.4}
    Let  $(A1)$, $(A2)$ and $(C2)$ hold. Then,
    \begin{itemize}
        \item [$(i)$]
        $\{\bm{\alpha}=\bm{\beta}= \bm{1}_n$,
        $\bm{\theta}\geq\bm{\delta}$,
        $\bm{\lambda}\geq\bm{\mu},~ \bm{p}\geq\bm{q}\} \Rightarrow
        U_{n:n}\geq_{rh}V_{n:n}$;
        \item [$(ii)$]
        $\{\bm{\alpha}=\bm{\beta}= \alpha<1$,
        $\bm{\theta}\geq\bm{\delta}$,
        $\bm{\lambda}\geq\bm{\mu},~ \bm{p}\geq\bm{q}\} \Rightarrow
        U_{n:n}\geq_{rh}V_{n:n}$.

    \end{itemize}
\end{theorem}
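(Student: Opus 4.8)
The plan is to reduce the reversed-hazard-rate comparison of the two maxima to a term-by-term comparison of the reversed hazard rates of the individual risks. Because $U_{1},\dots,U_{n}$ are independent, the reversed hazard rate of $U_{n:n}$ is the sum of the reversed hazard rates of the $U_{i}$; this is exactly the representation \eqref{eq3.7}, and $\tilde{s}_{n:n}$ is its analogue built from the $V_{i}$. Since $U_{n:n}\ge_{rh}V_{n:n}$ is equivalent to $\tilde{r}_{n:n}(t)\ge\tilde{s}_{n:n}(t)$ for all $t$, it suffices to show that each summand
$$ g(\theta,\lambda,p;t)=\frac{\alpha}{\theta}\,r\!\left(\tfrac{t-\lambda}{\theta}\right)\frac{p\,F^{\alpha-1}\!\left(\tfrac{t-\lambda}{\theta}\right)\bigl[1-F\!\left(\tfrac{t-\lambda}{\theta}\right)\bigr]}{1-p\bigl[1-F^{\alpha}\!\left(\tfrac{t-\lambda}{\theta}\right)\bigr]} $$
is nondecreasing in each of $\theta$, $\lambda$ and $p$ separately. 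The componentwise hypotheses $\theta_{i}\ge\delta_{i}$, $\lambda_{i}\ge\mu_{i}$, $p_{i}\ge q_{i}$ then give $g(\theta_{i},\lambda_{i},p_{i};t)\ge g(\delta_{i},\mu_{i},q_{i};t)$ for every $i$ and every $t$, and summing over $i$ produces $\tilde{r}_{n:n}(t)\ge\tilde{s}_{n:n}(t)$.

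Monotonicity in $p$ is immediate: writing $b=1-F^{\alpha}((t-\lambda)/\theta)\in[0,1)$, the only $p$-dependent factor is $p/(1-pb)$, whose derivative $1/(1-pb)^{2}$ is positive, while the remaining factor is nonnegative. For the scale direction I would set $m=1/\theta$ and $x=(t-\lambda)m$, so that $\tfrac{1}{\theta}=\tfrac{x}{t-\lambda}$ and the summand becomes
$$ g=\frac{\alpha}{t-\lambda}\,h(x),\qquad h(x)=p\,[x\,r(x)]\,\Phi\bigl(F(x)\bigr),\qquad \Phi(s)=\frac{s^{\alpha-1}(1-s)}{1-p(1-s^{\alpha})}. $$
Because increasing $\theta$ decreases $m$ and hence decreases $x$, it is enough to prove that $h$ is decreasing in $x$. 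Here $x\,r(x)$ is decreasing by $(C2)$, while $\Phi(s)$ has a numerator that is a product of two nonnegative decreasing maps, namely $s\mapsto 1-s$ and, precisely because $\alpha\le1$, $s\mapsto s^{\alpha-1}$, over a positive increasing denominator; hence $\Phi$ is decreasing in $s$, so $\Phi(F(x))$ is decreasing in $x$, and the product of two nonnegative decreasing functions is decreasing. The location direction then follows for free: increasing $\lambda$ also decreases $x$, and $g=\tfrac{\alpha}{\theta}\,\tilde{h}(x)$ with $\tilde{h}(x)=h(x)/x$, which is decreasing as a product of the decreasing $h$ and the decreasing $1/x$.

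Part $(i)$ is the specialization $\alpha=1$, in which $F^{\alpha-1}\equiv1$ and $\Phi(s)$ collapses to $k_{1}(1,s,p)$ of Lemma \ref{lem2.6}; in fact the scale and location monotonicities of $\tilde{r}_{n:n}$ required here are exactly those already established inside the proofs of Theorems \ref{th3.3.} and \ref{th3.3}, so part $(i)$ may simply invoke them together with the elementary monotonicity in $p$. I expect the only genuinely delicate point to be the scale monotonicity in part $(ii)$: the restriction $\alpha\le1$ is forced precisely by the factor $s\mapsto s^{\alpha-1}$ in $\Phi$, which is decreasing only when $\alpha\le1$, whereas for $\alpha>1$ the map $\Phi$ need not be monotone and the argument breaks down. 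The clean decomposition of each summand into the $(C2)$-controlled factor $x\,r(x)$ and the manifestly monotone fractional factor $\Phi(F(x))$ is what makes the scale step, and with it the location step, go through.
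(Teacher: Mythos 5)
Your proof is correct and follows essentially the same route as the paper's: both reduce $U_{n:n}\ge_{rh}V_{n:n}$ to a term-by-term comparison of the summands in \eqref{eq3.7}, splitting each summand into the $(C2)$-controlled factor $xr(x)$ and a monotone fractional factor (the paper's $s_i\ge t_i$ step, your $\Phi(F(x))$ together with monotonicity in $p$). The only difference is cosmetic: for part $(ii)$ you verify the monotonicity of $\Phi(s)=s^{\alpha-1}(1-s)/\bigl(1-p(1-s^{\alpha})\bigr)$ directly, where the paper instead cites Lemmas $3(i)$--$(ii)$ of \cite{balakrishnan2015stochastic}.
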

\begin{proof}
To prove the first part, it suffices to show that
    \begin{equation}\label{eq3.10}
    \sum_{i=1}^{n}\frac{p_i}{\theta_i}r\left(\frac{t-\lambda_i}{\theta_i}\right)s_i\geq
    \sum_{i=1}^{n}\frac{q_i}{\delta_i}r\left(\frac{t-\mu_i}{\delta_i}\right)t_i,
    \end{equation}
    where
    $s_i=(p_i[1-F(\frac{t-\lambda_i}{\theta_i})])/(1-p_i[1-F(\frac{t-\lambda_i}
        {\theta_i})]$ and
    $t_i=(q_i[1-F(\frac{t-\mu_i}{\delta_i})])
   (1-q_i[1-F(\frac{t-\mu_i}
        {\delta_i})]).$
    Note that the inequality given by
    $\eqref{eq3.10}$ holds if
    \begin{eqnarray}\label{eq3.11}
    \frac{1}{\delta_i}{r}\left(\frac{t-\mu_i}{\delta_i}\right)
    \le\frac{1}{\theta_i}{r}\left(\frac{t-\lambda_i}{\theta_i}\right)
    \end{eqnarray}
    and $s_i\geq t_i$, for all $i=1,\cdots,n$. Thus, the proof is completed by the given assumptions and Lemma \ref{lem2.6}. The second part of the theorem can be proved by  Lemmas
    $3(i),3(ii)$ of \cite{balakrishnan2015stochastic}.
\end{proof}

Next theorem shows that $V_{n:n}$ is dominated by $U_{n:n}$ with respect to the reversed hazard rate order under some conditions.  Here, we take that the location parameter and scale parameter vectors are equal. The shape parameters are taken fixed and equal to $1$.
\begin{theorem}\label{th3.5}
Suppose $\psi:(0,1)\rightarrow(0,\infty)$ be a differentiable function satisfying $(C9)$.
    Let $(A1)$, $(A2)$ and $(C2)$  hold. Again, $\bm{\theta}=\bm{\delta},$ $\bm{\lambda}=\bm{\mu}$, $\bm{\alpha}=\bm{\beta}=\alpha\bm{1}_n~(\leq 1)$ and $\bm{\lambda},\bm{\theta},\bm{p},\bm{q}\in\mathcal{E_+}(\mathcal{D_+})$.
    Then, $\psi(\bm{p})\succeq_{w}\psi(\bm{q}) \Rightarrow
    U_{n:n}\geq_{rh}V_{n:n}$.
\end{theorem}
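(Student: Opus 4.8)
The plan is to work directly with the reversed hazard rate functions, since $U_{n:n}\ge_{rh}V_{n:n}$ holds if and only if $\tilde r_{n:n}(t)\ge \tilde s_{n:n}(t)$ for every admissible $t$. Under the present hypotheses ($\bm\alpha=\bm\beta=\alpha\bm 1_n$ with $\alpha\le1$, $\bm\lambda=\bm\mu$, $\bm\theta=\bm\delta$), the expression \eqref{eq3.7} for $\tilde r_{n:n}(t)$ differs from $\tilde s_{n:n}(t)$ only through the occurrence probabilities $\bm p$ versus $\bm q$. Following the device used in the proof of Theorem \ref{th3.2}, I would substitute $w_i=\psi(p_i)$, so that $p_i=\psi^{-1}(w_i)$, and regard $\tilde r_{n:n}(t)$ as a function $\Phi_t(\bm w)=\sum_{i=1}^n g_i(w_i)$ of the transformed probabilities, with $\lambda_i,\theta_i,\alpha$ held fixed; each summand carries the index-dependent quantity $x_i=(t-\lambda_i)/\theta_i$. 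With this reformulation it suffices to show that $\psi(\bm q)\preceq_w\psi(\bm p)$ forces $\Phi_t(\psi(\bm q))\le \Phi_t(\psi(\bm p))$.

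The core is to prove that $\Phi_t$ is increasing and Schur-convex on the relevant cone. I would first differentiate $\Phi_t$ with respect to $w_i$: writing $c_i=1-F^\alpha(x_i)$, the derivative factors as a product of $(\psi^{-1})'(w_i)$, the positive quantity $(1-p_ic_i)^{-2}$, and the baseline term $\tfrac{\alpha}{t-\lambda_i}\,x_ir(x_i)F^{\alpha-1}(x_i)[1-F(x_i)]$, all nonnegative, so $\Phi_t$ is increasing in each $w_i$. For Schur-convexity I would restrict to $\bm\lambda,\bm\theta,\bm p\in\mathcal E_+$ (the $\mathcal D_+$ case being symmetric) and take $t>\max_i\lambda_i$, so that every $x_i>0$. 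Then $w_1\le\cdots\le w_n$ while $x_1\ge\cdots\ge x_n$, and for $i\le j$ I would compare the partial derivatives factor by factor: $1/(t-\lambda_i)$ is nondecreasing in $i$; $x_ir(x_i)$ is nondecreasing in $i$ since $xr(x)$ is decreasing by $(C2)$ and $x_i\ge x_j$; $F^{\alpha-1}(x)[1-F(x)]$ is decreasing in $x$ when $\alpha\le1$, hence nondecreasing in $i$; $(1-p_ic_i)^{-2}$ is ordered through Lemma \ref{lem2.6}; and $(\psi^{-1})'(w_i)$ is monotone in $i$ by the convexity of $\psi^{-1}$ obtained from $(C9)$ exactly as in the proof of Theorem \ref{th3.2}. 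Since every factor is positive and nondecreasing in $i$, one obtains $\partial_i\Phi_t\le\partial_j\Phi_t$, i.e. $(w_i-w_j)(\partial_i\Phi_t-\partial_j\Phi_t)\ge0$; the cone-restricted Schur criterion (Lemmas $3.1$ and $3.3$ of \cite{kundu2016some}) then yields that $\Phi_t$ is Schur-convex.

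Finally I would combine the two properties. An increasing Schur-convex function preserves weak submajorization, so from $\psi(\bm q)\preceq_w\psi(\bm p)$ and Theorem $A.8$ of \cite{Marshall2011} we get $\tilde s_{n:n}(t)=\Phi_t(\psi(\bm q))\le\Phi_t(\psi(\bm p))=\tilde r_{n:n}(t)$ for all admissible $t$, which is precisely $U_{n:n}\ge_{rh}V_{n:n}$; the $\mathcal D_+$ case follows by reversing all the orderings.

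The main obstacle is the Schur-convexity step, because $\Phi_t$ is a sum of genuinely distinct functions $g_i$ rather than a symmetric function, so the Schur--Ostrowski criterion is not directly available and the argument must lean on the cone-restricted lemma. The delicate bookkeeping is that the factors comprising $\partial_i\Phi_t$ depend on several co-varying parameters ($\lambda_i$, the combined argument $x_i$, $p_i$ and $w_i$) and need not all move in the same direction a priori; the argument genuinely needs the monotonicity packaged in $(C2)$, the restriction $\alpha\le1$, and the convexity of $\psi^{-1}$ from $(C9)$ to line every factor up, and I must also keep track of the domain restriction $t>\max_i\lambda_i$ that guarantees positivity of the $x_i$.
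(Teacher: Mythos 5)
Your proposal is correct and follows essentially the same route as the paper: rewrite $\tilde r_{n:n}(t)$ in terms of $w_i=\psi(p_i)$, show the partial derivative in $w_i$ is nonnegative and monotone in the index on the cone $\mathcal{E}_+$ (resp.\ $\mathcal{D}_+$) using $(C2)$, the convexity of $\psi^{-1}$, and Lemma \ref{lem2.6}, conclude Schur-convexity plus monotonicity, and finish with Theorem $A.8$ of \cite{Marshall2011}. Your factor-by-factor decomposition of the derivative is just a more explicit version of the paper's appeal to Lemma $3$ of \cite{balakrishnan2015stochastic}, so there is no substantive difference.
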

\begin{proof}
The reversed hazard rate of $U_{n:n}$ is
    \begin{eqnarray}\label{eq3.11.}
    \tilde{r}_{n:n}(t)=\sum_{i=1}^{n}\frac{1}{\theta_i}r
    \left(\frac{t-\lambda_i}{\theta_i}\right)D_iG_i,
    \end{eqnarray}
    where $D_i=
    ([1-F^{\alpha}(\frac{t-\lambda_i}{\theta_i})])/
    (1-\psi^{-1}(w_{i})[1-F^{\alpha}(\frac{t-\lambda_i}
        {\theta_i})])$ and $G_i=(\alpha F^{\alpha-1}(\frac{t-\lambda_i}{\theta_i})[1-F(\frac{t-\lambda_i}{\theta_i})])/([1-F^{\alpha}(\frac{t-\lambda_i}{\theta_i})])$.
    On differentiating (\ref{eq3.11.}) with respect to
    $w_{i}$ partially, we get
    \begin{eqnarray}\label{eq3.12}
    \frac{\partial \tilde{r}_{n:n}(t)}{\partial
        w_{i}}=\frac{\partial\psi^{-1}(w_{i})}{\partial w_{i}}\frac{D_iG_i}{1-\psi^{-1}(w_{i})\left[1-F^{\alpha}\left(\frac{t-\lambda_i}
        {\theta_i}\right)\right]}\frac{1}{\theta_i}r
    \left(\frac{t-\lambda_i}{\theta_i}\right).
    \end{eqnarray}
    Since $\psi(w)$ is increasing,  $\psi^{-1}(w)$ is also increasing. Therefore,  $\tilde{r}_{n:n}(t)$ is increasing with respect to $w_{i}$, $i=1,\cdots,n$.
    Now, under the assumptions made, we have
    $\frac{t-\lambda_i}{\theta_i}\ge(\le)\frac{t-\lambda_j}{\theta_j}$. Using $(C2)$, we can write  $[{xr(x)}]_{x=\frac{t-\lambda_i}{\theta_i}}\le(\ge)[{xr(x)}]_{x=\frac{t-\lambda_j}{\theta_j}}$. Further, by Lemma \ref{lem2.6}, and Lemma $3$ of \cite{balakrishnan2015stochastic}, we get
    $\frac{\partial
        \tilde{r}_{n:n}(t)}{\partial
        w_{i}}-\frac{\partial
        \tilde{r}_{n:n}(t)}{\partial w_{j}}\le(\ge)0$.
    Thus, $\tilde{r}_{n:n}(t)$ is Schur-convex with
    respect to
    $\bm{w}\in\mathcal{E}_{+}(\mathcal{D}_{+})$, and the desired result readily follows from Theorem $A.8$ of
    \cite{Marshall2011}.
\end{proof}
The following theorem is an extension of Theorem \ref{th3.12}. Proof of the first part of the theorem follows from Theorem \ref{th3.3.}, Theorem \ref{th3.3}$(i)$ and Theorem \ref{th3.5}. The second part follows from Theorem \ref{th3.3.}, Theorem \ref{th3.3}$(ii)$ and Theorem \ref{th3.5}.

\begin{theorem}\label{th3.13}
    Let $\psi$ be a differentiable function. Further, let  $(A1)$ and $(A2)$ hold. Also, assume $\bm{\alpha}=\bm{\beta}=\bm{1}_n$ and $\bm{\theta},\bm{\lambda},\bm{\mu},\bm{\delta}, \bm{p}, \bm{q}\in \mathcal{E_+}(\mathcal{D_+})$.
    \begin{itemize}                                                                 \item[$(i)$]
    Suppose $(C2), (C5)$ and $(C9)$ hold. Then, $1/\bm{\theta}\succeq^{w}1/\bm{\delta}$, $\psi(\bm{p})\succeq_{w}\psi(\bm{q})$ and  $\bm{\lambda}\succeq_{w}\bm{\mu}$ imply $  U_{n:n}\ge_{rh}V_{n:n};$
        \item[$(ii)$] Suppose $(C2), (C3), (C6), (C7)$ and $(C9)$ hold. Then, $1/\bm{\theta}\succeq^{rm}1/\bm{\delta},$ $\psi(\bm{p})\succeq_{w}\psi(\bm{q})$ and  $\bm{\lambda}\succeq_{w}\bm{\mu}$ imply $ U_{n:n}\ge_{rh}V_{n:n}.$
    \end{itemize}
\end{theorem}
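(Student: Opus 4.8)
The plan is to prove the statement by a chaining argument resting on the transitivity of the reversed hazard rate order: I change one of the three heterogeneous parameter vectors at a time, and for each single change I invoke the corresponding one-parameter comparison already proved. I would carry out the argument for $\bm{\theta},\bm{\lambda},\bm{\mu},\bm{\delta},\bm{p},\bm{q}\in\mathcal{E_+}$; the $\mathcal{D_+}$ case is symmetric.

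First I would introduce two intermediate parallel systems interpolating between $U_{n:n}$ and $V_{n:n}$. Let $W^{(1)}_{n:n}$ be the largest claim associated with the parameter triple $(\bm{p},\bm{\delta},\bm{\lambda})$ and let $W^{(2)}_{n:n}$ be the largest claim associated with $(\bm{q},\bm{\delta},\bm{\lambda})$, both carrying the common shape vector $\bm{1}_n$. Because every parameter vector is assumed to lie in $\mathcal{E_+}$, the parameter configurations of $W^{(1)}_{n:n}$ and $W^{(2)}_{n:n}$ also lie in the ordered cones demanded by the theorems cited below, so those theorems apply without further adjustment.

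Next I would establish the three links of the chain for part $(i)$. The pair $(U_{n:n},W^{(1)}_{n:n})$ shares the probability vector $\bm{p}$ and the location vector $\bm{\lambda}$ and differs only in the scale, with $1/\bm{\theta}\succeq^{w}1/\bm{\delta}$; under $(C2)$ and $(C5)$, Theorem \ref{th3.3}$(i)$ yields $U_{n:n}\ge_{rh}W^{(1)}_{n:n}$. The pair $(W^{(1)}_{n:n},W^{(2)}_{n:n})$ shares the scale $\bm{\delta}$ and the location $\bm{\lambda}$ and differs only in the occurrence probabilities, with $\psi(\bm{p})\succeq_{w}\psi(\bm{q})$; since $\psi$ satisfies $(C9)$ and $(C2)$ holds, Theorem \ref{th3.5} gives $W^{(1)}_{n:n}\ge_{rh}W^{(2)}_{n:n}$. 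Finally the pair $(W^{(2)}_{n:n},V_{n:n})$ shares the probability $\bm{q}$ and the scale $\bm{\delta}$ and differs only in the location, with $\bm{\lambda}\succeq_{w}\bm{\mu}$, so Theorem \ref{th3.3.} delivers $W^{(2)}_{n:n}\ge_{rh}V_{n:n}$. Concatenating these three relations and using transitivity of $\ge_{rh}$ proves $U_{n:n}\ge_{rh}V_{n:n}$. Part $(ii)$ is identical except that the scale link is justified through Theorem \ref{th3.3}$(ii)$, under $1/\bm{\theta}\succeq^{rm}1/\bm{\delta}$ with $(C6)$ and $(C7)$ replacing $(C5)$.

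The point I would verify most carefully is that the swapping order is arranged so that at every step exactly one parameter vector changes while the other two coincide; only then do the equal-parameter hypotheses of Theorems \ref{th3.3.}, \ref{th3.3} and \ref{th3.5} hold at that step, and only then does the single majorization relation feeding that step transfer correctly. I would also confirm that each intermediate vector remains in $\mathcal{E_+}$ (respectively $\mathcal{D_+}$), which is immediate from the common-cone assumption placed on all six vectors. Beyond this bookkeeping, no fresh computation is required: the whole proof reduces to transitivity once the intermediate systems are correctly specified, which is precisely why the result is stated as a consequence of the three earlier theorems.
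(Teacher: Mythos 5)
Your proof is correct and is precisely the paper's argument made explicit: the paper establishes Theorem \ref{th3.13} by citing Theorems \ref{th3.3.}, \ref{th3.3} and \ref{th3.5}, and you have simply written out the transitivity chain through the two intermediate largest claims with parameter triples $(\bm{p},\bm{\delta},\bm{\lambda})$ and $(\bm{q},\bm{\delta},\bm{\lambda})$, which is exactly the intended decomposition. The one caveat, inherited from the paper rather than introduced by you, is that the location link invokes Theorem \ref{th3.3.}, whose statement also requires $(C3)$ and $(C4)$, conditions not listed among the hypotheses of part $(i)$ of Theorem \ref{th3.13}.
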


\begin{remark}
On using Theorem $2.3$ of \cite{zardasht2015test},  the usual stochastic ordering implies the incomplete cumulative residual entropy ordering. Further, the reversed hazard rate ordering implies the usual stochastic ordering. Therefore, the results obtained in this paper also compare two largest claim amounts arising from two heterogeneous portfolios of risks in the sense of the incomplete cumulative residual entropy ordering.
\end{remark}

\section{Applications\setcounter{equation}{0}}
In this section, we consider two special probability models and show the applicability of the established results. We take generalized linear failure rate and Marshal-Olkin extended quasi Lindley distributions. For these distributions, we present few corollaries. However, one can easily obtain similar applications for other established results.
\subsection{Generalized linear failure rate distribution}
The distribution function of the generalized linear failure rate distribution is given by
\begin{equation}\label{fp}
F(x)=\left[1-e^{-(ax+\frac{b}{2}x^2)}\right]^d,~ x\geq 0,~ a,~b,~d>0.
\end{equation}
 One can easily check that for $d=0.5, a=1,b=0,$ the hazard rate function of  \eqref{fp} satisfies all the conditions given in $(C1)-(C4)$. Here, we consider generalized linear failure rate distribution as the baseline distribution function.
\begin{corollary}\label{cor6.1}
   Under the assumptions of Theorem \ref{th3.1} $(i)$,  $1/\bm{\theta}\succeq^{p}1/\bm{\delta}\Rightarrow   U_{n:n}\ge_{st}V_{n:n}.$
    \end{corollary}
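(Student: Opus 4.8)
The plan is to treat this corollary as a concrete instantiation of Theorem \ref{th3.1}$(i)$: all structural and parameter hypotheses ($(A1)$, $(A2)$, $\alpha_i=\beta_i=\alpha\le 1$, $p_i=q_i$, $\lambda_i=\mu_i$, and the ordered-vector requirement $\bm{p},\bm{\theta},\bm{\delta},\bm{\lambda}\in\mathcal{E_+}(\mathcal{D_+})$) are inherited verbatim through the phrase ``under the assumptions of Theorem \ref{th3.1}$(i)$,'' so the entire burden of proof reduces to verifying that the chosen baseline $F$ satisfies the analytic condition $(C2)$, namely that $xr(x)$ is decreasing.

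Concretely, I would fix $F$ to be the generalized linear failure rate distribution \eqref{fp} with the special parameter choice $d=0.5,~a=1,~b=0$, giving $F(x)=(1-e^{-x})^{1/2}$ for $x\ge 0$. I would then record its hazard rate $r(x)=f(x)/\bar{F}(x)$ with $f(x)=\tfrac12 e^{-x}(1-e^{-x})^{-1/2}$ and $\bar{F}(x)=1-(1-e^{-x})^{1/2}$, and verify that $xr(x)$ is decreasing on $(0,\infty)$. This is exactly the claim already asserted in the text preceding the corollary (that this parameter choice makes $r$ satisfy $(C1)$--$(C4)$), so I would either cite that stated fact directly or, for self-containedness, differentiate $xr(x)$ and check that $[xr(x)]'\le 0$. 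The sole obstacle is this monotonicity check, since the explicit derivative is algebraically cumbersome; I expect to dispatch it either by invoking the preceding assertion or by a short sign analysis of the numerator of $[xr(x)]'$.

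Having confirmed $(C2)$ for this baseline, every hypothesis of Theorem \ref{th3.1}$(i)$ is in force, so that theorem applies immediately: $1/\bm{\theta}\succeq^{p}1/\bm{\delta}$ yields $U_{n:n}\ge_{st}V_{n:n}$, which is precisely the assertion of the corollary.
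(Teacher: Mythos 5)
Your overall strategy is exactly the paper's: Corollary \ref{cor6.1} is presented there with no proof at all beyond the assertion, made just before Subsection 5.1's corollaries, that for $d=0.5$, $a=1$, $b=0$ the hazard rate of \eqref{fp} satisfies $(C1)$--$(C4)$, after which Theorem \ref{th3.1}$(i)$ is invoked. So structurally you and the paper agree that the only content is the verification of $(C2)$ for this baseline.

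The problem is that the step you defer --- the ``short sign analysis of the numerator of $[xr(x)]'$'' --- does not go through. For $F(x)=(1-e^{-x})^{1/2}$ one computes, writing $v=(1-e^{-x})^{1/2}$ so that $e^{-x}=1-v^2$,
\begin{equation*}
r(x)=\frac{\tfrac12 (1-v^2)/v}{1-v}=\frac{1+v}{2v}=\frac12\left(1+(1-e^{-x})^{-1/2}\right),
\end{equation*}
which is decreasing (so $(C1)$ holds) but tends to $1$ as $x\to\infty$; hence $xr(x)\to\infty$ and $(C2)$ cannot hold globally. In fact $xr(x)=\tfrac{x}{2}+\tfrac{x}{2}(1-e^{-x})^{-1/2}$ is strictly increasing on all of $(0,\infty)$: the first summand obviously is, and the derivative of the second is a positive multiple of $1-e^{-x}-\tfrac{x}{2}e^{-x}$, which vanishes at $0$ and has positive derivative $\tfrac{1}{2}(1+x)e^{-x}$. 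So the monotonicity you identify as the sole remaining obligation is false, and citing the paper's preceding assertion merely imports the same error. To rescue the corollary you would need a different parameter choice (or a different baseline) for which $r(x)$ decays at least like $1/x$, since any hazard rate with a positive limit at infinity forces $xr(x)$ to be eventually increasing; as written, neither your argument nor the paper's establishes the hypothesis of Theorem \ref{th3.1}$(i)$ for this model.
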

\begin{corollary}\label{cor6.2}
    Let the assumptions of Theorem \ref{th3.3.} hold. Then, $\bm{\lambda}\succeq_{w}\bm{\mu}\Rightarrow   U_{n:n}\ge_{rh}V_{n:n}.$
\end{corollary}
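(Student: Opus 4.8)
The plan is to obtain Corollary~\ref{cor6.2} as a direct instance of Theorem~\ref{th3.3.}, specialized to the generalized linear failure rate baseline \eqref{fp} with $d=0.5$, $a=1$, $b=0$. Every structural hypothesis of Theorem~\ref{th3.3.} --- namely $(A1)$, $(A2)$, $\bm{\alpha}=\bm{\beta}=\bm{1}_n$, $\bm{p}=\bm{q}$, $\bm{\theta}=\bm{\delta}$, the membership $\bm{\lambda},\bm{\theta},\bm{\mu},\bm{p}\in\mathcal{E}_{+}(\mathcal{D}_{+})$, and the majorization premise $\bm{\lambda}\succeq_{w}\bm{\mu}$ --- is carried over verbatim into the corollary's hypotheses. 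Consequently the only thing that must be checked afresh is that the three analytic conditions $(C2)$, $(C3)$, $(C4)$ on the baseline hazard rate $r(\cdot)$ hold for this particular choice of $F$; once they do, Theorem~\ref{th3.3.} applies immediately.

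First I would put $r$ in closed form, since this is what makes the conditions checkable. With the stated parameters \eqref{fp} reduces to $F(x)=\sqrt{1-e^{-x}}$, whence $f(x)=\tfrac12 e^{-x}(1-e^{-x})^{-1/2}$ and $\bar F(x)=1-\sqrt{1-e^{-x}}$. Rationalizing (multiply by $1+\sqrt{1-e^{-x}}$ and use $(1-\sqrt{1-e^{-x}})(1+\sqrt{1-e^{-x}})=e^{-x}$) collapses the quotient to
\begin{equation*}
r(x)=\frac{1}{2}+\frac{1}{2}\bigl(1-e^{-x}\bigr)^{-1/2},
\qquad
r'(x)=-\frac{1}{4}\,e^{-x}\bigl(1-e^{-x}\bigr)^{-3/2}.
\end{equation*}
With the substitution $u=1-e^{-x}\in(0,1)$, the quantities $xr(x)$, $x^{2}r(x)$ and $r'(x)/r(x)$ governing $(C2)$, $(C3)$, $(C4)$ all become elementary functions of $u$, and I would verify each monotonicity claim by differentiating and tracking the sign of the resulting expression over $(0,1)$. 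These are precisely the properties the applications preamble already records for $d=0.5,a=1,b=0$, so they may also simply be invoked from that statement.

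Having confirmed $(C2)$--$(C4)$, all hypotheses of Theorem~\ref{th3.3.} are in force for this baseline, and the theorem delivers $\bm{\lambda}\succeq_{w}\bm{\mu}\Rightarrow U_{n:n}\ge_{rh}V_{n:n}$, which is exactly the assertion of the corollary. I expect the middle step to be the only real obstacle: the derivatives of $xr(x)$ and $x^{2}r(x)$ are not sign-definite term by term, so one cannot read off monotonicity directly but must combine the terms over a common denominator and argue the sign of the numerator as a function of $u$. This sign analysis, rather than the invocation of Theorem~\ref{th3.3.}, is where the genuine work lies.
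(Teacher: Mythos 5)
Your overall route coincides with the paper's: the paper gives no separate proof of Corollary~\ref{cor6.2} beyond instantiating Theorem~\ref{th3.3.} with the generalized linear failure rate baseline and appealing to the preamble's claim that $(C1)$--$(C4)$ hold for $d=0.5$, $a=1$, $b=0$, and your closed forms $r(x)=\frac12+\frac12(1-e^{-x})^{-1/2}$, $r'(x)=-\frac14 e^{-x}(1-e^{-x})^{-3/2}$ are correct. But the step you defer --- ``verify each monotonicity claim by differentiating and tracking the sign'' --- cannot be completed, because $(C2)$ and $(C3)$ are \emph{false} for this baseline. No delicate sign analysis is needed to see this: $r$ is decreasing with $r(x)\ge \lim_{x\to\infty}r(x)=1$, so $xr(x)\ge x\to\infty$ and $x^{2}r(x)\ge x^{2}\to\infty$, and neither function can be decreasing. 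Explicitly,
\begin{equation*}
\frac{d}{dx}\bigl[xr(x)\bigr]=\frac12+\frac14\bigl(1-e^{-x}\bigr)^{-3/2}\bigl[2-e^{-x}(x+2)\bigr]>0,
\end{equation*}
since $2-e^{-x}(x+2)$ vanishes at $x=0$ and has derivative $e^{-x}(x+1)>0$; hence $xr(x)$ is strictly increasing on $(0,\infty)$, and $x^{2}r(x)=x\cdot\bigl[xr(x)\bigr]$ is a product of positive increasing functions. (Conditions $(C1)$ and $(C4)$ do hold: $r'<0$, and with $u=1-e^{-x}$ one gets $r'/r=-\frac{1-u}{2u(1+\sqrt{u})}$, which is increasing in $u$ and hence in $x$.) Since Theorem~\ref{th3.3.} requires $(C2)$, $(C3)$ and $(C4)$ jointly, its hypotheses are simply not met here.

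Consequently your fallback --- that the conditions ``may also simply be invoked'' from the applications preamble --- does not repair the argument: the preamble's assertion that $(C1)$--$(C4)$ hold for $d=0.5$, $a=1$, $b=0$ is itself erroneous, and that assertion is the paper's entire justification for the corollary, so there is no correct proof in the paper to fall back on. The obstruction is structural rather than computational: $(C2)$ forces $r(x)=O(1/x)$ as $x\to\infty$, whereas every generalized linear failure rate hazard tends to the positive constant $a$ (or to infinity when $b>0$), so no choice of the parameters in this family can satisfy $(C2)$, let alone $(C3)$. A salvageable version of the corollary would require a genuinely heavy-tailed baseline whose hazard decays at least like $1/x$; as written, your proposal and the paper's stated justification fail at exactly the step you correctly identified as ``where the genuine work lies.''
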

Next, we consider an example to illustrate Corollary \ref{cor6.2}.
\begin{example}\label{ex3.4}
Consider $\{X_{1},X_{2},X_{3}\}$ and
    $\{Y_{1},Y_{2},Y_{3}\}$ to be the collections of
    independent random variables such that $X_{i}\sim
    F^{\alpha}(\frac{x-\lambda_{i}}{\theta_{i}})$ and $Y_{i}\sim
    F^{\alpha}(\frac{x-\mu_{i}}{\theta_{i}})$, for $i=1,2,3.$  Also, suppose that $\{J_{1},J_2,J_{3}\}$  is a set
    of independent Bernoulli random variables,
    independent of $X_{i}$'s  with $E(J_{i})=p_{i}$ and  $\{J^*_{1},J^*_2,J^*_{3}\}$  is another set
    of independent Bernoulli random variables,
    independent of $Y_{i}$'s  with $E(J^*_{i})=q_{i}$, for $i=1,2,3.$ Set $\bm{\lambda}=(1,2.5,5)$,
    $\bm{\mu}=(0.5,2,3),
    \bm{\theta}=(2,5,9)$, $\bm{p}=\bm{q}=(0.2,0.8,0.9),~\alpha=1.$ Clearly,
    $\bm{\lambda},\bm{\mu}\in\mathcal{E}_{+}$ and $\bm{\lambda}\succeq_{w}\bm{\mu}$.
 Also $(C2)-(C4)$ hold, for $d=0.5,a=1,b=0$.
    Thus, as an application of Corollary \ref{cor6.2}, we have
    $U_{3:3}\ge_{rh}V_{3:3}$. The graphs of $\tilde{r}_{3:3}(t)$ and $\tilde{s}_{3:3}(t)$ are given in Figure $2(b),$ that varifies Corollary \ref{cor6.3}.
\end{example}
\subsection{ Marshall-Olkin extended quasi Lindley distribution}
Let us consider the distribution function of the
Marshall-Olkin extended quasi Lindley
distribution as
\begin{equation}\label{hn}
    F(x)=\frac{1-\left(\frac{b+1+d x}{b+1}e^{-d x}\right)}{1-(1-a)\left(\frac{d(b+1+d x)}{b+1}e^{-d x}\right)}, ~x>0,~a,d>0,~ b>-1.
\end{equation}
For $a=0.1, b=-0.9$ and  $d=0.8,$ the hazard rate
function of the Marshall-Olkin extended quasi
Lindley distribution satisfies $(C1)-(C8)$.
Below, we present some corollaries, which  are
the direct consequences of Theorem \ref{th4.1}
$(i)$ and Theorem \ref{th3.13}$(i)$. We consider
the Marshall-Olkin extended quasi Lindley
distribution to be the baseline distribution
function.
\begin{corollary}\label{cor6.3}
Suppose the assumptions of Theorem \ref{th4.1} $(i)$ hold. Again, let $\psi(p)=p^2$. Then, we have  \begin{equation*}
    \left(\begin{smallmatrix} \psi(p_{1}) & \psi(p_{2}) \\ \lambda_{1} &
    \lambda_{2}
    \end{smallmatrix}\right) \gg \left(\begin{smallmatrix}
    \psi(q_{1}) & \psi(q_{2}) \\ \mu_{1} &
    \mu_{2}
    \end{smallmatrix}\right) \Rightarrow U_{2:2} \geq_{st} V_{2:2}.
    \end{equation*}
\end{corollary}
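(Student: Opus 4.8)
The plan is to observe that this corollary is nothing more than a direct specialization of Theorem \ref{th4.1}$(i)$ to the Marshall-Olkin extended quasi Lindley baseline given in \eqref{hn}, together with the explicit choice $\psi(p)=p^2$. Consequently, the whole task reduces to checking that the two structural hypotheses of that theorem which are not automatically inherited — namely condition $(C9)$ on $\psi$ and condition $(C1)$ on the baseline hazard rate — are genuinely satisfied, after which the stated implication follows word for word.

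First I would verify that $\psi(p)=p^2$ obeys $(C9)$ on $(0,1)$. Differentiating twice gives $\psi'(p)=2p>0$ and $\psi''(p)=2>0$ for every $p\in(0,1)$, so $\psi$ is simultaneously increasing and convex; hence $(C9)$ holds. This is the only genuine computation the proof requires.

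Next I would invoke the distribution-specific fact already recorded in this section: for the parameter values $a=0.1$, $b=-0.9$, $d=0.8$, the hazard rate $r(\cdot)$ of the distribution in \eqref{hn} is decreasing, which is exactly $(C1)$. The remaining requirements of Theorem \ref{th4.1}$(i)$ — assumptions $(A1)$ and $(A2)$, the common shape vector $\bm{\alpha}=\bm{\beta}=\alpha\bm{1}_2$ with $\alpha\le 1$, the common scale $\bm{\theta}=\bm{\delta}=\theta\bm{1}_2$, and the membership $(\bm{\psi}(\bm{p}),\bm{\lambda};2)\in M_2$ — are all carried over by the hypothesis ``under the assumptions of Theorem \ref{th4.1}$(i)$''. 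With every hypothesis in place, Theorem \ref{th4.1}$(i)$ applies directly and yields that $(\bm{\psi}(\bm{p}),\bm{\lambda};2)\gg(\bm{\psi}(\bm{q}),\bm{\mu};2)$ implies $U_{2:2}\ge_{st}V_{2:2}$, which is precisely the claimed statement. There is no real obstacle here, since the only nontrivial analytic input — the monotonicity of the baseline hazard rate — is supplied by the parameter-specific claim asserted at the start of this subsection rather than being re-derived.
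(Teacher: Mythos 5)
Your proposal is correct and matches the paper's intent exactly: the paper states Corollary \ref{cor6.3} as a direct consequence of Theorem \ref{th4.1}$(i)$, relying on the asserted fact that the Marshall--Olkin extended quasi Lindley hazard rate satisfies $(C1)$ for $a=0.1$, $b=-0.9$, $d=0.8$, and the only verification needed is that $\psi(p)=p^2$ is increasing and convex on $(0,1)$, i.e.\ satisfies $(C9)$. Your computation $\psi'(p)=2p>0$, $\psi''(p)=2>0$ supplies precisely that check, so nothing is missing.
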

\begin{corollary}\label{cor6.4}
    Let us assume $\psi(p)=e^p$. If all the assumptions of Theorem  \ref{th3.13} $(i)$ hold, then $1/\bm{\theta}\succeq^{w}1/\bm{\delta}$, $\psi(\bm{p})\succeq_{w}\psi(\bm{q})$ and  $\bm{\lambda}\succeq_{w}\bm{\mu}$ imply $  U_{n:n}\ge_{rh}V_{n:n}.$

\end{corollary}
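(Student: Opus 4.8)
The plan is to treat this as a direct specialization of Theorem \ref{th3.13}$(i)$: the general theorem already packages all the analytic work, so the only genuinely new content here is to confirm that the two problem-specific hypotheses — the shape of $\psi$ and the hazard-rate conditions on the chosen baseline — are actually met under the stated parametrization. Once these are in place, the hypotheses $1/\bm{\theta}\succeq^{w}1/\bm{\delta}$, $\psi(\bm{p})\succeq_{w}\psi(\bm{q})$ and $\bm{\lambda}\succeq_{w}\bm{\mu}$, together with $\bm{\alpha}=\bm{\beta}=\bm{1}_n$ and the ordering-cone assumptions $\bm{\theta},\bm{\lambda},\bm{\mu},\bm{\delta},\bm{p},\bm{q}\in\mathcal{E_+}(\mathcal{D_+})$ imported wholesale from Theorem \ref{th3.13}$(i)$, deliver $U_{n:n}\ge_{rh}V_{n:n}$ with no further argument.

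The first step is to verify condition $(C9)$ for the choice $\psi(p)=e^{p}$. Since $\psi'(p)=e^{p}>0$ and $\psi''(p)=e^{p}>0$ on $(0,1)$, the function $\psi$ is both increasing and convex, so $(C9)$ holds. This is immediate and requires no appeal to the baseline distribution; I would state it in one line.

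The second, and only substantive, step is to confirm that the baseline hazard rate $r(x)=f(x)/(1-F(x))$ of the Marshall--Olkin extended quasi Lindley distribution in $\eqref{hn}$ satisfies $(C2)$ (i.e.\ $xr(x)$ decreasing) and $(C5)$ (i.e.\ $xr(x)$ convex) at the prescribed parameter values $a=0.1$, $b=-0.9$, $d=0.8$. The paper has already asserted that $(C1)$--$(C8)$ hold for precisely these values, so strictly speaking I may cite that claim; to make the corollary self-contained, however, I would differentiate $F$ in $\eqref{hn}$ to obtain $f$, form $xr(x)$ explicitly, and examine the sign of $\frac{d}{dx}[xr(x)]$ for $(C2)$ and of $\frac{d^{2}}{dx^{2}}[xr(x)]$ for $(C5)$. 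I expect this monotonicity-and-convexity check to be the main obstacle: the rational-exponential form of $F$ makes the closed-form expressions for these derivatives unwieldy, so the verification is most cleanly carried out numerically or graphically over $x>0$ rather than by exact algebraic sign analysis.

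With $(C2)$, $(C5)$ and $(C9)$ confirmed and all remaining hypotheses inherited verbatim from the statement of Theorem \ref{th3.13}$(i)$, the conclusion follows by a single invocation of that theorem. I would close the argument with exactly that citation, emphasizing that no part of the proof of Theorem \ref{th3.13}$(i)$ need be reproduced, since the corollary merely instantiates its premises.
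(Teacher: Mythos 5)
Your proposal is correct and matches the paper's treatment: the paper states this corollary as a direct consequence of Theorem \ref{th3.13}$(i)$, relying on the asserted fact that the Marshall--Olkin extended quasi Lindley hazard rate with $a=0.1$, $b=-0.9$, $d=0.8$ satisfies $(C1)$--$(C8)$ and on the immediate observation that $\psi(p)=e^{p}$ is increasing and convex, hence satisfies $(C9)$. Your added suggestion to verify $(C2)$ and $(C5)$ explicitly (numerically or graphically) is more careful than the paper, which simply cites the blanket claim, but it is the same route.
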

The example given below,  illustrates Corollary  \ref{cor6.3}.
\begin{example}\label{ex5.2}
Let $\{X_{1},X_{2}\}$ and
    $\{Y_{1},Y_{2}\}$ be two collections of
    independent random variables such that $X_{i}\sim
    F^{\alpha}(\frac{x-\lambda_{i}}{\theta})$ and $Y_{i}\sim
    F^{\alpha}(\frac{x-\mu_{i}}{\theta})$, for $i=1,2.$ Set the baseline distribution function as  Marshall-Olkin extended quasi Lindley distribution. Also, let $\{J_{1},J_2\}$  be a set
    of independent Bernoulli random variables,
    independent of $X_{i}$'s  with $E(J_{i})=p_{i}$ and  $\{J^*_{1},J^*_2\}$  be another set
    of independent Bernoulli random variables,
    independent of $Y_{i}$'s  with $E(J^*_{i})=q_{i}$, where $i=1,2.$ Consider $\bm{\lambda}=(5,6.1)$,
    $\bm{\mu}=(5.44,5.66)$,
    $\theta=0.01$, $\alpha=0.52,$ $\psi(\bm{p})=(0.2,0.5)$,
    $\psi(\bm{q})=(0.32,0.38)$. Here, $\psi(p)=p^2,$ which is increasing and convex with respect to $p$. Let $T_{0.6}=\left( \begin{smallmatrix} 0.6 & 0.4\\ 0.4 & 0.6\\
\end{smallmatrix} \right)$. Then, $\left(\begin{smallmatrix} \psi(p_{1}) & \psi(p_{2}) \\ \lambda_{1} &
    \lambda_{2}
\end{smallmatrix}\right) \gg \left(\begin{smallmatrix}
    \psi(q_{1}) & \psi(q_{2}) \\ \mu_{1} &
    \mu_{2}
\end{smallmatrix}\right)$.
    Thus, as an application of Corollary \ref{cor6.3},
    we have
    $ U_{2:2} \geq_{st} V_{2:2}$. The graph of $F_{2:2}(t)-G_{2:2}(t)$ is given in Figure $3(a),$ that varifies Corollary \ref{cor6.3}.
    \begin{figure}[h]
        \begin{center}
            \subfigure[]{\label{c3}\includegraphics[height=3.41in]{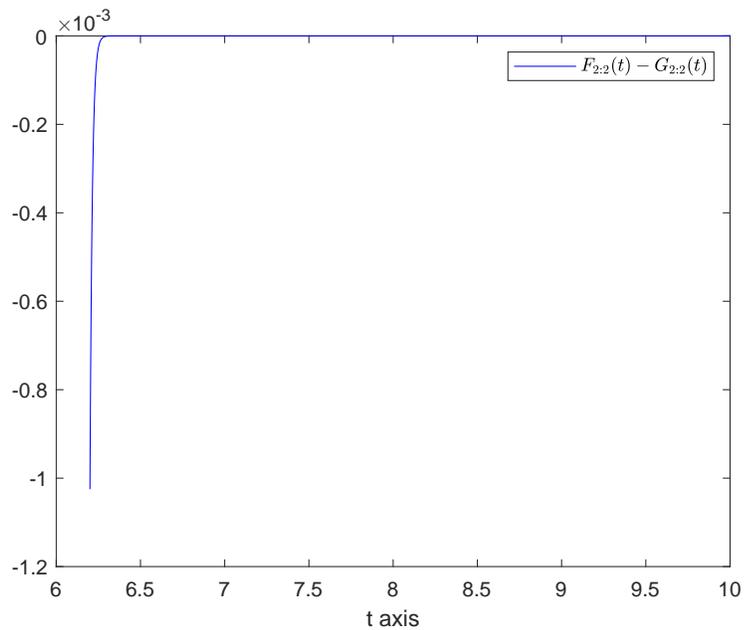}}
            \caption{(a) Graph of $F_{2:2}(t)-G_{2:2}(t)$ for  Example \ref{ex5.2}.}
        \end{center}
    \end{figure}
    \end{example}

\section{Concluding remarks\setcounter{equation}{0}}

Let us have two insurance portfolios of $n$
individual risks. Assume that the portfolios are
heterogeneous. The problem of comparison of the
smallest and largest claim amounts arising from
these portfolios of risks with respect to some
well known stochastic orders is of recent
interest from both theoretical and practical
points of view. Here, under different conditions,
we established stochastic comparisons between the
largest claims in the sense of the usual
stochastic and reversed hazard rate orderings.
Both these orders are useful tools to a decision
maker to choose better one among several risks.
For example, for two risks $X$ and $Y$, if
$X\le_{st}Y,$ then a person will choose $X$ over
$Y$. Again, for the case of the reversed hazard
rate ordering, a person should prefer a bond,
which has smaller reversed hazard rate. The
results have been developed using the concepts of
the vector majorization and related orders, and
the multivariate chain majorization order.
Finally, the established results have been
applied to two baseline distribution functions
for explanation purpose.
\\
\\
{\large\bf Acknowledgements:}  Sangita Das, thanks MHRD, Government of India for financial support. Suchandan Kayal acknowledges the financial support for this research work under a grant numbered MTR/2018/000350, SERB, India.

\bibliography{ref}

@article{dasordering,
  title={ORDERING RESULTS ON EXTREMES OF EXPONENTIATED LOCATION-SCALE MODELS},
  author={Das, Sangita and Kayal, Suchandan and Choudhuri, Debajyoti},
  journal={Probability in the Engineering and Informational Sciences},
  pages={1--24, doi: https://doi.org/10.1017/S0269964819000408 },
  year={2019}
  publisher={Cambridge University Press},
doi={ https://doi.org/10.1017/S0269964819000408}
}

@article{das2019ordering,
  title={Ordering extremes of exponentiated location-scale models with dependent and heterogeneous random samples},
  author={Das, Sangita and Kayal, Suchandan},
  journal={Metrika},
  pages={1--25, doi: https://doi.org/10.1007/s00184-019-00753-2},
  year={2019},
  publisher={Springer},
doi={ https://doi.org/10.1007/s00184-019-00753-2}
}

@article{das2019orderingmo,
  title={Some Ordering Results for the Marshall and Olkin’s Family of Distributions},
  author={Das, Sangita and Kayal, Suchandan},
  journal={Communications in Mathematics and Statistics},
  pages={1--27, doi: https://doi.org/10.1007/s40304-019-00191-6},
  year={2019},
  publisher={Springer},
doi={https://doi.org/10.1007/s40304-019-00191-6}
}

@article{pledger1971comparison,
	title={Comparisons of order statistics and of spacings from  heterogeneous distributions, In Optimizing methods in statistics J.S. Rustagi (ed.), {N}ew {Y}ork: Academic { P}ress},
	author={Pledger, Gordon and Proschan, Frank},
	pages={89--113},
	year={1971},
	publisher={Elsevier}
}

@article{balakrishnan2015stochastic,
  title={Stochastic comparisons of series and parallel systems with generalized exponential components},
  author={Balakrishnan, Narayanaswamy and Haidari, Abedin and Masoumifard, Khaled},
  journal={IEEE Transactions on Reliability},
  volume={64},
  number={1},
  pages={333--348},
  year={2015},
  publisher={IEEE}
}

@article{balakrishnan2018ordering,
  title={Ordering the largest claim amounts and ranges from two sets of heterogeneous portfolios},
  author={Balakrishnan, Narayanaswamy and Zhang, Yiying and Zhao, Peng},
  journal={Scandinavian Actuarial Journal},
  volume={2018},
  number={1},
  pages={23--41},
  year={2018},
  publisher={Taylor \& Francis}
}

@article{barmalzan2015convex,
  title={On the convex transform and right-spread orders of smallest claim amounts},
  author={Barmalzan, Ghobad and Najafabadi, Amir T Payandeh},
  journal={Insurance: Mathematics and Economics},
  volume={64},
  pages={380--384},
  year={2015},
  publisher={Elsevier}
}

@article{barmalzan2016likelihood,
  title={Likelihood ratio and dispersive orders for smallest order statistics and smallest claim amounts from heterogeneous Weibull sample},
  author={Barmalzan, Ghobad and Najafabadi, Amir T Payandeh and Balakrishnan, Narayanaswamy},
  journal={Statistics \& Probability Letters},
  volume={110},
  pages={1--7},
  year={2016},
  publisher={Elsevier}
}

@article{barmalzan2017ordering,
  title={Ordering properties of the smallest and largest claim amounts in a general scale model},
  author={Barmalzan, Ghobad and Payandeh Najafabadi, Amir T and Balakrishnan, Narayanaswamy},
  journal={Scandinavian Actuarial Journal},
  volume={2017},
  number={2},
  pages={105--124},
  year={2017},
  publisher={Taylor \& Francis}
}

@article{hazra2017stochastic,
  title={On stochastic comparisons of maximum order statistics from the location-scale family of distributions},
  author={Hazra, Nil Kamal and Kuiti, Mithu Rani and Finkelstein, Maxim and Nanda, Asok K},
  journal={Journal of Multivariate Analysis},
  volume={160},
  pages={31--41},
  year={2017},
  publisher={Elsevier}
}

@article{khaledi2002dispersive,
  title={Dispersive ordering among linear combinations of uniform random variables},
  author={Khaledi, Baha Eldin and Kochar, Subhash C},
  journal={Journal of Statistical Planning and Inference},
  volume={100},
  number={1},
  pages={13--21},
  year={2002},
  publisher={Elsevier}
}

@article{kundu2016some,
  title={Some results on majorization and their applications},
  author={Kundu, Amarjit and Chowdhury, Shovan and Nanda, Asok K and Hazra, Nil Kamal},
  journal={Journal of Computational and Applied Mathematics},
  volume={301},
  pages={161--177},
  year={2016},
  publisher={Elsevier}
}

@book{Marshall2011,
    author= "A. W. Marshall and I. Olkin and B. C. Arnold",
    title="Inequality: Theory of Majorization and its {A}pplications",
    publisher = "Springer Series in Statistics, New York ",
    year = "2011"
}

@book{shaked2007stochastic,
  title={Stochastic orders},
  author={Shaked, Moshe and Shanthikumar, J George},
  year={2007},
  publisher={Springer Science \& Business Media}
}

@article{zhang2019ordering,
  title={Ordering Properties Of Extreme Claim Amounts From Heterogeneous Portfolios},
  author={Zhang, Yiying and Cai, Xiong and Zhao, Peng},
  journal={ASTIN Bulletin: The Journal of the IAA},
  volume={49},
  number={2},
  pages={525--554},
  year={2019},
  publisher={Cambridge University Press}
}

@article{zardasht2015test,
  title={A test for the increasing convex order based on the cumulative residual entropy},
  author={Zardasht, V},
  journal={Journal of the Korean Statistical Society},
  volume={44},
  number={4},
  pages={491--497},
  year={2015},
  publisher={Elsevier}
}

\end{document}